\theoremstyle{plain}
\newtheorem{lem}{Lemma}[section]
\newtheorem{prop}[lem]{Proposition}
\newtheorem{thm}[lem]{Theorem}
\newtheorem{conj}[lem]{Conjecture}
\theoremstyle{definition}
\newtheorem{defn}[lem]{Definition}
\newtheorem{notn}[lem]{Notation}
\theoremstyle{remark}
\newtheorem{rmk}[lem]{Remark}
\newtheorem{expl}[lem]{Example}
\newcommand{\N}{\mathbb{N}}
\newcommand{\NN}{\mathbb N}
\newcommand{\Q}{\mathbb{Q}}
\newcommand{\PP}{\mathcal{P}}
\newcommand{\QQ}{\mathbb{Q}}
\newcommand{\RR}{\mathbb{R}}
\newcommand{\val}{\mathrm{val}}
\newcommand{\TestUAGB}{\mathtt{TestUAGB}}
\DeclareMathOperator{\LT}{LT}
\DeclareMathOperator{\LC}{LC}
\DeclareMathOperator{\LM}{LM}
\DeclareMathOperator{\LTE}{LTE}
\DeclareMathOperator{\Ecart}{\textup{\textrm{Écart}}}
\DeclareMathOperator{\Supp}{Supp}
\DeclareMathOperator{\spoly}{S-Poly}
\DeclareMathOperator{\init}{in}
\newcommand{\Terms}{\mathrm{Terms}}
\newcommand{\X}{\mathbf{X}}
\renewcommand{\i}{\mathbf{i}}
\renewcommand{\j}{\mathbf{j}}
\renewcommand{\r}{\mathbf{r}}
\newcommand{\s}{\mathbf{s}}
\renewcommand{\u}{\mathbf{u}}
\newcommand{\w}{\mathbf{w}}
\DeclareMathOperator{\Vtrop}{\mathrm{trop}}
\newcommand{\trop}{\Vtrop}
\newcommand{\ifnonempty}[3]{%
  \def\tempa{}%
  \def\tempb{#1}%
  \ifx\tempa\tempb 
  #3 
  \else            
  #2
  \fi}
\newcommand{\KX}[1][]{K \{ \X \ifnonempty{#1}{; #1}{} \}}
\newcommand{\Ts}{\mathscr T}
\newcommand{\Ns}{\mathscr N}
\newcommand{\KTX}{K[\![T]\!][\X]}
\begin{document}

\fancyhead{}

\title{Universal Analytic Gröbner Bases and Tropical Geometry}

%
%
\author{Tristan Vaccon}
\affiliation{Universit\'e de Limoges;
 \institution{CNRS, XLIM UMR 7252}
 \city{Limoges}
 \country{France}  
 \postcode{87060}  
}
\email{tristan.vaccon@unilim.fr}
\author{Thibaut Verron}
\affiliation{Johannes Kepler University, 
 \institution{Institute for Algebra}
 \city{Linz}
 \country{Austria}  
}
\email{thibaut.verron@jku.at}

\thanks{
T.~Verron was supported by the Austrian FWF grant P34872.}

\begin{abstract}
  A universal analytic Gröbner basis (UAGB) of an ideal of a Tate algebra is a set containing a local Gröbner basis for all suitable convergence radii.
  In a previous article, the authors proved the existence of finite UAGB's for polynomial ideals, leaving open the question of how to compute them.
  In this paper, we provide an algorithm computing a UAGB for a given polynomial ideal, by traversing the Gröbner fan of the ideal.
  As an application, it offers a new point of view on algorithms for computing tropical varieties of homogeneous polynomial ideals, which typically rely on lifting the computations to an algebra of power series.

  Motivated by effective computations in tropical analytic geometry, we also examine local bases for more general convergence conditions, constraining the radii to a convex polyhedron.
  In this setting, we provide an algorithm to compute local Gröbner bases
  and discuss obstacles towards proving the existence of finite UAGBs.
\end{abstract}

\begin{CCSXML}
  <ccs2012>
  <concept>
  <concept_id>10010147.10010148.10010149.10010150</concept_id>
  <concept_desc>Computing methodologies~Algebraic algorithms</concept_desc>
  <concept_significance>500</concept_significance>
  </concept>
  </ccs2012>
\end{CCSXML}

\ccsdesc[500]{Computing methodologies~Algebraic algorithms}



\keywords{Algorithms, Gröbner bases, Tate algebra, Tropical Geometry,
Universal Gröbner basis}

\maketitle

\clubpenalty=1000
\widowpenalty = 1000
\linepenalty=1000
\addtolength{\textfloatsep}{-0.45cm} 

\section{Introduction}
\label{sec:introduction}

The notion of Tate algebras has been introduced
by Tate in \cite{Tate}
to develop analytic geometry over the $p$-adics, founding 
what is now
called rigid geometry. 
This theory has proved to be central to many 
developments in number theory.
In this context, Tate algebras and ideals in Tate algebras serve the same purpose as polynomial
algebras and polynomial ideals in classical algebraic geometry.
Tate algebras are defined as algebras of power series
over a complete discrete valuation field with convergence
conditions such as converging on a given ball
or a polydisk with given radii.

In previous works \cite{CVV,CVV2,CVV3}, the authors showed
that it is possible to define and compute Gröbner bases (GB)
of ideals in Tate algebras, and modern algorithms
for Gröbner bases computations like signature-based algorithms
\cite{CVV2} and FGLM \cite{CVV3} can be adapted to this
setting.

In \cite{CVV3,CVV4}, the authors paved the way
for computations in Tate algebras in case of
overconvergence, \textit{e.g.} ideals
defined by series converging on a bigger polydisk.
Motivated by the application of analytic geometry
in algebraic geometry, an extreme example of this phenomenon is
that of ideals defined by polynomials in a Tate algebra.
In \cite{CVV4}, it was proved that it is possible
to compute a GB of a polynomial ideal 
in a Tate algebra that
is made of polynomials. 
It was also proved that
for any polynomial ideal, 
there exists a universal analytic Gröbner basis
(UAGB), \textit{i.e.} a finite list of polynomials
such that whenever they are seen as converging
power series in a Tate algebra, they form
a Gröbner basis of the corresponding ideal
in this algebra.
Such a UAGB of a polynomial ideal then contains abundant information
on the local behavior of the ideal.
In this paper, we prove that a UAGB is also universal irrespectively of the order used as tie-break in the algebra.
Furthermore, we provide an algorithm to compute a UAGB in finite time (Algorithm~\ref{algo:UAGB_computation}, Theorem~\ref{theo:proof_of_UAGB_algorithm}).

From a universal GB, it is natural
to consider computing the tropical
variety of an ideal.
Over a field with valuation $K$
such as $\mathbb{Q}_p$ or $ \mathbb{F}_p (\!(t)\!)$,
the tropical variety $\trop(V)$ of a variety $V$ defined
by an ideal $I$ can be defined 
as the closure of the image of $V$ by the valuation,
or alternatively using conditions
on leading terms of the elements of $I.$
Acting as a combinatorial shadow of $V$,
many information on $V$ can be recovered from $\trop(V)$.
The developments of tropical geometry have been
plentiful. To only name a few: 
enumerative geometry \cite{Mikh2005},
understanding optimization algorithms
\cite{ABGJ} or 
analyzing artificial neural networks
of the ReLU type.

Universal GB can help in the computation of 
$\trop(V)$ using the second definition
of the tropical variety.
In our context, working with Tate algebras 
instead of polynomial rings 
gives rise to \textit{tropical analytic geometry}.
This emerging field has been 
defined in \cite{Rabinoff}, adapting the language
of tropical geometry to the world
of rigid geometry.

In Section \ref{sec:tropical_geometry}, 
we consider the case of tropical varieties of Tate polynomial ideals.
We show that the tropical variety of a polynomial ideal is the union of the tropical varieties of its Tate completions, which allows to compute the tropical variety using universal analytical Gröbner bases and the Gröbner fan.
This matches what was known for ideals in $k[\![T]\!][\X]$, which were used as a lifting target in existing algorithms for computing tropical varieties over valued fields.
As such, we provide a new point of view on those algorithms, allowing them to work directly on the Tate series without lifting.
To the best of our knowledge, this is the first effective application of tropical analytic geometry.

Finally, motivated by going further into
the development of effective computations
in rigid geometry, we aim at
building up the tools for 
effective computations on \textit{affinoid
subdomains}.
Roughly speaking, affinoid subdomains
are constructed using generalizations
of Tate algebras to more general convergence
conditions (\textit{e.g.} converging on an annulus)
and taking quotients by ideals.

In Section \ref{sec:tate_alg_on_polyhedral_subdomain},
we make some steps into this journey by providing
effective computations of local GB in the
special case of some polyhedral subdomains
as defined in \cite{Rabinoff}.
We conclude with some conjectures on UAGB
in this context, along with examples and
comments.

\section{Setting}

\label{sec:Setting}

\subsection{Tate algebras and Gröbner bases}

In this section, we recall the definition of Tate 
algebras and their theory of Gröbner bases (GB for short).
Let $K$ be a field with a valuation $\val$
making it complete.
Let $\pi$ be a uniformizer 
of $K$, that is an element of valuation $1$.
Typical examples of such a setting are $p$-adic fields like
 $K=\QQ_p$ with
 $\pi = p$
or Laurent series fields like $K=\QQ(\!(T)\!)$ with
 $\pi = T$.

For $\r = (r_{1},\dots,r_{n}) \in \QQ^{n}$,
the \emph{Tate algebra} $\KX[\r]$ is defined as
  \begin{equation}
    \label{eq:1}
    \KX[\r] := \left\{ \sum_{\mathbf{i} \in \NN^{n}} a_{\mathbf{i}}\X^{\mathbf{i}} 
     \text{ s.t. }
     a_{\mathbf{i}}\in K \text{ and } 
     \val(a_\i) - \r{\cdot}\i \xrightarrow[|\i| \rightarrow +\infty]{} +\infty
    \right\}
  \end{equation}
We call the tuple $\mathbf{r}$ the convergence log-radii of the 
Tate algebra.
We define the Gauss valuation of a term 
$a_{\mathbf{i}}\X^{\mathbf{i}}$ as 
$\val_\r(a_{\mathbf{i}}\X^{\mathbf{i}}) = \val(a_\i) - \r{\cdot}\i$, and 
the Gauss valuation of $\sum a_{\mathbf{i}}\mathbf{X}^{\mathbf{i}} 
\in \KX[\r]$ as the minimum of the Gauss valuations of its terms.
The valuation defines a metric on $\KX[\r]$, for which a sequence $(f_n)_{n \in  \NN} \in \KX[\r]$
converges to zero iff $\val_\r (f_n)\xrightarrow[n \rightarrow +\infty]{} +\infty.$

In this article, we shall frequently need to consider all terms with minimal valuation together.
\begin{defn}
  Let $f \in \sum_{\alpha \in \N^n} c_\alpha \X^\alpha \in \KX[\r]$.
  The $\r$-support of $f$ is
  \begin{equation}
    \label{eq:17}
    \Supp_{r}(f) = \left\{  \alpha \textrm{ s.t. } \val_\r (c_\alpha \X^\alpha) = \val_\r (f) \right\},
\end{equation}
  and 
  the  \emph{initial part} of $f$ is
  \begin{equation}
    \init_{\r}(f) = \sum_{\alpha \in \Supp_{\r}(f)} c_{\alpha}\X^{\alpha}.
  \end{equation}
\end{defn}
By definition, for $f \in \KX[\r]$, 
$\init_{\r}(f)$ is a polynomial.

We fix a classical \emph{monomial order} $\leq_m$ on the set of
monomials $\X^{\mathbf{i}}$, which will be used for tie-breaks. 
Given two terms $a \X^\i$ and $b \X^{\mathbf{j}}$  (with $a,b \in 
K^\times$), we write $a \X^\i <_{\r,m} b \X^{\mathbf{j}}$ if
$\val_\r(a\X^\i) > \val_\r(b\X^{\mathbf{j}})$, or
$\val_\r(a\X^\i) = \val_\r(b\X^{\mathbf{j}})$ 
and $\X^{\mathbf{i}} <_m \X^{\mathbf{j}}$.
By definition, the leading term of a Tate series $f=\sum 
a_{\mathbf{i}}\mathbf{X}^{\mathbf{i}} \in \KX[\r]$ is
its maximal term, and is denoted by
$\LT_{\r,m} (f).$ Its coefficient and its monomial
are denoted $\LC_{\r,m}(f)$ and $\LM_{\r,m}(f)$, with $\LT_{\r,m} (f)=\LC_{\r,m}(f)\times \LM_{\r,m}(f).$
For $f,g \in \KX[\r]$, we define their S-polynomial as
\[\spoly(f,g) = \frac{\LT_{{\r,m}}(g)}{D(f,g)}f - 
\frac{\LT_{{\r,m}}(f)}{D(f,g)}g \]
where $D(f,g) = \gcd(\LT_{\r,m}(f),\LT_{\r,m}(g))$.

A Gröbner basis (or GB for short) of an ideal $I$ of $\KX[\r]$ is a set $G \subseteq I$
such that for all $f \in I$, there exists an index
$g \in G$ such that $\LT_{\r,m}(g)$ divides $\LT_{\r,m}(f)$.
A finite Gröbner basis $(g_1, \ldots, g_s)$ is \emph{reduced} if all $\LT_{\r,m}(g_i)$'s are monic, minimally generate $\LT_{\r,m}(I)$ and, 
for any $i$, $\LT_{\r,m}(g_i)$ is the only term of $g_i$ in $\LT_{\r,m}(I)$.

The following theorem was proved in~\cite{CVV}.
\begin{thm}
\label{theo:GB}
Let $I$ be an ideal of $\KX[\r]$, then $I$ admits a finite Gröbner basis.
\end{thm}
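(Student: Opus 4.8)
The plan is to adapt the classical Buchberger/Dickson argument to the Tate setting, the only genuine subtlety being that $\KX[\r]$ is a ring of convergent power series rather than polynomials, so one cannot a priori bound degrees. First I would observe that the leading-term map $f \mapsto \LT_{\r,m}(f)$ sends every nonzero $f \in \KX[\r]$ to an honest monomial (times a scalar): this is because, by the convergence condition \eqref{eq:1}, $\val_\r(a_\i\X^\i) \to +\infty$ as $|\i| \to \infty$, so only finitely many terms attain the minimal Gauss valuation $\val_\r(f)$, and among those finitely many the monomial order $\leq_m$ picks out a unique maximal one. Hence $\LT_{\r,m}(I) := \langle \LT_{\r,m}(f) : f \in I \setminus\{0\}\rangle$ is a genuine monomial ideal in the polynomial ring $K[\X]$. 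By Dickson's lemma it is generated by finitely many monomials $\LM_{\r,m}(g_1),\dots,\LM_{\r,m}(g_s)$, with $g_1,\dots,g_s \in I$. I claim $G = \{g_1,\dots,g_s\}$ is a Gröbner basis: for any $f \in I$, $\LT_{\r,m}(f)$ lies in the monomial ideal generated by the $\LM_{\r,m}(g_i)$, so some $\LM_{\r,m}(g_i)$ divides $\LM_{\r,m}(f)$, hence $\LT_{\r,m}(g_i) \mid \LT_{\r,m}(f)$ up to the scalar, which is exactly the defining property.

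The part that deserves care — and the main obstacle — is making sure the division/reduction process underlying the above is well founded, i.e. that $\LT_{\r,m}(I)$ really is the union over $f \in I$ of the monomials divisible by $\LT_{\r,m}(f)$, and not something smaller. Concretely, one must check that Tate division terminates: given $f$ and a reducer $g_i$ with $\LT_{\r,m}(g_i) \mid \LT_{\r,m}(f)$, subtracting the appropriate term multiple $\frac{\LT_{\r,m}(f)}{\LT_{\r,m}(g_i)} g_i$ strictly increases the leading term's position in the well-order $<_{\r,m}$ (equivalently, strictly increases $\val_\r$ of the leading term, or keeps it equal and strictly decreases $\LM$ in $\leq_m$). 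Since $<_{\r,m}$ first compares Gauss valuations (which are bounded below on $I$ after normalizing, and increase in discrete steps $\val(\pi)\Z$) and then breaks ties by a monomial order, an infinite strictly-increasing chain would force either $\val_\r \to \infty$, in which case the successive remainders converge to $0$ in the Tate metric and the division still "converges", or eventually a fixed valuation with an infinite $\leq_m$-descending chain of monomials, which is impossible. So the remainder is well-defined (as a convergent limit), and it lies in $I$ and has no term divisible by any $\LT_{\r,m}(g_i)$; if it were nonzero its leading term would contradict the choice of the $g_i$ as generating $\LT_{\r,m}(I)$.

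Putting it together: Dickson's lemma supplies finiteness of $G$, the normalization by Gauss valuation turns the "no infinite descent" worry into either metric convergence or a classical monomial-order argument, and the generation property of $\LT_{\r,m}(I)$ by the $\LT_{\r,m}(g_i)$ gives the Gröbner property directly. I would finally remark that this is precisely the content of~\cite{CVV}, and that the same proof also yields a reduced finite Gröbner basis by the usual tail-reduction and inter-reduction steps, each of which again terminates by the same valuation-then-monomial-order descent argument.
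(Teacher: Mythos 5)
The paper itself offers no proof of this theorem; it is quoted as a known result with a citation to \cite{CVV}, so there is no in-paper argument to compare against. Your first paragraph is nonetheless a correct and essentially complete proof, given the definition of Gröbner basis adopted here (only the leading-term divisibility condition is required). The observations you make are all valid: the convergence condition forces $\LT_{\r,m}(f)$ to be a well-defined single term for $f \neq 0$; since the coefficients lie in the field $K$, divisibility of terms reduces to divisibility of monomials; the set $\{\LM_{\r,m}(f) : f \in I \setminus \{0\}\}$ is an upward-closed set of monomials in $K[\X]$ (closed under multiplication by monomials, since $\LM_{\r,m}(\X^{\mathbf{k}} f) = \X^{\mathbf{k}}\LM_{\r,m}(f)$); Dickson's lemma gives finitely many minimal elements, each of the form $\LM_{\r,m}(g_i)$ for some $g_i \in I$; and these $g_i$ form the desired Gröbner basis because every $\LM_{\r,m}(f)$ is a multiple of some minimal element.

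The second paragraph, however, addresses a worry that does not exist for this statement. Whether ``$\LT_{\r,m}(I)$ really is the union over $f \in I$ of the monomials divisible by $\LT_{\r,m}(f)$'' is automatic: a monomial lies in a monomial ideal if and only if it is divisible by one of the generating monomials, with no division or limiting process needed. What you sketch there --- termination, or metric convergence, of Tate/Mora reduction --- is genuinely the hard technical content of the Gröbner basis theory over Tate algebras in \cite{CVV}, but it serves a different purpose: it shows that a set satisfying the leading-term condition actually generates the ideal and admits a well-defined normal form, which in turn makes Buchberger's algorithm correct and terminating. None of that is required for the bare existence of a finite Gröbner basis, which falls out of Dickson's lemma alone once one observes that leading terms are honest monomials up to units. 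So your proof is right, but the second paragraph should either be cut or reframed as a remark about what the existence theorem does \emph{not} by itself give you.
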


%

\subsection{Local Gröbner bases of polynomial ideals in Tate algebras}
\label{sec:polyn-overc-ideals}

For a polynomial ideal $I \subset K[\X]$ and a system of convergence log-radii $\r$,
we define $I_\r$ to be the ideal of $\KX[\r]$ generated by
the polynomials of $I.$ It is the completion of $I$
with respect to $\val_\r$.
The ideal $I_\r$ usually contains many series and polynomials 
not in $I$.
However, as $I$ is dense in $I_\r$,
it was proved in \cite{CVV4} that 
$I_\r$ does not contain more leading terms
than $I$, and that $I_{\r}$ admits a Gröbner basis comprised of polynomials.

\begin{defn}
  Let $I \subset K[\X]$ be an ideal, $\r$ a system of log-radii and $I_{\r}$ the completion of $I$ in $\KX[\r]$.
  An \emph{$\r$-local Gröbner basis} of $I$ is a Gröbner basis of $I_{\r}$ comprised only of polynomials.
\end{defn}

If one needs to vary the convergence log-radii,
the following object is of interest:

\begin{defn}
Let $I \subset K[\X]$ be 
an ideal.
A finite set $G \subset I\subset K[\X]$
such that  for any $\r \in \Q^n,$
$G$ is an $\r$-local GB of $I$
is called  
a \textit{universal analytic Gröbner basis}
of $I$ (UAGB for short).
\end{defn}

In the usual setting, it is required that universal Gröbner bases  be a Gröbner basis for all monomial orders.
Here, the definition requires only that all convergence radii be covered, without any restriction on the tie-breaking monomial order.
However, we prove in Lemma~\ref{lem:val_r_distinct_gb} that given a polynomial ideal, any term ordering can be achieved with a suitable choice of a system of convergence radii.
Finally \cite{CVV4} culminated with the following result:
\begin{thm}
Let~ $I \subset K[\X]$ be 
an ideal.
Then there exists a universal analytic Gröbner basis
of $I$. \label{thm:existence_universal_analytic_GB}
\end{thm}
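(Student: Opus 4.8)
The plan is to construct the UAGB as a finite union of ordinary local Gröbner bases, one per cell of the Gröbner fan of $I$, on the strength of two facts: the result of~\cite{CVV4} that completing a polynomial ideal to a Tate algebra produces no new leading terms, and the finiteness of the Gröbner fan. The starting point is an elementary observation. For a polynomial $p = \sum_{\i} c_{\i}\X^{\i}$ and log-radii $\r$, the leading monomial of $p$ seen inside $\KX[\r]$ depends only on its initial part and on the fixed tie-break order: $\LM_{\r,m}(p) = \LM_{m}(\init_{\r}(p))$, because the terms of $p$ of minimal Gauss valuation are exactly those occurring in $\init_{\r}(p)$, and on these $<_{\r,m}$ is just $<_{m}$. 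Write $\leq_{\r,m}$ for the induced valuation-aware term order on $K[\X]$, comparing $a\X^{\i}$ and $b\X^{\j}$ by the Gauss valuations $\val(a) - \r{\cdot}\i$ and $\val(b) - \r{\cdot}\j$ and breaking ties with $\leq_{m}$. Applied to all members of an ideal, the observation yields $\LT_{\r,m}(J) = \LT_{m}(\init_{\r}(J))$ for every polynomial ideal $J$; in particular $\LT_{\r,m}(I)$ is a function of the initial ideal $\init_{\r}(I)$ alone.

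Fix $\r \in \QQ^{n}$ and let $\mathcal{G}_{\r}$ be the reduced Gröbner basis of the \emph{polynomial} ideal $I$ for the term order $\leq_{\r,m}$; it is finite (as $K[\X]$ is Noetherian) and consists of polynomials (being a subset of $I$). By~\cite{CVV4}, $\LT_{\r,m}(I_{\r}) = \LT_{\r,m}(I) = \langle \LT_{\r,m}(\mathcal{G}_{\r}) \rangle$, so the leading term of any $f \in I_{\r}$ is divisible by that of some element of $\mathcal{G}_{\r}$; hence $\mathcal{G}_{\r}$ is already an $\r$-local Gröbner basis of $I$. It is therefore enough to prove that the family $\{\mathcal{G}_{\r} : \r \in \QQ^{n}\}$ is finite: setting $G := \bigcup_{\r \in \QQ^{n}} \mathcal{G}_{\r}$ then gives a finite subset of $I$ which, for every $\r$, contains an $\r$-local Gröbner basis and is contained in $I \subseteq I_{\r}$, hence is itself an $\r$-local Gröbner basis of $I$ --- precisely the assertion that $G$ is a UAGB. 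For the finiteness I would invoke the Gröbner fan of $I$: the space of log-radii decomposes into finitely many relatively open rational polyhedra on which $\init_{\r}(I)$ is constant (this finite polyhedral decomposition is the \emph{Gröbner fan} in the trivially valued case, after Mora--Robbiano and Sturmfels, and the \emph{Gröbner complex} of Maclagan and Sturmfels in general), and on each such cell $\mathcal{G}_{\r}$ is constant, since $\LT_{\r,m}(I) = \LT_{m}(\init_{\r}(I))$ is constant there and, as one verifies, the lower terms of the reduced basis stabilize as well. A non-homogeneous $I$ is handled by homogenizing, applying the homogeneous statement, and dehomogenizing the finitely many bases so obtained.

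I expect the finiteness of the Gröbner fan to be where the work lies: one must fix the right notion of weight (log-radii read against the background valuation of $K$), check that the loci where $\init_{\r}(I)$ is constant are rational polyhedral and finite in number, and --- in the inhomogeneous case --- handle the polyhedral bookkeeping of homogenization and dehomogenization while confirming compatibility with the orders $\leq_{\r,m}$. The remaining steps are comparatively soft: the identity $\LM_{\r,m}(p) = \LM_{m}(\init_{\r}(p))$ is immediate, the step from being a Gröbner basis of $I$ for $\leq_{\r,m}$ to being an $\r$-local Gröbner basis of $I$ is just a citation of~\cite{CVV4}, and the union step works because enlarging a Gröbner basis within the ideal keeps the Gröbner property.
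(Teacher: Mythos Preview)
The paper does not itself prove this statement --- it is cited from~\cite{CVV4} --- but the constructive version given here (Algorithm~\ref{algo:UAGB_computation}, Theorem~\ref{theo:proof_of_UAGB_algorithm}) follows the same architecture you propose: homogenize, collect finitely many reduced local Gröbner bases (one per leading-term ideal, finiteness being Theorem~\ref{thm:Terms_of_I_fini}), take their union, dehomogenize. So your overall plan matches the paper's.

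There is, however, a genuine gap in your justification. You claim that ``the reduced Gröbner basis of the polynomial ideal $I$ for the term order $\leq_{\r,m}$'' exists as a finite set of polynomials, appealing to Noetherianity of $K[\X]$. For inhomogeneous $I$ this is false: $\leq_{\r,m}$ is not a classical monomial order, and tail-reduction need not terminate in $K[\X]$. In the paper's own example $I = \langle x - 7y,\ y - 7y^2\rangle \subset \QQ_7[x,y]$ with $\r = (0,0)$, one has $\LT_{\r}(I) = \langle x, y\rangle$, so a reduced polynomial basis would have to consist of elements $x + c_1$ and $y + c_2$ with $c_i \in 7\Z_7$; but no such polynomial lies in $I$ (it cannot vanish at both $(0,0)$ and $(1,1/7)$). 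The reduced basis $\{x,y\}$ of $I_{\r}$ lives only in the completion. This is exactly why the paper homogenizes \emph{first} --- as noted before Lemma~\ref{lem:dehomogenization_of_GB}, reduced polynomial local bases are guaranteed only for homogeneous ideals (\cite[Lem.~7.2]{CVV4}) --- and dehomogenizes only at the end. You mention homogenization as an afterthought; it must instead be the starting point, and your appeal to Noetherianity should be replaced by that lemma. Once this is fixed, your Gröbner-complex finiteness and the ``reduced basis is determined by $\LT(I)$'' step (which is Lemma~\ref{lem:GBR_for_two_term_orders} here) are sound.
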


While the proof was not constructive,
we provide in this article an algorithm
to compute a UAGB of any polynomial ideal.

\subsection{Homogenization and dehomogenization}

Our algorithm to compute a UAGB of a polynomial
ideal will rely on homogenization and
dehomogenization.
We consign here notations and basic properties
taken from \cite[\S 3.3]{CVV4}

\begin{defn}
Let $(\cdot)^*$ and $(\cdot)_*$ be the homogenization
and dehomogenization applications between
$K[\X]$ and $K[\X,t].$
If $I \subset K[\X]$ is an ideal,
we define $I^* \subset K[\X,t]$ to be the
ideal spanned by the $f^*$ for $f \in I$.
\end{defn}

Given $\r \in \QQ^n$ and $\leq_{m}$ a monomial order, we extend the term order
$<_{\r,m}$ to $K[\X,t]$ and $K \left\lbrace \X,t ; \r,0 \right\rbrace$
as follows.

\begin{defn}
Given two terms $a \X^\alpha t^u$
and $b X^\beta t^v$, we write that
$a \X^\alpha t^u <_{(\r,0),m} b X^\beta t^v$
if one of the following holds:
\begin{itemize}[leftmargin=*]
\item $\val_\r(a\X^\i) > \val_\r(b\X^{\mathbf{j}})$ (\emph{i.e.}, $\val_{\r,0}(a \X^\alpha t^u) > \val_{\r,0}(b X^\beta t^v)$).
\item $\val_\r(a\X^\i) = \val_\r(b\X^{\mathbf{j}})$ 
and $\deg (\X^\alpha t^u) < \deg (X^\beta t^v)$.
\item $\val_\r(a\X^\i) = \val_\r(b\X^{\mathbf{j}}),$ $\deg (\X^\alpha t^u) = \deg (X^\beta t^v)$
and $\X^{\alpha} <_m \X^{\beta}.$
\end{itemize}
This defines a term order on $K \left\lbrace \X,t ; \r,0 \right\rbrace$. \label{defn:term_order_on_homogenization}
\end{defn}

With this order, dehomogenization preserves leading
terms of homogeneous polynomials of $K[\X,t].$

\begin{lem}[Lem.~3.5 in \cite{CVV4}]
Let $\r \in \Q^n.$
Let $h \in K[\X,t]$ be a homogeneous polynomial.
Then $\LT_{(\r,0),m}(h)_*=\LT_{\r,m}(h_*).$
Let $f \in K[\X],$
then $\LT_{\r,m}(f)=(\LT_{(\r,0),m}(f^*))_*.$
\label{lem:dehomogenization_and_LT}
\end{lem}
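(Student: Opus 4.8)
The plan is to reduce the whole statement to a term-by-term comparison. The key observation is that, since $t$ carries log-radius $0$, dehomogenization does not change the Gauss valuation of a monomial: for a term $a\X^\alpha t^u \in K[\X,t]$ one has $\val_{\r,0}(a\X^\alpha t^u) = \val(a) - \r\cdot\alpha = \val_\r(a\X^\alpha) = \val_\r\bigl((a\X^\alpha t^u)_*\bigr)$. So the only way the two orders $<_{(\r,0),m}$ and $<_{\r,m}$ could disagree after dehomogenizing is through the total-degree criterion in Definition~\ref{defn:term_order_on_homogenization}, and the whole argument rests on the fact that this criterion is constant on the support of a homogeneous polynomial.

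First I would record the bookkeeping about supports. If $h \in K[\X,t]$ is homogeneous of degree $d$, then every monomial $\X^\alpha t^u$ occurring in $h$ satisfies $|\alpha| + u = d$, so the dehomogenization map $\X^\alpha t^u \mapsto \X^\alpha$ is injective on the support of $h$: the equality $\X^\alpha = \X^\beta$ forces $\alpha = \beta$, hence $u = d - |\alpha| = d - |\beta| = v$. Consequently $h_*$ is obtained from $h$ simply by erasing the powers of $t$, with no collapsing of terms and no change of coefficients; in particular $h_* \neq 0$ whenever $h \neq 0$, so $\LT_{\r,m}(h_*)$ is well defined.

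Next I would compare two distinct terms $\tau_1 = a\X^\alpha t^u$ and $\tau_2 = b\X^\beta t^v$ of $h$ with respect to $<_{(\r,0),m}$: the first clause compares $\val_\r(a\X^\alpha)$ against $\val_\r(b\X^\beta)$, which is exactly the first comparison made by $<_{\r,m}$ between $(\tau_1)_*$ and $(\tau_2)_*$; the degree clause is vacuous here because $\deg\tau_1 = \deg\tau_2 = d$; and the last clause compares $\X^\alpha$ with $\X^\beta$ under $<_m$, which is precisely the tie-break of $<_{\r,m}$. Hence $\tau_1 <_{(\r,0),m} \tau_2 \Leftrightarrow (\tau_1)_* <_{\r,m} (\tau_2)_*$, so the $<_{(\r,0),m}$-largest term of $h$ dehomogenizes to the $<_{\r,m}$-largest term of $h_*$, which is the first identity $\LT_{(\r,0),m}(h)_* = \LT_{\r,m}(h_*)$. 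For the second identity I would apply this to $h = f^*$, which is homogeneous with $(f^*)_* = f$, obtaining $\LT_{(\r,0),m}(f^*)_* = \LT_{\r,m}\bigl((f^*)_*\bigr) = \LT_{\r,m}(f)$.

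The one point I would be careful about — rather than a genuine obstacle — is precisely that the match between the two orders is valid only after restricting to the terms of a single homogeneous polynomial; across different degrees the orders $<_{(\r,0),m}$ and $<_{\r,m}$ are not compatible with (de)homogenization, and it is the homogeneity of $h$ that neutralizes the total-degree clause. Everything else is routine.
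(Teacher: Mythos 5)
The paper does not give its own proof of this lemma; it cites it directly from \cite[Lem.~3.5]{CVV4}. Your argument is correct and is the natural direct proof: since $t$ has log-radius $0$, the Gauss valuation is unchanged under dehomogenization, and homogeneity of $h$ both makes the degree clause of Definition~\ref{defn:term_order_on_homogenization} vacuous on $\Supp(h)$ and makes $(\cdot)_*$ injective on $\Supp(h)$, so $<_{(\r,0),m}$ on terms of $h$ matches $<_{\r,m}$ on terms of $h_*$; the second identity then follows by applying the first to $h=f^*$.
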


\section{Term orders}
\label{sec:univ-analyt-gb}

\subsection{Convergence radii and term orders}

In this section, we collect different results regarding term orders in Tate algebras.
First, we consider the relation between term orders and systems of convergence log-radii, and show that given finite data (e.g. a finite set of polynomials or an ideal), it is always possible to realize a term order by a suitable choice of system of convergence log-radii.

\renewcommand{\Ts}{F}
\begin{defn}
  Let $\Ts = (f_1,\dots,f_s) \in K[\X]^s.$
  Given a term order $<$, we define
  \begin{equation}
    \label{eq:21}
    \LT_{<}(\Ts) = \{\LT_{<} (f) : f \in \Ts\}.
  \end{equation}
We say that two term orders $<_{1}$ and $<_{2}$ on $K[\X]$ are equivalent with
respect to $\Ts$
if
\begin{equation}
  \label{eq:19}
  \LT_{<_{1}}(\Ts) = \LT_{<_{2}} (\Ts).
\end{equation}
Let $I \subseteq K[\X]^{s}$ be an ideal, we say that two term orders $<_{1}$ and $<_{2}$ on $K[\X]$ are equivalent with respect to $I$ if
\begin{equation}
  \label{eq:20}
   \left\lbrace \LT_{<_{1}} (f) : f \in I \right\rbrace = \left\lbrace \LT_{<_{2}} (f) : f \in I \right\rbrace
\end{equation}
or equivalently
\begin{equation}
  \label{eq:14}
  \LT_{<_{1}}(I) = \LT_{<_{2}}(I).
\end{equation}
\end{defn}

The two relations are connected as follows.
\begin{lem}\label{lem:equiv_gb_equiv_I}
  Let $I \subseteq K[\X]^{s}$, and $<_{1},<_{2}$ be two term orders.
  Let $G$ be a Gröbner basis of $I$ w.r.t. $<_{1}$ and $<_{2}$.
  If $<_{1}$ and $<_{2}$ are equivalent w.r.t. $G$, then they are equivalent w.r.t. $I$.
\end{lem}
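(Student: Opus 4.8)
The plan is to reduce the statement to the standard fact that the leading terms of a Gröbner basis generate the leading term ideal. Concretely, for any term order $<$ and any Gröbner basis $G$ of $I$ with respect to $<$, one has $\LT_{<}(I) = \langle \LT_{<}(g) : g \in G \rangle$: the inclusion $\supseteq$ holds because $G \subseteq I$, and $\subseteq$ is exactly the defining property of a Gröbner basis recalled above, namely that every leading term of an element of $I$ is divisible by some $\LT_{<}(g)$ with $g \in G$. I would record this as a preliminary observation.

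Then I would apply it twice. Since $G$ is a Gröbner basis of $I$ with respect to $<_1$, we get $\LT_{<_1}(I) = \langle \LT_{<_1}(G) \rangle$, and since it is also a Gröbner basis with respect to $<_2$, we get $\LT_{<_2}(I) = \langle \LT_{<_2}(G) \rangle$, where $\LT_{<_i}(G) = \{\LT_{<_i}(g) : g \in G\}$. By hypothesis $<_1$ and $<_2$ are equivalent with respect to $G$, that is $\LT_{<_1}(G) = \LT_{<_2}(G)$ as sets of terms; hence the two monomial ideals they generate coincide, and therefore $\LT_{<_1}(I) = \LT_{<_2}(I)$, which is precisely the condition defining equivalence of $<_1$ and $<_2$ with respect to $I$.

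There is essentially no obstacle here: the argument is a one-line consequence of the Gröbner basis property, and the only point worth a word of caution is that equivalence with respect to $G$ does not assert $\LT_{<_1}(g) = \LT_{<_2}(g)$ for each individual $g$, but merely that the two collections of leading terms agree as sets; since one passes immediately to the monomial ideals they generate, this weaker statement is all that is needed. If $I$ is to be read as a submodule of $K[\X]^s$ rather than an ideal of $K[\X]$, the same proof applies verbatim with the leading term ideal replaced by the corresponding monomial submodule.
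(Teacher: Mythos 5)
Your proof is correct and follows exactly the same argument as the paper: apply the standard fact that $\LT_<(I) = \langle \LT_<(G)\rangle$ to each order, then use the hypothesis $\LT_{<_1}(G) = \LT_{<_2}(G)$ to conclude the generated leading term ideals coincide. The additional remark about set equality versus elementwise equality is a sensible clarification, though the paper does not make it explicit.
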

\begin{proof}
  Let $<_{1}$ and $<_{2}$ be two term orders equivalent w.r.t. $G$.
  Since $G$ is a GB of $I$ w.r.t. $<_{1}$ and $<_{2}$, $\LT_{<_{1}}(I)$ is spanned by $\LT_{<_{1}}(G)$, which is equal to $\LT_{<_{2}}(G)$ spanning $\LT_{<_{2}}(I)$.
\end{proof}

The following lemma states that modulo equivalence w.r.t. $\Ts$, it is always possible to choose a term order determined only by the convergence condition.

\begin{lem} \label{lem:on_peut_supposer_les_val_r_distincts}
Let $\r \in \Q^n$ and let $\leq_{\r,m}$ be a term order defined by $\val_{\r}$ and a tie-breaking order $\leq_{m}$.
Let $T$ be a finite set of terms in $K[\X]$.
There is an $\s \in \Q^n$
such that for all $t_{1}, t_{2} \in T$,
\[t_{1} > t_2
\iff \val_\s (t_1) < \val_\s (t_2).\] 
In particular, if $\Ts$ is a finite set in $K[\X]$
, any equivalence class of term orders w.r.t. $\Ts$ 
contains a term order $<_{\s,m}$ such that for all $f \in F$ 
, $\LT_{\s,m}(f) = \init_{\s}(f)$.
\end{lem}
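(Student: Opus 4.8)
The plan is to obtain $\s$ as a small rational perturbation $\s = \r + \epsilon\w$ of $\r$, with $\epsilon \in \Q_{>0}$ small and $\w \in \Q^{n}$ a fixed weight vector encoding the tie-break $\leq_{m}$ on the finitely many monomials occurring in $T$. For a term $t = a\X^{\alpha}$ one has $\val_{\s}(t) = \val_{\r}(t) - \epsilon\,(\w\cdot\alpha)$; hence, for $t_{1}=a\X^{\alpha}, t_{2}=b\X^{\beta} \in T$, comparing $\val_{\s}(t_{1})$ and $\val_{\s}(t_{2})$ amounts to comparing $\val_{\r}(t_{1})$ and $\val_{\r}(t_{2})$ first, and, when these are equal, comparing $\w\cdot\alpha$ and $\w\cdot\beta$. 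Since $<_{\s,m}$ orders terms by increasing $\val_{\s}$ (the maximal term has minimal valuation), it therefore suffices to choose $\w$ so that $\X^{\alpha} >_{m} \X^{\beta} \implies \w\cdot\alpha > \w\cdot\beta$ for all monomials $\X^{\alpha},\X^{\beta}$ appearing in $T$, and then to choose $\epsilon$ small enough that no strict inequality $\val_{\r}(t_{1}) < \val_{\r}(t_{2})$ with $t_{1},t_{2}\in T$ gets reversed.

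To produce $\w$ I will invoke the classical fact that a term order, restricted to a finite set of monomials, is realised by a single rational weight vector. Concretely, by Robbiano's structure theorem $\leq_{m}$ is lexicographic comparison of the images under a real matrix $M$ with rows $w_{1},\dots,w_{n}$; for each ordered pair of distinct monomials $\X^{\alpha}>_{m}\X^{\beta}$ of $T$, the vector $w_{1}+\delta w_{2}+\dots+\delta^{n-1}w_{n}$ satisfies $\w\cdot(\alpha-\beta)>0$ once $\delta>0$ is small enough, and since there are only finitely many such pairs a single $\delta$ works for all of them. Thus the set of $\w\in\RR^{n}$ satisfying these finitely many strict linear inequalities is open and nonempty, hence contains a point of $\Q^{n}$, which we take as $\w$ (clearing denominators if an integer vector is preferred). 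For $\epsilon$: there are finitely many pairs $t_{1},t_{2}\in T$ with $\val_{\r}(t_{1})\neq\val_{\r}(t_{2})$ and distinct monomials, and any rational $\epsilon>0$ smaller than $|\val_{\r}(t_{1})-\val_{\r}(t_{2})|\,/\,|\w\cdot(\alpha-\beta)|$ for all of them works (pairs with equal monomials impose no constraint, as then $\val_{\s}(t_{1})-\val_{\s}(t_{2})=\val_{\r}(t_{1})-\val_{\r}(t_{2})$). Then $\s=\r+\epsilon\w\in\Q^{n}$.

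It then remains to check $t_{1}>_{\r,m}t_{2}\iff\val_{\s}(t_{1})<\val_{\s}(t_{2})$ for $t_{1},t_{2}\in T$, which is a short three-case verification: the case $\val_{\r}(t_{1})\neq\val_{\r}(t_{2})$ is handled by the choice of $\epsilon$, the case $\val_{\r}(t_{1})=\val_{\r}(t_{2})$ with distinct monomials by the choice of $\w$, and the case of equal monomials is trivial. Finally, for the ``in particular'' clause, note that every equivalence class of term orders w.r.t. $F$ contains some $<_{\r,m}$; apply the first part to it with $T$ the (finite) set of all terms of all $f\in F$ and the same $\leq_{m}$. Then for each $f\in F$ the terms of $f$ have pairwise distinct monomials, hence are pairwise $<_{\r,m}$-comparable, hence receive pairwise distinct $\val_{\s}$-values, so $\Supp_{\s}(f)$ is a singleton and $\init_{\s}(f)=\LT_{\s,m}(f)$; and since the equivalence just established preserves the maximal term of each $f$, we also get $\LT_{\s,m}(f)=\LT_{\r,m}(f)$, so $<_{\s,m}$ lies in the prescribed equivalence class.

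I expect the only genuinely non-formal ingredient to be the existence of the rational weight vector $\w$ realising $\leq_{m}$ on the finite monomial set of $T$; granting that (via Robbiano's theorem, or a direct Farkas/LP argument), the choice of $\epsilon$ and the case analysis are routine bookkeeping.
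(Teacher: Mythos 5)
Your proof is correct and matches the paper's own argument: realize $\leq_m$ on the finitely many monomials of $T$ by a rational weight vector (you via Robbiano's structure theorem, the paper via Ostrowski's theorem \cite{Ost75}---the same underlying fact), then perturb $\r$ by a small positive rational multiple of that vector and finish with the three-case check and the $T=\bigcup_{f\in F}\Supp(f)$ specialization. As a side note, your sign $\s = \r + \epsilon\w$ is the right one given $\val_\r(a\X^\alpha) = \val(a) - \r\cdot\alpha$, whereas the paper writes $\s = \r - \varepsilon\u$, which appears to be a sign slip.
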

\begin{proof}
Thanks to \cite[Th.~1]{Ost75} (see also \cite[Lem.~1.3.1]{GS93}),
there exists some $\u \in \Q^n$
such that 
for all $c_1\X^{\alpha} \neq c_2 \X^{\beta}$ in $T$, 
\[\X^{\alpha} >_{m} \X^{\beta}
\iff \alpha \cdot \u > \beta \cdot \u.\]

By considering the
finite set of pairs of terms in $T$,
there is a small enough $\varepsilon \in \Q_{>0}$
such that for any $t_1, t_2$ in $T$, 
if $\val_\r (t_1) < \val_\r ( t_2)$
then $\val_{\r-\varepsilon \u} (t_1) < \val_{\r-\varepsilon \u} ( t_2).$

Let $\s = \r-\varepsilon \u.$
Then, for any $t_1=c_{1}X^{\alpha}, t_2=c_2 \X^\beta$
in $T$ such that $t_1 >_{\r,m} t_2$,
one of the following is true:
\begin{itemize}
\item $\val_\r (t_1) < \val_\r ( t_2)$
and therefore
$\val_{\r-\varepsilon \u} (t_1) < \val_{\r-\varepsilon \u} ( t_2)$;
\item $\val_\r (t_1) = \val_\r ( t_2)$
and $\X^\alpha >_{m} \X^\beta$, but then
$ \alpha \cdot \u > \beta \cdot \u$ and 
$\val_{\r-\varepsilon \u} (t_1) < \val_{\r-\varepsilon \u} ( t_2).$
\end{itemize} 
Finally, if $\alpha = \beta$ and $\val (c_1)=\val (c_2)$, then $c_{1}\X^{\alpha} \not> c_{2}\X^{\beta}$ and there is nothing to prove.
Therefore, $\s$ satisfies the claim.

The consequence for equivalence classes w.r.t. $F$ follows, by setting $T = \bigcup_{f \in F}\Supp(f)$.
\end{proof}

\begin{lem}
  \label{lem:val_r_distinct_gb}
  Let $I \subseteq K[\X]$ be an ideal, $<_{\r,m_1}$ be a term order and $G$ a local Gröbner basis of $I$ w.r.t. $<_{\r,m_1}$.
  Then there exists $\s \in \QQ^{n}$ such that, for any tie-break order $<_{m_2}$:
  \begin{itemize}
    \item $G$ is a Gröbner basis of $I$ w.r.t. $<_{\s,m_2}$;
    \item $<_{\r,m_1}$ is equivalent to $<_{\s,m_2}$ w.r.t. $I$;
    \item $\LT_{\s,m_2}(I) = \langle \init_{s}(f) : f \in I \rangle$.
  \end{itemize}
\end{lem}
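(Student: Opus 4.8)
The plan is to apply Lemma~\ref{lem:on_peut_supposer_les_val_r_distincts} to the finite set $F = G$ (viewed as a finite set of polynomials, hence with a finite set of terms in its supports), producing a system of log-radii $\s \in \QQ^{n}$ such that for every $f \in G$, $\LT_{\s,m_2}(f) = \init_{\s}(f)$ for \emph{any} tie-break order $<_{m_2}$ --- indeed, once all the relevant $\val_{\s}$-values are pairwise distinct on the supports, the tie-break never activates, so the term order restricted to these polynomials does not depend on $m_2$ at all. The construction in that lemma also guarantees that $\s$ is obtained from $\r$ by a small perturbation $\s = \r - \varepsilon\u$, chosen so that whenever $t_1 >_{\r,m_1} t_2$ for terms $t_1,t_2$ in the supports of $G$, one also has $\val_{\s}(t_1) < \val_{\s}(t_2)$; in particular $\LT_{\s,m_2}(g) = \LT_{\r,m_1}(g)$ for every $g \in G$.

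Next I would verify the first bullet: $G$ is a Gröbner basis of $I_{\s}$ w.r.t.\ $<_{\s,m_2}$. Since $G$ is a local GB of $I$ w.r.t.\ $<_{\r,m_1}$, it is in particular a GB of the polynomial ideal $I$ in $K[\X]$ for the term order $<_{\r,m_1}$; as just noted the leading terms of the elements of $G$ are the same for $<_{\s,m_2}$, so $G$ is also a GB of $I \subset K[\X]$ for $<_{\s,m_2}$. To pass to the Tate completion $I_{\s}$, I would invoke the results of \cite{CVV4} recalled before Lemma~\ref{lem:dehomogenization_and_LT}: a polynomial GB of $I$ whose leading terms coincide with the initial parts for the Gauss valuation $\val_{\s}$ is an $\s$-local GB, because $I$ is dense in $I_{\s}$ and no new leading terms appear --- and here $\LT_{\s,m_2}(g) = \init_{\s}(g)$ for each $g\in G$ by the conclusion of Lemma~\ref{lem:on_peut_supposer_les_val_r_distincts}, so nothing is lost in the completion. (Alternatively, one checks the Buchberger criterion in $\KX[\s]$ directly, using that it already holds in $K[\X]$ for the induced term order and that S-polynomial reductions of polynomials stay polynomial.)

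The second bullet follows from the first together with Lemma~\ref{lem:equiv_gb_equiv_I}: $G$ is a GB of $I$ w.r.t.\ both $<_{\r,m_1}$ and $<_{\s,m_2}$, and these two orders agree on the leading terms of every element of $G$ (both equal $\LT_{\r,m_1}(g)$), hence they are equivalent w.r.t.\ $G$, hence equivalent w.r.t.\ $I$, i.e.\ $\LT_{\r,m_1}(I) = \LT_{\s,m_2}(I)$. For the third bullet, the inclusion $\langle \init_{\s}(f) : f \in I\rangle \supseteq \langle \init_{\s}(g) : g \in G\rangle = \langle \LT_{\s,m_2}(g) : g\in G\rangle = \LT_{\s,m_2}(I)$ is immediate once we know $\init_{\s}(g) = \LT_{\s,m_2}(g)$ on $G$ and that $G$ is a GB; conversely $\LT_{\s,m_2}(f)$ always divides some monomial of $\init_{\s}(f)$ (indeed $\LT_{\s,m_2}(f)$ is one of the terms of $\init_{\s}(f)$), so each $\init_{\s}(f) \in \LT_{\s,m_2}(I)$ would need the reverse containment; cleanly, since $G$ is a GB and $\init_{\s}(g)=\LT_{\s,m_2}(g)$ is a single term for each $g$, the ideal $\langle \init_\s(f) : f\in I\rangle$ is generated by $\LT_{\s,m_2}(G)$, which generates $\LT_{\s,m_2}(I)$, giving equality.

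I expect the only delicate point to be the transfer from the polynomial setting to the Tate completion in the first bullet --- specifically, making precise that a polynomial GB with $\LT_{\s,m_2}(g) = \init_{\s}(g)$ remains a GB of $I_{\s}$. This is exactly where the density of $I$ in $I_{\s}$ and the "no new leading terms" phenomenon from \cite{CVV4} are needed, and it is also the reason Lemma~\ref{lem:on_peut_supposer_les_val_r_distincts} is stated with the extra conclusion $\LT_{\s,m}(f) = \init_{\s}(f)$ rather than merely realizing the term order: the equality of leading term and initial part is what survives completion. Everything else is bookkeeping around Lemmas~\ref{lem:equiv_gb_equiv_I} and~\ref{lem:on_peut_supposer_les_val_r_distincts}.
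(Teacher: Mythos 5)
The central gap is in your argument for the first bullet. You apply Lemma~\ref{lem:on_peut_supposer_les_val_r_distincts} only to $T = \bigcup_{g\in G}\Supp(g)$, and then conclude from ``the leading terms of the elements of $G$ are unchanged'' that $G$ remains a Gröbner basis of $I$ for $<_{\s,m_2}$. That implication is not automatic and you do not justify it. The classical justification---reduce an arbitrary $f\in I$ modulo $G$ for the new order, and observe that a nonzero remainder $r$ would have every term outside $\langle\LT_{\s,m_2}(G)\rangle = \LT_{\r,m_1}(I)$, contradicting $r\in I$---needs the reduction to terminate, i.e.\ the new order to be a well-order on monomials. Here $<_{\s,m_2}$ is in general a \emph{local} order (for positive log-radii there is no smallest monomial), so naive polynomial division can loop; the remedy is Mora's weak normal form, which converges in $\KX[\s]$ but then the remainder is a Tate series, not a polynomial, and it need not lie in $\KX[\r]$, so one cannot invoke its $<_{\r,m_1}$-leading term for a contradiction. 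A Hilbert-function or Macaulay-basis argument could conceivably close this, but you do not make it, and it is not free in the Tate/local setting either.

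The paper's proof avoids the issue precisely by taking a \emph{larger} $T$: the set of all terms that appear during one full run of Buchberger's algorithm with weak normal form on $G$ w.r.t.\ $<_{\r,m_1}$. Since $G$ is already an $\r$-local GB, that run terminates, so $T$ is finite and Lemma~\ref{lem:on_peut_supposer_les_val_r_distincts} still applies. With this $T$, \emph{every} comparison made during the algorithm is preserved under $\s$, so re-running Buchberger with $<_{\s,m_2}$ reproduces the exact same computation and certifies the Buchberger criterion directly, giving the first bullet (and then the second via Lemma~\ref{lem:equiv_gb_equiv_I}). This is the detail your proposal drops, and it is not just a stylistic choice. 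A secondary, smaller point: for the third bullet you assert that $\langle\init_\s(f):f\in I\rangle$ is generated by $\LT_{\s,m_2}(G)$ ``since $G$ is a GB,'' but you still have to peel off the non-leading terms of $\init_\s(f)$; the paper does this by an explicit descent, subtracting $\tau g$ with $\LT_{\s,m_2}(f)=\tau\LT_{\s,m_2}(g)$ and observing that either $\val_\s(f-\tau g) > \val_\s(f)$ (so $\init_\s(f)$ was a single term) or $\init_\s(f-\tau g) = \init_\s(f)-\LT_{\s,m_2}(f)$, which has one fewer term in its $\s$-support.
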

\begin{proof}

  Let $T$ be the set of all terms which appear in $G$ and in the course of Buchberger's algorithm with weak normal form with $G$ as input w.r.t. the $<_{\r,m_1}$ ordering.
  By Lemma~\ref{lem:on_peut_supposer_les_val_r_distincts}, there exists $\s$ such that all terms in $T$ have distinct $\s$-valuation, compatible with the order $<_{\r,m_1}$.
  Let $<_{m_{2}}$ be a tie-break order.

  Note that when running Buchberger's algorithm with WNF, the radii of convergence are only used for determining leading terms.
  So if we run the algorithm with $G$ as input and for the order $<_{\s,m_2}$, all comparisons will be the same, the exact same terms will appear, and the result will be the same: all the S-polynomials have a weak normal form of $0$ w.r.t. $G$, and so $G$ is a GB of $I$ w.r.t. $<_{\s,m_2}$.

  By Lemma~\ref{lem:equiv_gb_equiv_I}, this, together with the fact that the elements of $G$ have the same leading term for both orders, implies that the term orders are equivalent w.r.t. $I$.
  
  Clearly, since for all $g \in G$, $\LT_{\s,m_2}(g) = \init_{\s}(g)$, $\LT_{\s,m_2}(I) \subseteq \langle \init_{s}(f) : f \in I \rangle$.
  For the reverse inclusion, let $f \in I$, and write $\init_{\s}(f) = \LT_{\s,m_2}(f) + r$.
  Since $G$ is a GB of $I$ w.r.t. $<_{\s,m_2}$, there exists a $g \in G$ and a term $\tau$ such that $\LT_{\s,m_2}(f) = \tau \LT_{\s,m_2}(g)$.
  Since all other terms in $g$ have strictly higher valuation, 
  then either $\val_s(f- \tau g)>\val_s(f)$
  and then $r=0$, or $\val_s(f- \tau g)=\val_s(f)$
  and $\init_\s(f- \tau g) = r = \init_{\s}(f) - \LT_{\s,m_2}(f)$.
  Repeating the process until $r=0$, we get that $\init_{\s}(f)$ is a linear combination of terms divisible by a $\LT_{\s,m_2}(g)$ for some $g$'s, and therefore $\LT_{\s,m_2}(I) = \langle \init_{s}(f) : f \in I \rangle$.
\end{proof}

For such an $\s$, we say that $\s$ \emph{defines a term order} for $I$.
The tie-breaking order $<_{m_{2}}$ becomes irrelevant, and we omit it from the notations.
In the rest of the paper, unless specified otherwise, we will always be considering orders $<_{\s}$ where $\s$ defines a term order.

\subsection{Newton polytopes}
\label{sec:newton-polytopes}

In this section, we generalize known results from the classical case to orderings compatible with the valuation, and relating equivalence classes of term orderings w.r.t. a finite set of polynomials, with data from its Newton polytope. Let $\Ts = (f_1,\dots,f_s) \in K[\X]^s.$

\begin{defn}
For $f \in K[\X]$, we define $\Ns(f)$ to be the convex hull of 
\[ \left\{(\val(c)), \alpha_{1},\dots,\alpha_{n}) \text{ for } c\X^{\alpha} \in \Supp(f) \right\} + \mathbb{R}_+ (1,0,\dots,0).\]
We then define $\Ns ( \Ts)$ to be the Minkowski sum of
the $\Ns (f_i)$'s.
\end{defn}

\begin{rmk}
For $n=s=1$, this coincides with the classical
definition of the Newton polygon (up to a symmetry).
\end{rmk}

\begin{lem}
For the convex polyhedron $\Ns (\Ts),$
$\beta  \in \Ns (\Ts)$ is a vertex if and only if
there is some $U=(1,u_1,\dots,u_n) \in \Q^{n+1}$
such that $\beta \cdot U$
is the unique minimum of the $\alpha \cdot U$'s
for $\alpha \in \Ns (\Ts).$ \label{lem:vertex_and_minimal_scalar_product}
\end{lem}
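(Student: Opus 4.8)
The statement is essentially a standard fact about polyhedra: a point of a convex polyhedron $P$ is a vertex iff it is the unique minimizer of some linear functional on $P$. The only twists here are (a) the polyhedron $\Ns(\Ts)$ is unbounded, with recession cone containing the ray $\RR_+(1,0,\dots,0)$, so I must check that the linear functional can be chosen with a strictly positive first coordinate (the normalization $U = (1,u_1,\dots,u_n)$), and (b) I want the functional to be rational, matching $U \in \QQ^{n+1}$.

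The plan is as follows. First, for the easy direction: if $\beta \cdot U$ is the unique minimum of $\alpha \cdot U$ over $\alpha \in \Ns(\Ts)$, then $\beta$ cannot be written as a proper convex combination of two other points of $\Ns(\Ts)$ (such a combination would have $U$-value at least the minimum of the two, contradicting uniqueness), so $\beta$ is a vertex. For the converse, suppose $\beta$ is a vertex of $P := \Ns(\Ts)$. Since $P$ is a polyhedron, the face $\{\beta\}$ is exposed: there is a supporting hyperplane meeting $P$ exactly at $\beta$, i.e.\ a linear functional $V \in \RR^{n+1}$ with $\beta \cdot V < \alpha \cdot V$ for all $\alpha \in P \setminus \{\beta\}$. (One can also get rationality of $V$ directly: $P$ is cut out by finitely many rational inequalities, and $\{\beta\}$ is the solution set of the subsystem of those made tight at $\beta$; a rational functional exposing $\beta$ can be taken as a positive rational combination of the corresponding rational normals.) It remains to arrange that the first coordinate of $V$ is positive, so that we may rescale to get $U = (1, u_1, \dots, u_n)$.

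The key step is this normalization of the first coordinate. Because $P$ contains the ray $\beta + \RR_+ e_0$ (where $e_0 = (1,0,\dots,0)$; indeed every $\Ns(f_i)$ contains a translate of this ray, hence so does the Minkowski sum), and since $\beta$ is the strict $V$-minimum, we must have $V \cdot e_0 > 0$: otherwise $(\beta + \lambda e_0)\cdot V = \beta\cdot V + \lambda (V\cdot e_0) \le \beta \cdot V$ for $\lambda > 0$, contradicting that $\beta$ is the unique minimizer. Hence $V \cdot e_0 > 0$, and dividing $V$ by this positive rational scalar yields $U = V/(V\cdot e_0)$ of the required shape $(1, u_1, \dots, u_n) \in \QQ^{n+1}$, with $\beta \cdot U$ still the unique minimum of $\alpha \cdot U$ over $P$.

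The main obstacle, such as it is, is being careful about unboundedness: one must confirm that a vertex of an unbounded rational polyhedron is still exposed by a rational functional, and that the recession ray $\RR_+ e_0$ forces the exposing functional to have positive first coordinate rather than merely nonnegative. Both follow from elementary polyhedral geometry (e.g.\ writing $P$ via its finitely many facet inequalities and using the tight ones at $\beta$), so the argument is short once those points are made explicit.
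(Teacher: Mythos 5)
Your proposal is correct and follows essentially the same approach as the paper: both invoke the standard fact that a vertex of a rational polyhedron is exposed by a rational linear functional, and both then use the fact that $\Ns(\Ts)$ contains the recession ray $\RR_+(1,0,\dots,0)$ to conclude that the exposing functional must have strictly positive first coordinate, allowing normalization to $U=(1,u_1,\dots,u_n)$. Your write-up is a bit more explicit on why $u_0>0$ (spelling out the contradiction that would arise from $V\cdot e_0 \le 0$), but the argument is the same.
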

\begin{proof}
Since $\Ns (\Ts)$
is a convex polyhedron,
$\beta$ is a vertex if and only if
there is some $U=(u_0,u_1,\dots,u_n) \in \Q^{n+1}$
such that $\beta \cdot U$
is the unique minimum of the $\alpha \cdot U$'s
for $\alpha \in \Ns (\Ts).$
Since $\Ns (\Ts)$
is defined from half-lines
of the form
$(\val (c_{i,k}),\alpha_{i,k}^{(1)},\dots,\alpha_{i,k}^{(n)})+\mathbb{R}_+ (1,0,\dots,0)$,
then we can further assume
that the $U$ in the
previous equivalence is
such that $u_{0}>0$
and then by multiplying
by a positive rational,
we can assume that 
$U=(1,u_1,\dots,u_n).$
\end{proof}

\begin{prop}
The vertices of $\Ns ( \Ts)$ are in one-to-one correspondence
with the equivalence classes of term orders
with respect to $\Ts.$ \label{prop:Newton_polytope_and_equivalences_classes_of_term_orders}
\end{prop}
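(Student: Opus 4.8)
The plan is to set up a map in each direction between vertices of $\Ns(\Ts)$ and equivalence classes of term orders w.r.t. $\Ts$, and check they are mutually inverse. The key observation linking the two sides is that a system of log-radii $\r \in \Q^n$ together with the valuation on coefficients is exactly the data of a linear functional $U = (1, -r_1, \dots, -r_n)$ on $\N^{n+1}$ (the exponent-plus-valuation lattice): for a term $c\X^\alpha$ one has $\val_\r(c\X^\alpha) = \val(c) - \r\cdot\alpha = (\val(c), \alpha)\cdot U$. Thus, for a single $f_i$, the $\r$-support $\Supp_\r(f_i)$ is the set of points of $\Supp(f_i)$ (lifted with their valuations) minimizing $U$, i.e. the face of $\Ns(f_i)$ that $U$ selects; and since $\Ns(\Ts)$ is the Minkowski sum, the face of $\Ns(\Ts)$ selected by $U$ is the sum of the selected faces of the $\Ns(f_i)$'s.

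\textbf{From vertices to equivalence classes.} Given a vertex $\beta$ of $\Ns(\Ts)$, by Lemma~\ref{lem:vertex_and_minimal_scalar_product} pick $U = (1, u_1, \dots, u_n)$ with $\beta\cdot U$ the strict minimum over $\Ns(\Ts)$. Set $\r = (-u_1, \dots, -u_n)$. By the Minkowski-sum remark above, $U$ selects a single vertex on each $\Ns(f_i)$, which forces each $\Supp_\r(f_i)$ to be a singleton; hence $\init_\r(f_i) = \LT_{\r,m}(f_i)$ is a single term, independent of the tie-break $m$, and these are exactly the terms whose lifted exponents sum to $\beta$. So $\LT_{<_{\r,m}}(\Ts)$ depends only on $\beta$. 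This assigns to each vertex an equivalence class; and if two vertices $\beta \neq \beta'$ are given, the corresponding leading-term sets differ (their sums of lifted exponents are $\beta$ vs.\ $\beta'$), so the map is injective on vertices.

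\textbf{From equivalence classes to vertices, and surjectivity.} Conversely, given any term order $<$, apply Lemma~\ref{lem:on_peut_supposer_les_val_r_distincts} with $T = \bigcup_i \Supp(f_i)$ to replace $<$ by an equivalent (w.r.t. $\Ts$) order $<_{\s}$ with all $\s$-valuations on $T$ distinct; then each $\LT_{\s}(f_i) = \init_\s(f_i)$ is a single term. The functional $U = (1, -s_1, \dots, -s_n)$ then attains a unique minimum on each lifted $\Supp(f_i)$, hence a unique minimum $\beta := \sum_i (\val(c_i),\alpha_i)$ on $\Ns(\Ts)$, which by Lemma~\ref{lem:vertex_and_minimal_scalar_product} is a vertex. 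Well-definedness on equivalence classes: two orders equivalent w.r.t. $\Ts$ give the same leading terms $\LT(f_i)$, hence the same sum $\beta$. This is a two-sided inverse to the previous construction, completing the bijection. I would also note that every vertex arises, since the first construction is a section of the second.

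\textbf{Main obstacle.} The routine bookkeeping is mild; the one point needing care is the passage through Minkowski sums, namely that a linear functional selecting a \emph{vertex} of $\Ns(\Ts)$ must select a vertex of each summand $\Ns(f_i)$ (so that every $\init_\s(f_i)$ is genuinely a single term and the decomposition $\beta = \sum_i \beta_i$ is forced and unique). This is standard polyhedral combinatorics — the face of a Minkowski sum selected by $U$ is the Minkowski sum of the selected faces, and a sum of polytopes is a point iff each summand is — but it is the step where the correspondence could go wrong if, e.g., two distinct coefficients happened to cancel or share a valuation; Lemma~\ref{lem:on_peut_supposer_les_val_r_distincts} is precisely what rules that out. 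A secondary subtlety is confirming that the resulting leading-term set is insensitive to the tie-break order $\leq_m$ once $\s$ separates valuations, which is immediate from $\LT_{\s,m}(f_i) = \init_\s(f_i)$.
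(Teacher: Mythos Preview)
Your proof is correct and follows essentially the same approach as the paper: both identify a choice of leading term in each $f_i$ with a linear functional $U=(1,\dots)$ selecting a vertex on each $\Ns(f_i)$, and both invoke Lemma~\ref{lem:on_peut_supposer_les_val_r_distincts} and Lemma~\ref{lem:vertex_and_minimal_scalar_product} as the two key ingredients. The only difference is presentational: you package the core step as the standard polyhedral fact that the $U$-minimal face of a Minkowski sum is the Minkowski sum of the $U$-minimal faces (and is a point iff each summand is), whereas the paper unwinds this explicitly by indexing tuples $\j\in\mathbf J$, defining cones $C_\j$, and checking via coordinate inequalities that $\alpha_\j$ is a vertex of $\Ns(\Ts)$ iff $C_\j\neq\emptyset$.
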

\begin{proof}
For any $i \in \llbracket 1, s \rrbracket$, 
we write $f_i = \sum_{j=1}^{l_i} c_{i,j} \X^{\alpha_{i,j}}.$
Let us define the
following set of
index vectors:
\[\mathbf{J}:= \left\lbrace (j_1,\dots,j_s) \in \N^s : \forall i \in \llbracket 1,s \rrbracket, 1 \leq j_i \leq l_i \right\rbrace. \]
For $\j \in \mathbf{J}$, and $i \in \llbracket 1,s \rrbracket,$  we define $D_{i,j_i} = \llbracket 1, l_{i} \rrbracket \setminus \left\lbrace j_i \right\rbrace$ and
we define
\[ C_\j = \left\lbrace
\r \in \Q^n :
\forall i \in \llbracket 1, s \rrbracket, \forall j \in D_{i,j_i} 
, \: \alpha_{i,j_i} {\,\cdot\,} (1,\r) < 
\alpha_{i,j} {\,\cdot\,} (1,\r) \right\rbrace,\]
so that for any $\r \in C_\j$ and $i \in \llbracket 1, s \rrbracket,$
$\LT_\r (f_i) = c_{i,j} \X^{\alpha_{i,\j_i}}.$
Then, thanks to Lemma~\ref{lem:on_peut_supposer_les_val_r_distincts},
 there is one equivalence
class of term orders
with respect to
$\Ts$
for each non empty
$C_\j.$

Being defined as a Minkowski
sum,
$\Ns (\Ts)$
is the convex hull of
the rays $\alpha_\j+\mathbb{R}_+ (1,0,\dots,0)$ for
$\alpha_\j := \sum_{i=1}^s \alpha_{i,j_i}$
and $\j \in \mathbf{J}.$
Thus, its vertices
are among the $\alpha_\j$'s.
Thanks to Lemma~\ref{lem:vertex_and_minimal_scalar_product},
 $\alpha_\j$
is a vertex of $\Ns (\Ts)$
if and only if there
is some 
$U=(1,u_1,\dots,u_n) \in \Q^{n+1}$
such that $\alpha_\j \cdot U$
is the unique minimum of the $\alpha \cdot U$'s
for $\alpha \in \Ns (\Ts).$
Consequently,
if $\j'=\j+(0,\dots,0,j_i',0,\dots,0)-(0,\dots,0,j_i,0,\dots,0)$
only differs from $\j$
on the coordinate $i$ (for some $j_i' \in D_{i,j_i} $),
then we can deduce from
$\alpha_\j \cdot U < \alpha_{\j'} \cdot U$
that 
$\alpha_{i,j_i} \cdot U < \alpha_{i,j_i'} \cdot U.$
It implies that 
$(u_1,\dots,u_n) \in C_\j.$
And the converse is true:
if $(u_1,\dots,u_n) \in C_\j$ then
$\alpha_\j$
is a vertex of 
$\Ns (\Ts).$

This is enough to conclude
that there is a one-to-one  
correspondance between
vertices
of $\Ns (\Ts)$
and equivalence classes of
term orders with respect
to $\Ts.$
\end{proof}

\section{Universal analytic Gröbner bases%
  \protect\footnotemark}
\label{sec:univ-analyt-grobn}

\footnotetext{A toy implementation of the algorithms in this section is available at: \href{https://gist.github.com/TristanVaccon}{https://gist.github.com/TristanVaccon}}


\subsection{Testing whether a set is a UAGB}

The results of Section~\ref{sec:newton-polytopes} are enough
to immediately provide us with a procedure for deciding whether a set is a UAGB (Algorithm~\ref{algo:test_uagb}).
\begin{prop}
$\Ts =(f_1,\dots,f_s)$ is a UAGB of 
$I=\left\langle \Ts \right\rangle$
if and only if for any $\r$
in the equivalence classes of term orders with respect
to $\Ts$, $\Ts$ is a GB of $I_\r.$
In particular, Algorithm~\ref{algo:test_uagb} is correct.
\end{prop}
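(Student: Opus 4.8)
The plan is to reduce the ``UAGB for all $\r$'' statement to the finitely many $\r$ coming from the equivalence classes, using Proposition~\ref{prop:Newton_polytope_and_equivalences_classes_of_term_orders}. The key observation is that whether $\Ts$ is a Gröbner basis of $I_\r$ depends on $\r$ only through the leading-term data $\LT_{\r,m}(f_i)$ of the generators and, more relevantly, through the initial forms $\init_\r(f_i)$. By Lemma~\ref{lem:val_r_distinct_gb}, for any $\r$ we can pass to an $\s$ defining a term order for $I$ with the same leading-term ideal and the same local Gröbner property for $\Ts$; and by Lemma~\ref{lem:on_peut_supposer_les_val_r_distincts} such an $\s$ can be chosen with all relevant supports having distinct $\s$-valuations, hence $\s$ lies in one of the chambers $C_\j$ of Proposition~\ref{prop:Newton_polytope_and_equivalences_classes_of_term_orders}. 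So it suffices to check the finitely many chambers.

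More concretely, the forward direction is immediate: if $\Ts$ is a UAGB, it is in particular an $\r$-local GB for every $\r$, hence for the chosen representatives. For the converse, fix an arbitrary $\r \in \Q^n$. We must show $\Ts$ is a GB of $I_\r$, i.e. that every element of $I_\r$ has leading term divisible by some $\LT_{\r,m}(f_i)$. Apply Lemma~\ref{lem:val_r_distinct_gb} with any tie-break $m$ to obtain $\s$ defining a term order for $I$ with $<_{\r,m}$ equivalent to $<_\s$ w.r.t. $I$ and with $\LT_\s(I) = \langle \init_\s(f) : f \in I\rangle$. By Proposition~\ref{prop:Newton_polytope_and_equivalences_classes_of_term_orders}, $\s$ lies in the same equivalence class w.r.t.\ $\Ts$ as one of the finitely many test points, so by hypothesis $\Ts$ is a GB of $I_\s$. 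Since the completion $I_\r$ is generated by the same polynomials as $I_\s$ and contains $I$ densely, and since the leading-term ideals match under the equivalence (Lemma~\ref{lem:dehomogenization_and_LT} is not needed, only equivalence w.r.t.\ $I$), $\Ts$ is also a GB of $I_\r$; that is, $\Ts$ is an $\r$-local GB of $I$. As $\r$ was arbitrary, $\Ts$ is a UAGB.

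The correctness of Algorithm~\ref{algo:test_uagb} then follows: the algorithm enumerates the vertices of the Newton polytope $\Ns(\Ts)$ (equivalently, by Proposition~\ref{prop:Newton_polytope_and_equivalences_classes_of_term_orders}, the equivalence classes of term orders w.r.t.\ $\Ts$), picks for each a representative $\r$ in the corresponding chamber $C_\j$, and tests whether $\Ts$ is a GB of $I_\r$ (e.g.\ via Buchberger's criterion with weak normal forms, as in \cite{CVV}); it returns true exactly when all these tests pass, which by the above is equivalent to $\Ts$ being a UAGB. The main subtlety — and the step I would be most careful with — is checking that ``$\Ts$ is a GB of $I_\r$'' really is constant on each equivalence class w.r.t.\ $\Ts$: this is where one needs that Buchberger's algorithm with WNF uses $\r$ only to determine leading terms (the argument already made in the proof of Lemma~\ref{lem:val_r_distinct_gb}), so that passing between two $\r$'s in the same class leaves every comparison, every S-polynomial reduction, and hence the final verdict unchanged.
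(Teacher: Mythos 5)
The paper does not actually provide a proof of this proposition (it is asserted to follow immediately from Section~3.2), so the comparison is against the implied argument. Your overall strategy is the right one, and you correctly flagged the crux: whether the property ``$\Ts$ is a GB of $I_\r$'' is constant on each equivalence class of term orders with respect to $\Ts$. However, your proposed justification of that step --- that two systems of log-radii in the same chamber induce the same comparisons throughout Buchberger's algorithm, ``so that every S-polynomial reduction and hence the final verdict is unchanged'' --- does not hold. Lying in the same chamber of $\Ns(\Ts)$ controls comparisons among terms of $\Supp(\Ts)$, and hence of pairs $m\,t_1$, $m\,t_2$ with a common monomial cofactor $m$; but during S-polynomial reduction one has to compare terms $m_1 t_1$ and $m_2 t_2$ with \emph{distinct} cofactors, and those comparisons need not be constant on the chamber. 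Concretely, take $f_1 = x^2 + y$, $f_2 = xy + 1$: both $\r_1=(1,1)$ and $\r_2=(1,0)$ lie in the chamber where $\LT(f_1)=x^2$ and $\LT(f_2)=xy$, yet $\spoly(f_1,f_2)=y^2-x$ has leading term $y^2$ for $\r_1$ and $x$ for $\r_2$. (There is also a smaller slip: Lemma~\ref{lem:val_r_distinct_gb} requires a local Gröbner basis as input, which you do not yet have for $\Ts$; what you want is Lemma~\ref{lem:on_peut_supposer_les_val_r_distincts} applied to $T=\Supp(\Ts)$, producing an interior point $\s$ equivalent to $\r$ with respect to $\Ts$.)

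The chamber-invariance is nevertheless true, but for a semantic rather than syntactic reason. If $\r_1,\r_2$ are equivalent w.r.t.\ $\Ts$ and $\Ts$ is a GB of $I_{\r_1}$, then $\langle\LM_{\r_2}(\Ts)\rangle = \langle\LM_{\r_1}(\Ts)\rangle = \langle\LM_{\r_1}(I)\rangle$, and the left-hand side is always contained in $\langle\LM_{\r_2}(I)\rangle$ since $\Ts\subseteq I$ (here one also uses $\LT_\r(I_\r)=\LT_\r(I)$ from~\cite{CVV4}). Because the standard monomials of both $\langle\LM_{\r_1}(I)\rangle$ and $\langle\LM_{\r_2}(I)\rangle$ form $K$-bases of $K[\X]/I$ by Macaulay's theorem, the inclusion of monomial ideals forces equality, so $\langle\LM_{\r_2}(\Ts)\rangle = \langle\LM_{\r_2}(I)\rangle$ and $\Ts$ is a GB of $I_{\r_2}$. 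You should replace the ``same comparisons in Buchberger'' argument with this one; the rest of your reduction to the finitely many vertices of $\Ns(\Ts)$ via Proposition~\ref{prop:Newton_polytope_and_equivalences_classes_of_term_orders} then goes through.
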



\begin{algorithm}
	\caption{\texttt{TestUAGB}}
	\label{algo:test_uagb}
	\begin{algorithmic}[1]
		\REQUIRE $F = \{f_1,\dots,f_s\} \in K[\X]$ generating $I \subseteq K[\X]$
		\ENSURE \texttt{True} if $F$ is a UAGB of $I$,
                otherwise $(\mathtt{False},\u)$ with $\u \in \QQ^{n}$ such that $F$ is not a $\u$-local GB of $I$
		\STATE Compute $N = \Ns (f_1,\dots,f_s)$;		
		\FOR{$\beta \in \left\lbrace \textrm{vertices of }N \right\rbrace$}
			\STATE Compute $U=(1,\u)$ characterizing $\beta$
				as in Lemma~\ref{lem:vertex_and_minimal_scalar_product} ;
			\STATE Compute $G_{\u},$ a $\u$-local  GB of $I$ ;
			\IF{ $\exists\, g \in G_{\u}$ not reducible modulo $F$ for the order $<_{\u}$}
				\RETURN $(\mathtt{False},\u)$ ;
			\ENDIF
		\ENDFOR
		\RETURN \texttt{True} ;
	\end{algorithmic}
\end{algorithm}

\subsection{Computing a UAGB}
\label{subsec:UAGB_computation}

We now show how to use that procedure to compute a UAGB.
To that end, we recall the following result from \cite{CVV4}.

\begin{thm}{\cite[Thm 7.6]{CVV4}} \label{thm:Terms_of_I_fini}
Let $I \subset K[\X]$ be an ideal.
Then the set $\Terms(I):=\{\LT(I_\r) \text{ for } \r \in \Q^n \}$
is finite. 
\end{thm}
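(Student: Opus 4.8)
The plan is to rephrase $\Terms(I)$ in terms of initial ideals for the Gauss valuation, and then to reduce the finiteness statement to the classical finiteness of the Gröbner fan through a double homogenization: first in the variables $\X$, then in the uniformizer $\pi$.

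For $\w \in \Q^n$ write $\init_\w(I) = \langle \init_\w(f) : f \in I\rangle \subseteq K[\X]$. The first step is to observe that every element of $\Terms(I)$ is such an initial ideal. Indeed, given $\r \in \Q^n$, Lemma~\ref{lem:val_r_distinct_gb} yields an $\s \in \Q^n$ defining a term order for $I$ with $\LT_{\r,m}(I) = \init_\s(I)$, and since $I_\r$ contains no leading term not already appearing in $I$ one gets $\LT(I_\r) = \init_\s(I)$. Hence
\[\Terms(I) \subseteq \bigl\{ \init_\w(I) : \w \in \Q^n \bigr\},\]
so it is enough to prove that the right-hand side is finite.

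Next I would reduce to the homogeneous case. Passing to the homogenization $I^* \subseteq K[\X,t]$, a direct computation with supports (equivalently, the initial-part counterpart of Lemma~\ref{lem:dehomogenization_and_LT}, from the homogenization dictionary of \cite[\S 3.3]{CVV4}) shows that $\init_\w(I)$ is the dehomogenization of $\init_{(\w,0)}(I^*)$, because $\val_{(\w,0)}(c\X^\alpha t^j) = \val_\w(c\X^\alpha)$. Thus $\{\init_\w(I) : \w \in \Q^n\}$ is the image, under dehomogenization, of a subset of $\{\init_{\w'}(I^*) : \w' \in \Q^{n+1}\}$, and we may assume henceforth that $I$ is homogeneous.

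The remaining core --- and the step I expect to contain essentially all the difficulty --- is to show that a homogeneous ideal has only finitely many valuation-initial ideals. The standard device is to degenerate the valuation into an extra grading: since $K$ is discretely valued with uniformizer $\pi$ and residue field $k$, one replaces each coefficient $c$ by $z^{\val(c)}\cdot\overline{c\pi^{-\val(c)}}$ and thereby attaches to $I$ an $\X$-homogeneous ideal $\widetilde I$ over $k$ in the variables $\X$ and $z$, built so that the Gauss-valuation weight $\w$ on $K[\X]$ becomes the ordinary weight vector $(-\w,1)$ on $k[\X,z]$ and $\init_\w(I)$ is recovered from the classical initial ideal $\init_{(-\w,1)}(\widetilde I)$ by setting $z=1$. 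The classical Gröbner fan theory (Mora--Robbiano) provides only finitely many ideals $\init_u(\widetilde I)$, so pulling back along $u = (-\w,1)$ gives finitely many $\init_\w(I)$, whence $\Terms(I)$ is finite. The technical points to handle carefully are the precise form of the $\pi$-homogenization --- it must be arranged so that a Gröbner basis of $I$ over $\OK$ homogenizes to generators of $\widetilde I$, i.e. so that the degeneration is flat and taking initial parts commutes with the specialization $z=1$ --- and the check that every Gauss weight is realized; this is exactly the construction of the Gröbner complex of Maclagan--Sturmfels, transported to the Tate-algebra setting. Everything else (the use of Lemma~\ref{lem:val_r_distinct_gb}, the homogenization in $\X$, and the invocation of the classical Gröbner fan) is routine bookkeeping.
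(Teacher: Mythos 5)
Your proposal takes a genuinely different route from the proof in \cite{CVV4} cited here. That proof reduces to the homogeneous case via $\X$-homogenization (as you also do), then relies on \cite[Lem.~7.2]{CVV4} to obtain a reduced polynomial local GB for each $\r$, and on Lemma~\ref{lem:GBR_for_two_term_orders} to show this reduced GB depends only on $\LT(I_\r)$; finiteness then follows from a non-constructive Noetherian-type argument, with no passage to the residue field. Your first two steps --- using Lemma~\ref{lem:val_r_distinct_gb} to rewrite $\Terms(I)$ as a set of valuation-initial ideals, and homogenizing in $\X$ --- are sound and align with the paper. The third step is where a real gap appears.

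The assignment $c \mapsto z^{\val(c)}\cdot\overline{c\pi^{-\val(c)}}$ is not additive: with $\val(\pi)=1$, both $1$ and $\pi-1$ have valuation $0$ and map to $1$ and $-1$ respectively, summing to $0$, whereas their sum $\pi$ maps to $z$. So this recipe is not a ring homomorphism, the image of $I$ is not an ideal of $k[\X,z]$, and ``the ideal $\widetilde I$'' is simply not defined as stated. To make the degeneration work one must either build a Rees-type algebra over the $\pi$-adic filtration of $\OK[\X]$, or define $\widetilde I$ from a fixed generating set and then prove that initial ideals upstairs and downstairs agree for all weights --- and that last step requires the chosen generators to behave like a Gröbner basis for every relevant order, which is precisely what is at stake. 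You have located the real difficulty (arranging the degeneration so that taking initial parts commutes with the specialization $z=1$ for every $\w$), but labelling it ``routine bookkeeping'' is where the argument stops short: the Maclagan--Sturmfels finiteness for the Gröbner complex, and the Chan--Maclagan reduction to the residue field, each rest on a careful flatness or Hilbert-function argument rather than on a bare application of the Mora--Robbiano Gröbner fan to a naively defined $\widetilde I$. As written, $\widetilde I$ and its claimed control of $\{\init_\w(I)\}$ are not established, and this is the crux of the theorem rather than a peripheral check.
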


The proof of the previous theorem relies on
the following lemma, which we also need.

\begin{lem} \label{lem:GBR_for_two_term_orders}
  Let $I \subset K[\X]$ be an ideal.
  Let $\leq_1$ and $\leq_2$ be two term orders such that $\LT_{\leq_1}(I)=\LT_{\leq_2}(I)$.
  Let $G \subset I$ be a reduced (local) GB of $I$ w.r.t. $\leq_1$.
  Then $G$ is also a reduced GB of $I$ w.r.t. $\leq_2.$
\end{lem}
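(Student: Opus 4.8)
The goal is to show that a reduced local Gröbner basis $G$ of $I$ with respect to $\leq_1$ remains a reduced Gröbner basis with respect to $\leq_2$, under the hypothesis $\LT_{\leq_1}(I) = \LT_{\leq_2}(I)$. The first step is to observe that each $g \in G$ has, by the reducedness hypothesis, exactly one term lying in $\LT_{\leq_1}(I)$, namely $\LT_{\leq_1}(g)$; all other terms of $g$ lie outside $\LT_{\leq_1}(I)$. Since $\LT_{\leq_1}(I) = \LT_{\leq_2}(I)$ by hypothesis, this same term $\LT_{\leq_1}(g)$ is the unique term of $g$ that lies in $\LT_{\leq_2}(I)$. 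But $g \in I$, so $\LT_{\leq_2}(g)$ necessarily lies in $\LT_{\leq_2}(I)$; hence $\LT_{\leq_2}(g)$ must be that unique term, i.e. $\LT_{\leq_2}(g) = \LT_{\leq_1}(g)$.

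Once we know $\LT_{\leq_2}(g) = \LT_{\leq_1}(g)$ for all $g \in G$, the set $\{\LT_{\leq_2}(g) : g \in G\}$ equals $\{\LT_{\leq_1}(g) : g \in G\}$, which generates $\LT_{\leq_1}(I) = \LT_{\leq_2}(I)$ because $G$ is a GB w.r.t. $\leq_1$. Therefore $G$ is a Gröbner basis of $I$ with respect to $\leq_2$. It remains to check that $G$ is \emph{reduced} w.r.t. $\leq_2$: the leading coefficients are monic (unchanged, since leading terms are unchanged); the leading terms still minimally generate $\LT_{\leq_2}(I)$ (same set of leading terms, same monomial ideal); and for each $g$, the only term of $g$ in $\LT_{\leq_2}(I)$ is $\LT_{\leq_2}(g)$, which we already established in the first step. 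All three conditions for a reduced GB hold.

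The only subtlety worth care is the analytic/local aspect: here ``GB'' may mean a local Gröbner basis, i.e. a GB of the Tate completion $I_{\r}$ comprised of polynomials, and $\LT$ may be a leading term for an order $<_{\s}$ coming from convergence log-radii. But the argument above uses only the abstract properties of leading terms (each element of $I$ has a leading term in $\LT(I)$, and reducedness means exactly one term per generator lies in $\LT(I)$), together with Lemma~\ref{lem:dehomogenization_and_LT}-style compatibility that is already built into the framework; in particular the definition of reduced GB recalled in Section~\ref{sec:Setting} applies verbatim. So no genuine obstacle arises — the one point to state carefully is that $G \subseteq I$ guarantees $\LT_{\leq_2}(g) \in \LT_{\leq_2}(I)$, which is what forces $\LT_{\leq_2}(g)$ to coincide with the distinguished term $\LT_{\leq_1}(g)$.
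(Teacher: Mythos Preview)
Your argument is correct and follows essentially the same route as the paper: use reducedness to see that $\LT_{\leq_1}(g)$ is the unique term of $g$ lying in $\LT_{\leq_1}(I)=\LT_{\leq_2}(I)$, then use $g\in I$ to force $\LT_{\leq_2}(g)=\LT_{\leq_1}(g)$, and conclude. The paper's proof is the same idea compressed to three lines; your version spells out the GB property and the three reducedness conditions explicitly (the aside about Lemma~\ref{lem:dehomogenization_and_LT} is unnecessary here, but harmless).
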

\begin{proof}
Let $g \in G.$
Since $G$ is reduced and $\LT_{\leq_1}(I)=\LT_{\leq_2}(I)$, then the only
term of $g$ belonging to $\LT_{\leq_2}(I)$ is $\LT_{\leq_1}(g)$.
Thus $\LT_{\leq_1}(g)=\LT_{\leq_2}(g)$, and $G$ is a reduced GB of $I$ w.r.t. $\leq_2.$
\end{proof}

Unlike in the classical case, it is not in general possible to guarantee that any polynomial ideal admits a reduced local Gröbner basis for any convergence radii.
However, for homogeneous ideals, \cite[Lem.~7.2]{CVV4} guarantees that there is a reduced local GB comprised only of polynomials, which can then be computed using any GB algorithm from \cite{CVV, CVV2, CVV4, Vaccon:2015, Vaccon:2017,  Vaccon:2018, CM}.

The proofs in \cite{CVV4} relied on homogenization and dehomogenization of ideals and were not constructive.
In this paper, we replace computations in the homogenized ideal by computations in the ideal spanned by the homogenization of its generators.

\begin{notn}
Let $F = (f_1,\dots,f_s)  \in K[\X]^{s}$,
we define
\[F^h = (f_1^{*},\dots,f_s^{*}). \]
\end{notn}

In general, $\langle F^{h}\rangle \subsetneq I^*$
but $I=\langle F^h \rangle_*=(I^*)_*.$
By \cite[Lem 7.5]{CVV4},
the dehomogenization of a GB of 
$(I^*)_{(\r,0)}$ made of homogeneous polynomials
of $K[\X,t]$ is a polynomial GB of $I_\r$.
This result is still true for $\langle F^h \rangle$, by the following lemma. 

\begin{lem}
Let $F = (f_1,\dots,f_s)  \in K[\X]^{s}$, $I = \langle F \rangle$ and
$\r \in \Q^n.$
Let $(h_1,\dots,h_s)$ be a finite Gröbner basis of
$\langle F^h \rangle_{(\r,0)} \subset K \left\lbrace \X,t ; \r,0 \right\rbrace$
made of homogeneous polynomials of $\langle F^h \rangle$ (hence in $K[\X,t]$).
Then $(h_{1,*},\dots,h_{s,*})$ is an $\r$-local GB  of $I$. \label{lem:dehomogenization_of_GB}
\end{lem}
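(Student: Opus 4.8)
The plan is to reduce this statement to \cite[Lem.~7.5]{CVV4}, which handles the analogous claim for the full homogenized ideal $I^*$, by showing that the hypotheses transfer. The key observation is that although $\langle F^h\rangle \subsetneq I^*$ in general, we do have $(\langle F^h\rangle)_* = I$ and $(\langle F^h\rangle)_{(\r,0)} \subseteq (I^*)_{(\r,0)}$, and more importantly the two ideals have the same dehomogenization at the level of Tate completions. Concretely, first I would verify that $h_1,\dots,h_s \in \langle F^h\rangle \subseteq I^*$, so each $h_i$ is a homogeneous polynomial in $K[\X,t]$ lying in $I^*$, hence $h_{i,*} \in (I^*)_* = I$; this already shows $(h_{1,*},\dots,h_{s,*})$ is a family of polynomials in $I$.

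Next, the real content: I must show that every $f \in I_\r$ has its leading term $\LT_{\r,m}(f)$ divisible by some $\LT_{\r,m}(h_{i,*})$. Given $f \in I_\r$, I would homogenize: since $f \in I_\r = (\langle F\rangle)_\r$, one can write $f$ as a convergent combination $f = \sum_k a_k g_k$ with $g_k \in \langle F\rangle$ and control on the Gauss valuations, and then consider $f^*$ (suitably, after clearing the denominator $t$-powers) as an element of $\langle F^h\rangle_{(\r,0)}$. The point is that $f^* \in \langle F^h\rangle_{(\r,0)}$ because $F^h$ generates this ideal and the homogenization of a combination of the $g_k$'s is a combination of the $f_i^*$'s up to powers of $t$ — this is exactly the elementary identity $(g)^* \cdot t^{e} = \sum (\text{monomials in } t) \cdot f_i^*$ that underlies \cite[Lem.~7.5]{CVV4}. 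Since $(h_1,\dots,h_s)$ is a GB of $\langle F^h\rangle_{(\r,0)}$ for $<_{(\r,0),m}$, some $\LT_{(\r,0),m}(h_i)$ divides $\LT_{(\r,0),m}(f^*)$. Applying $(\cdot)_*$ to this divisibility and invoking Lemma~\ref{lem:dehomogenization_and_LT} — which tells us $\LT_{(\r,0),m}(h_i)_* = \LT_{\r,m}(h_{i,*})$ and $\LT_{(\r,0),m}(f^*)_* = \LT_{\r,m}(f)$ — gives that $\LT_{\r,m}(h_{i,*})$ divides $\LT_{\r,m}(f)$, since dehomogenization is a ring homomorphism and hence preserves monomial divisibility. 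That establishes the Gröbner basis property.

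The step I expect to be the main obstacle is the claim that $f^* \in \langle F^h\rangle_{(\r,0)}$, i.e.\ that the homogenization behaves well with the \emph{completion} rather than just with the polynomial ideal. One has to be careful that convergence log-radii $(\r,0)$ are chosen so that the completion $\langle F^h\rangle_{(\r,0)}$ genuinely contains the homogenizations of elements of $I_\r$; the role of the $0$ in the $t$-slot, and the fact (Lemma~\ref{lem:dehomogenization_and_LT}, Definition~\ref{defn:term_order_on_homogenization}) that this makes dehomogenization compatible with leading terms, is what makes it work, but spelling out the Gauss-valuation estimate for the homogenized convergent series — checking that $\val_{(\r,0)}$ of the homogenized combination still tends to infinity — requires a short argument. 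Once that is in place, the rest is a formal transfer through Lemma~\ref{lem:dehomogenization_and_LT}, exactly parallel to the proof of \cite[Lem.~7.5]{CVV4} with $I^*$ replaced by $\langle F^h\rangle$, and in fact one may simply cite that lemma after remarking that its proof only uses that $F^h$ generates the ambient ideal whose completion is being considered.
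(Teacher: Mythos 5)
Your plan has a real gap in the central step. You propose to take an arbitrary $f \in I_\r$ (a Tate series), homogenize it, and land in $\langle F^h\rangle_{(\r,0)}$; you yourself flag this as ``the main obstacle''. The obstacle is genuine and is not just a matter of a ``short Gauss-valuation estimate'': the homogenization map $(\cdot)^*$ is a map on \emph{polynomials}, defined via $\deg f$, and a non-polynomial Tate series has unbounded degree, so $f^*$ simply does not exist as an element of $K\{\X,t;\r,0\}$. No amount of multiplying by $t$-powers fixes this. So the divisibility transfer you set up cannot be carried out for general $f\in I_\r$ along the route you sketch, and the parenthetical remark that one could ``simply cite [Lem.~7.5] after remarking that its proof only uses that $F^h$ generates the ambient ideal'' is begging exactly the question the lemma is supposed to settle, namely whether $\langle F^h\rangle$ (strictly smaller than $I^*$ in general) is good enough.

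The paper sidesteps this entirely with a reduction you did not use: by \cite[Cor.~5.4]{CVV4}, to prove that $(h_{1,*},\dots,h_{s,*})$ is a GB of $I_\r$ it suffices to check that for every \emph{polynomial} $f \in I$ (not every series in $I_\r$), $\LT_\r(f)$ is divisible by some $\LT_\r(h_{i,*})$. This is the consequence of $I$ being dense in $I_\r$ and of $I_\r$ having no more leading terms than $I$, already recalled in Section~\ref{sec:polyn-overc-ideals}. Once one restricts to polynomials, the rest matches what you wrote: since $I = \langle F^h\rangle_*$, there is a homogeneous polynomial $g \in \langle F^h\rangle$ (of the form $t^e f^*$, via the elementary identity you quote) with $g_* = f$; then $g \in \langle F^h\rangle \subset \langle F^h\rangle_{(\r,0)}$, some $\LT_{(\r,0)}(h_i)$ divides $\LT_{(\r,0)}(g)$, and Lemma~\ref{lem:dehomogenization_and_LT} transports the divisibility down to $\LT_\r(h_{i,*}) \mid \LT_\r(f)$. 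Your first paragraph (that $h_{i,*} \in I$) and your use of Lemma~\ref{lem:dehomogenization_and_LT} are both correct and agree with the paper; the missing ingredient is the polynomial-only reduction via \cite[Cor.~5.4]{CVV4}.
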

\begin{proof}
Firstly, due to being a dehomogenization of homogeneous elements of
$\langle F^h \rangle$,
the $h_{i,*}$'s are in $I$
(it is enough to dehomogenize an homogeneous
combination of the $f_i^*$).

Secondly,
by \cite[Cor 5.4]{CVV4}, it is enough
to check that for any $f \in I$,
$\LT_\r(f)$ is divisible by one of the $\LT_\r(h_{i,*})$'s.

Let $f \in I.$ 
Since $I = \langle F^h \rangle_*$, there is some homogeneous
polynomial $g \in \langle F^h \rangle$ such that
$g_* = f.$
Then $g \in \langle F^h \rangle \subset (\langle F^h \rangle)_{(r,0)}$
so there is some $i$ such that $\LT_{(\r,0)}(h_i)$
divides $\LT_{(\r,0)}(g).$
Then thanks to Lemma~\ref{lem:dehomogenization_and_LT},
$\LT_\r (f)=\LT_{(\r,0)}(g)_*$,
$\LT_\r (h_{i,*})=\LT_{(\r,0)}(h_i)_*$,
and monomial divisibility is preserved by dehomogenization.
So $\LT_\r (h_{i,*})$ divides $\LT_\r (f)$
and the proof is complete.
\end{proof}

\begin{rmk}
One needs to be careful that the dehomogenization of 
a \textit{reduced} Gröbner basis of $\langle F^h \rangle_{(r,0)}$
(made of homogeneous polynomials of $\langle F^h \rangle$)
need not even be \textit{minimal},
one leading monomial may divide another after dehomogenization
(\textit{e.g} $y t^3$ and $x^2yt$).
\end{rmk}

We can now provide an algorithm
for computing UAGBs.

\begin{algorithm}[H]
	\caption{UAGB}
	\label{algo:UAGB_computation}
	\begin{algorithmic}[1]
		\REQUIRE $\Ts \in K[\X]^s,$ generating $I \in K[\X].$
		\ENSURE $G$ a UAGB for $I.$				
		\STATE	$G:=\left\{   f^* \textrm{ for } f \in \Ts \right\}$ (and define $J=\langle F^h \rangle$) ;
		\WHILE{ $\TestUAGB(G)$ is not  \texttt{True}}	
		\STATE $\r :=$ system of log radii such that $G$ is not a GB of $J_\r$ (as produced by $\TestUAGB$);
		\STATE $H:=\mathtt{ReducedGB}(G,\r)$ \label{algoline:reducedGB}; \hfill // Polynomial reduced local GB
		\STATE $G := G \cup H$ ;
		\ENDWHILE
		\RETURN $\{ g_{*} \text{ for } g \in G\}$ ;
	\end{algorithmic}
\end{algorithm}

\begin{thm}
Algorithm \ref{algo:UAGB_computation} is correct
and computes a UAGB in finite time.
Furthermore, if the input polynomials are homogeneous, the UAGB contains a reduced Gröbner basis for all orders.
\label{theo:proof_of_UAGB_algorithm}
\end{thm}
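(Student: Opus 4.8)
The plan is to establish three things in order: termination of the while loop, correctness of the output (i.e.\ that the returned set is a UAGB), and the extra statement about reduced Gröbner bases in the homogeneous case. For termination, I would track the sequence of leading-term ideals $\LT_{(\r,0)}(J_\r)$ arising from the homogenized ideal $J = \langle F^h\rangle$ as the loop progresses. Each iteration finds a system of log-radii $\r$ for which $G$ fails to be a local GB of $J_\r$; after adjoining a reduced local GB $H$ of $J_\r$, the set $G$ does become a local GB of $J_\r$, and more importantly its leading-term ideal for that $\r$ strictly grows unless $G$ was already a GB there --- which it wasn't, by the choice of $\r$. The key point is that Theorem~\ref{thm:Terms_of_I_fini} (applied to $J$, or rather its analogue for $\Terms$ of the homogenized object) bounds the number of distinct leading-term ideals that can ever appear, so only finitely many distinct ``states'' $\LT(J_\r)$ exist. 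Combined with Lemma~\ref{lem:GBR_for_two_term_orders} --- a reduced GB for one order stays a reduced GB for any equivalent order --- once $G$ contains a GB for every leading-term ideal that $J$ realizes, the loop halts. I would make this precise by arguing that at each step the pair (number of vertices of the Gröbner fan of $J$ not yet ``covered'' by $G$) strictly decreases, or equivalently that $G$ captures one more equivalence class of term orders w.r.t.\ $J$ in the sense of Proposition~\ref{prop:Newton_polytope_and_equivalences_classes_of_term_orders}.

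For correctness, the loop exits only when $\TestUAGB(G)$ returns \texttt{True}, so $G$ is a UAGB of $J = \langle F^h\rangle$: for every $\r$, $G$ is an $(\r,0)$-local GB of $J$ comprised of polynomials. By construction $G$ consists of homogeneous polynomials in $K[\X,t]$ (the homogenizations $f^*$ and, inductively, the reduced local GB elements $H$, which by \cite[Lem.~7.2]{CVV4} can be taken homogeneous). Now apply Lemma~\ref{lem:dehomogenization_of_GB}: since $G$ is a finite Gröbner basis of $\langle F^h\rangle_{(\r,0)}$ made of homogeneous polynomials of $\langle F^h\rangle$, its dehomogenization $\{g_* : g \in G\}$ is an $\r$-local GB of $I = \langle F^h\rangle_* = \langle F\rangle$. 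This holds for every $\r \in \Q^n$, which is exactly the definition of a UAGB of $I$. I should also note the dehomogenized set lies in $I$, which is immediate from the remark that dehomogenizations of homogeneous elements of $\langle F^h\rangle$ land in $I$.

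For the final sentence: if the input $F$ is homogeneous, then $F^h = F$ up to the trivial embedding, and $J = \langle F \rangle = I$ (viewing $t$ as an extra homogeneous variable or simply noting homogenization is the identity on homogeneous polynomials). At loop exit, $G$ is a UAGB of $I$; for any term order, Lemma~\ref{lem:val_r_distinct_gb} supplies an $\s$ defining that term order for $I$ with $G$ a GB of $I$ w.r.t.\ $<_\s$, and by \cite[Lem.~7.2]{CVV4} the reduced local GB for $<_\s$ is polynomial. One then needs to check this reduced GB is \emph{contained in} $G$ rather than merely generated by it --- this follows because at the iteration where the class of $<_\s$ was ``covered,'' the algorithm adjoined precisely $\mathtt{ReducedGB}(G,\s')$ for an equivalent $\s'$, and by Lemma~\ref{lem:GBR_for_two_term_orders} that reduced GB is the reduced GB for $<_\s$ as well.

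I expect the main obstacle to be the termination argument, specifically making rigorous that the homogenized ideal $J = \langle F^h\rangle$ --- which is \emph{not} equal to $I^*$ in general --- still has only finitely many leading-term ideals across all $\r$. Theorem~\ref{thm:Terms_of_I_fini} is stated for arbitrary polynomial ideals, so it does apply to $J$ directly; the subtlety is that the term orders used in $\TestUAGB$ are the restricted orders of the form $<_{(\u,0)}$ (the last coordinate fixed to $0$), which correspond only to vertices of $\Ns(F^h)$ lying in the hyperplane $t$-log-radius $=0$, so I must confirm that covering exactly those classes suffices for the dehomogenization argument --- and indeed it does, because Lemma~\ref{lem:dehomogenization_of_GB} only ever needs $(\r,0)$-local GBs. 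A secondary delicate point is the ``contains a reduced GB'' claim: one must ensure the reduced local GB computed at some iteration is not later spoilt, which is safe because adding more polynomials to $G$ never changes which polynomials already present are a reduced GB for a fixed leading-term ideal, by Lemma~\ref{lem:GBR_for_two_term_orders}.
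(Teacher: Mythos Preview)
Your proposal is correct and follows essentially the same route as the paper: apply Theorem~\ref{thm:Terms_of_I_fini} to $J = \langle F^h\rangle$ to get finitely many leading-term ideals, use Lemma~\ref{lem:GBR_for_two_term_orders} to show that each failed $\TestUAGB$ call forces the next reduced GB to cover a genuinely new class, and then dehomogenize via Lemma~\ref{lem:dehomogenization_of_GB}. One small misreading worth flagging: when $\TestUAGB$ is invoked on $G \subset K[\X,t]$, it ranges over \emph{all} log-radii in $\QQ^{n+1}$ (the Newton polytope is taken in the full $(n{+}1)$-variable ring), not only those of the form $(\u,0)$ --- so the ``restricted orders'' obstacle you anticipate does not actually arise, and at loop exit $G$ is a full UAGB of $J$ in $K[\X,t]$, which makes the dehomogenization step immediate.
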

\begin{proof}
Thanks to Theorem~\ref{thm:Terms_of_I_fini}, if $J= \langle F^h \rangle,$
there is an integer $t$ and a finite set of term
orders $\leq_1,\dots,\leq_t$
such that $\Terms(J)$ is given by $\{\LT(J_{\leq_1}),\dots, \LT(J_{\leq_t})\}$.
Let us assume that $G$ contains a reduced GB 
w.r.t. each of the orders
$\leq_1,\dots,\leq_l$ for some $l\leq t.$

Let us assume that $\TestUAGB(G)$ fails because $G$ is not an $\r$-local GB of $I$ for some $\r.$
Then, thanks to Theorem~\ref{thm:Terms_of_I_fini} and Lemma~\ref{lem:GBR_for_two_term_orders}, $\LT(J_{\leq_i}) \neq \LT(J_{\r})$ for any integer $i$, 
$1 \leq i \leq l$
and $\LT(J_{\leq_j}) = \LT(J_{\r})$ for some integer $j$, 
$l < j \leq t$.
Up to renumbering, we may assume that $j=l+1.$
Let $H$ be the reduced GB of $J$ for $\r.$
Then $ G \cup H$
contains a reduced GB 
for the orders
$\leq_1,\dots,\leq_{l+1}.$

We then prove by induction that,
after at most $t$ calls to \texttt{ReducedGB}
and to \texttt{TestUAGB},
the algorithm outputs $G$ such that $G$
contains a reduced GB of $J$ for each of
$\leq_1,\dots,\leq_t$
and hence, is a UAGB of $J$.

Finally, thanks to Lemma~\ref{lem:dehomogenization_of_GB},
 the dehomogenization of $G$ is a UAGB of $I=J_*$.
 If the input polynomials are homogeneous, the homogenization and dehomogenization steps are trivial, and the property that the UAGB contains a reduced GB for all orders is preserved.
\end{proof}

\begin{rmk}
From the proof of Theorem~\ref{theo:proof_of_UAGB_algorithm},
Algorithm \ref{algo:UAGB_computation}
needs at most $\#\Terms(J)$ loops
to compute a UAGB.
Each loop may however cause many GB computations
as it is unclear how the edges of the Newton polytopes
vary along the computation. 
\end{rmk}

\subsection{Examples}

\begin{expl}
Let $F=[x-7y,y-7y^2]$ in $\mathbb{Q}_7[x,y].$
Then in $\mathbb{Q}_7[x,y,t],$ one finds that
$F^h$ is not a GB for the weight
$[0,2,0].$ The corresponding reduced GB
will add the polynomial
$ x^2 - xt$ and $G_h=[x-7y,yt-7y^2,x^2 - xt]$
is then a UAGB for $\left\langle F^h \right\rangle.$
Thus, $G=[x-7y,y-7y^2,x^2 - x]$ is a UAGB of $\left\langle F \right\rangle.$
\end{expl}

\begin{rmk}
No 
finite approximate interreduction $\overline{F}$ of
$[x-7y,y-7y^2]$ is enough to be a UAGB.
\end{rmk}

\section{Tropical geometry}
\label{sec:tropical_geometry}
\subsection{Analytical tropical varieties}

In this section, we show that tropical geometry on Tate polynomial ideals specializes that of classical polynomial ideals.
In particular, the results of Section~\ref{sec:univ-analyt-gb} give us a Tate analogue of the Gröbner fan, and allow us to generalize the results of~\cite{Ren2015,MR2020} for computing tropical varieties in $\KTX$, to any Tate algebra.

First, we recall the classical notions of tropical geometry, and state their natural generalization to Tate algebras.

\begin{defn}
  Let $\mathbf{w}=(w_0,\dots,w_n) \in \RR_{<0}\times \RR^{n}$ be a system of weights.
  For a monomial $m = X_1^{\alpha_{1}}\cdots X_{n}^{\alpha_{n}}$ and $c \in K$, we define its weighted degree 
  \begin{equation}
    \label{eq:16}
    \deg_{\w}(c X_1^{\alpha_{1}}\cdots X_{n}^{\alpha_{n}}) = w_{0}\val(c)  +  \sum_{i=1}^{n} w_{i}\alpha_{i}.
  \end{equation}
  
  For $f \in K[\X]$, let $\deg_{\w}(f) = \max(\deg_{\w}(t) : t \in \Supp(f))$.
  We define the initial form of $f$ as
  \begin{equation}
    \label{eq:5}
    \init_{\w}(f) = \sum \left\{t : t \in \Supp(f), \deg_{\w}(t) = \deg_{\w}(f)\right\}.
  \end{equation}

  Let $I \subset K[\X]$ be an ideal.
  Then $\init_{\w}(I)$, the initial ideal of $I$ with respect to a system of weights $\w$, is the ideal spanned by all $\init_{\w}(f)$ for $f$ in $I$.
  The \emph{tropical variety} associated to $I$ is then defined as
  \begin{equation}
    \label{eq:2}
    \Vtrop(I) = \{\w \in \RR_{<0} \times \RR^{n} : \text{$\init_{\w}(I)$ does not contain a monomial}\}.
  \end{equation}

  We say that the system of weights $\w$ and the system of log-radii $\r$ are \emph{compatible} if
  \(
    \r = -\left( \frac{w_1}{w_0}, \dots, \frac{w_{n}}{w_0} \right).
    \)
  Conversely, given a system of log-radii $\r$, the system of weights $(-1,r_1,\dots,r_{n})$ is compatible with $\r$.
  The definitions above extend naturally to series and ideals in $\KX[\r]$ by restricting to systems of weights which are compatible with $\r$.
\end{defn}

\begin{rmk}
  In particular, $\Vtrop(I_{\r})$ is either empty or a half-line formed of all the systems of weights compatible with $\r$.
\end{rmk}

If the systems of weights $\w$ and the system of log-radii $\r$ are compatible, then for any term $t$,
\begin{equation}
  \label{eq:4}
  \deg_{\w}(t) = w_{0}\val_{\r}(t).
\end{equation}
This implies that $\deg_{\w}$ is a graduation: for any terms $t,t'$, $\deg_{\w}(t+t') \leq \max(\deg_{\w}(t),\deg_{\w}(t'))$ with equality if $\deg_{\w}(t) \neq \deg_{\w}(t')$, and $\deg_{\w}(tt') = \deg_{\w}(t)+\deg_{\w}(t')$.

The main result of this section is the fact that the tropical variety associated to $I$ is the \emph{union} of the tropical varieties of all its completions $I_{\r}$.

\begin{lem}
  Let $\w$ be a system of weights, let $\r$ be the compatible system of convergence radii and let $\leq_{m}$ be a monomial order.
  Let $f \in \KX[\r]$.
  Then:
  \begin{enumerate}
    \item $\init_{\w}(f) = \init_{\r}(f)$, and in particular it is a polynomial;
    \item $ \LT_{\leq_{m}}(\init_{\w}(f)) = \LT_{\r,\leq_{m}}(f)$;
  \end{enumerate}
\end{lem}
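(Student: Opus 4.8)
The plan is to unwind both statements directly from the definitions, using the compatibility relation $\deg_{\w}(t) = w_{0}\val_{\r}(t)$ from~\eqref{eq:4} together with the fact that $w_{0} < 0$.

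\textbf{Part (1).} First I would observe that the compatibility identity $\deg_{\w}(t) = w_{0}\val_{\r}(t)$ holds for every term $t$, and that since $w_{0}<0$, the map $x \mapsto w_{0}x$ is strictly decreasing on $\RR$. Hence for terms $t, t'$ we have $\deg_{\w}(t) = \deg_{\w}(t')$ if and only if $\val_{\r}(t) = \val_{\r}(t')$, and more precisely $\deg_{\w}(t) \geq \deg_{\w}(t')$ iff $\val_{\r}(t) \leq \val_{\r}(t')$. Writing $f = \sum_{\alpha} c_{\alpha}\X^{\alpha} \in \KX[\r]$, the maximum $\deg_{\w}(f) = \max_{t \in \Supp(f)} \deg_{\w}(t)$ is attained precisely on the terms achieving the minimum of $\val_{\r}$, i.e. on $\Supp_{\r}(f)$. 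There is one subtlety to address: the sum in~\eqref{eq:5} defining $\init_{\w}(f)$ ranges over a possibly infinite support, so I should note that since $\val_{\r}(c_{\alpha}\X^{\alpha}) \to +\infty$ as $|\alpha| \to \infty$, only finitely many terms attain the minimal $\r$-valuation; hence $\Supp_{\r}(f)$ is finite, the maximum $\deg_{\w}(f)$ is actually attained, and $\init_{\w}(f) = \sum_{\alpha \in \Supp_{\r}(f)} c_{\alpha}\X^{\alpha} = \init_{\r}(f)$, which is a polynomial by the remark following the definition of the initial part.

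\textbf{Part (2).} Here I would start from part (1), which gives $\init_{\w}(f) = \init_{\r}(f)$. Now $\LT_{\leq_{m}}(\init_{\w}(f))$ is, by definition of the leading term with respect to a classical monomial order, the $\leq_{m}$-largest term of the polynomial $\init_{\r}(f)$. On the other hand, $\LT_{\r,\leq_{m}}(f)$ is by definition the $<_{\r,m}$-maximal term of $f$: among all terms of $f$ it first selects those of minimal $\val_{\r}$ — which is exactly $\Supp_{\r}(f)$, i.e. the support of $\init_{\r}(f)$ — and then among those picks the $\leq_{m}$-largest one. These two selection procedures yield the same term, so the two expressions agree. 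The only thing to check is that $\init_{\r}(f)$ is nonzero whenever $f$ is, so that both sides are well-defined; this is immediate since $\val_{\r}(f) < +\infty$ means some term attains it.

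I do not expect any serious obstacle: the statement is essentially a bookkeeping exercise translating the weight language into the log-radius language via~\eqref{eq:4}. The one point requiring a little care — and the closest thing to a genuine step — is the finiteness argument in Part (1) ensuring that the maximum defining $\deg_{\w}(f)$ is attained despite the infinite support of a Tate series; once that is in place, everything else is a direct comparison of definitions.
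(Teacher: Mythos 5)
Your proof is correct and takes essentially the same route as the paper: both reduce to the compatibility identity $\deg_{\w}(t) = w_{0}\val_{\r}(t)$, observe that $w_{0}<0$ reverses the order so that maximal $\w$-degree corresponds to minimal Gauss valuation, and then read off the claim about leading terms from the definition of $<_{\r,m}$. Your finiteness observation is exactly what the paper compresses into the phrase "the convergence properties in Tate series," so you have just made the same argument more explicit.
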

\begin{proof}
  By compatibility between the system of weights and the convergence log-radii, $\val_{\r}(a \X^{\i}) = w_{0}^{-1}\deg_{\w}(a \X^{\i})$, and $\val_{\r}(f) = w_0^{-1}\deg_{\w}(f)$.
  So $\init_{\w}(f)$ is the sum of all terms with minimal Gauss valuation in the support of $f$, which is by definition $\init_{\r}(f)$.
  The rest follows from the convergence properties in Tate series and the definition of the term order.
\end{proof}

\begin{thm}
  \label{th:trop-local}
  Let $I \subseteq K[\X]$ be an ideal.
  Let $\w \in \RR_{<0}\times \RR^{n}$ be a system of weights, and let $\r = -(w_{1}/w_{0},\dots,w_{n}/w_0) \in \RR^{n}$ be the compatible system of convergence log-radii.
  Let $I_{\r} \subseteq \KX[\r]$ be the completion of $I$.
  Then
  \begin{equation}
    \label{eq:6}
   \init_{\w}(I) = \init_{\w}(I_{\r}) \cap K[\X]
  \end{equation}
  and in particular, $\w \in \Vtrop(I)$ if and only if $\w \in \Vtrop(I_{\r})$.
  Globally, 
  \begin{equation}
    \label{eq:18}
    \Vtrop(I) = \bigcup_{\s \in \RR^{n}} \Vtrop(I_{\s}).
\end{equation}
\end{thm}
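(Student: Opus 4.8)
The plan is to prove the pointwise statement~\eqref{eq:6} first, and then deduce the global identity~\eqref{eq:18} as an easy consequence. For~\eqref{eq:6}, the inclusion $\init_{\w}(I) \subseteq \init_{\w}(I_{\r}) \cap K[\X]$ is essentially formal: every generator $\init_{\w}(f)$ of $\init_{\w}(I)$, for $f \in I$, is also of the form $\init_{\w}(f)$ for $f \in I_{\r}$ (since $I \subseteq I_{\r}$), and by the preceding lemma $\init_{\w}(f) = \init_{\r}(f)$ is a polynomial, hence lies in $\init_{\w}(I_{\r}) \cap K[\X]$. The reverse inclusion is the substantive direction. Here I would argue via Gröbner bases: fix a tie-breaking monomial order $\leq_{m}$ and let $G \subseteq I$ be a polynomial $\r$-local Gröbner basis of $I$ (which exists by \cite{CVV4}, as recalled in Section~\ref{sec:polyn-overc-ideals}). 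Using item (2) of the preceding lemma, $\LT_{\r,m}(g) = \LT_{\leq_m}(\init_{\w}(g))$ for each $g \in G$, and since $G$ generates $I_{\r}$ as a Gröbner basis, every $f \in I_{\r}$ has $\LT_{\r,m}(f)$ divisible by some $\LT_{\r,m}(g)$. A standard reduction argument (the same one used in the proof of Lemma~\ref{lem:val_r_distinct_gb}) then shows that $\init_{\w}(f) = \init_{\r}(f)$ lies in the ideal generated by $\{\init_{\r}(g) : g \in G\}$, because at each reduction step one subtracts off a term-multiple $\tau g$ which, by compatibility, matches $f$ exactly in its $\r$-support while leaving the $\w$-degree unchanged or increased. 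Since each $\init_{\r}(g)$ with $g \in G \subseteq I$ belongs to $\init_{\w}(I)$, we conclude $\init_{\w}(f) \in \init_{\w}(I)$, and restricting to polynomial $f$ gives $\init_{\w}(I_{\r}) \cap K[\X] \subseteq \init_{\w}(I)$.

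From~\eqref{eq:6}, the statement about tropical varieties at a fixed $\w$ is immediate: $\init_{\w}(I)$ contains a monomial if and only if $\init_{\w}(I_{\r}) \cap K[\X]$ does; and since $\init_{\w}(I_{\r})$ is generated by polynomials (each $\init_{\w}(h) = \init_{\r}(h)$ is a polynomial), $\init_{\w}(I_{\r})$ contains a monomial if and only if $\init_{\w}(I_{\r}) \cap K[\X]$ does. Hence $\w \in \Vtrop(I) \iff \w \in \Vtrop(I_{\r})$, where $\r$ is the log-radius compatible with $\w$.

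For the global identity~\eqref{eq:18}, observe that every $\w \in \RR_{<0}\times\RR^{n}$ is compatible with exactly one $\s = -(w_1/w_0,\dots,w_n/w_0) \in \RR^{n}$, and by the remark following the definition, $\Vtrop(I_{\s})$ consists precisely of the systems of weights compatible with $\s$ (it is either that whole half-line or empty). So $\w \in \Vtrop(I)$ iff $\w \in \Vtrop(I_{\s})$ for this unique $\s$, iff $\w \in \bigcup_{\s \in \RR^{n}}\Vtrop(I_{\s})$; this gives both inclusions at once.

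The main obstacle is the reverse inclusion in~\eqref{eq:6}, specifically making the reduction argument watertight for \emph{series} $f \in I_{\r}$ rather than polynomials: one must check that the iterated reduction of $\init_{\w}(f)$ against $\{\init_{\r}(g)\}$ terminates and stays within the polynomial world, which is where the convergence properties of Tate series and the fact that $\init_{\r}$ truncates to finitely many terms of minimal valuation are doing real work. I would lean on \cite[Cor.~5.4]{CVV4} (the criterion that it suffices to match leading terms) together with the lemma just proved, exactly as in Lemma~\ref{lem:dehomogenization_of_GB}, to keep this step short.
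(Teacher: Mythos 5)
Your argument takes a genuinely different route for the substantive inclusion $\init_{\w}(I_{\r}) \cap K[\X] \subseteq \init_{\w}(I)$: the paper works directly from a generating expression $f = g_1 f_1 + \dots + g_k f_k$ with polynomial generators $f_i$ of $I$ and truncates the series cofactors $g_i$ to polynomials of bounded Gauss valuation, while you invoke a polynomial $\r$-local Gröbner basis $G$ of $I_\r$ and a leading-term reduction. That reduction argument is sound; indeed it is essentially the argument the paper itself uses later, in the proof of property~(1) of the Gröbner-fan proposition, and it yields that every initial form $\init_{\w}(f)$, for $f \in I_\r$, lies in the $K[\X]$-ideal generated by $\{\init_{\w}(g) : g \in G\}$, hence in $\init_{\w}(I)$.

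There is however a gap in the final step. You write ``restricting to polynomial $f$ gives $\init_{\w}(I_{\r}) \cap K[\X] \subseteq \init_{\w}(I)$'', but this does not follow from what precedes. Your reduction shows only that the \emph{set} $\{\init_{\w}(f) : f \in I_\r\}$ of initial forms lies in $\init_{\w}(I)$. That set is not an ideal (it is not closed under addition across degrees, nor under multiplication by arbitrary polynomials or series), whereas $\init_{\w}(I_\r)$ is by definition the ideal of $\KX[\r]$ they span. A polynomial $h \in \init_{\w}(I_\r)\cap K[\X]$ is a priori a combination $h = \sum_i q_i\,\init_{\w}(g_i)$ with \emph{series} cofactors $q_i \in \KX[\r]$, and one still has to show the $q_i$ can be taken to be polynomials. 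The paper handles this with a second truncation: since $h$ is a polynomial, its Gauss valuation is bounded above, so all terms of the $q_i$ of valuation beyond that bound must cancel, and one may discard them. Adding that step (or some equivalent contraction argument for the ideal $\langle \init_{\w}(g) : g \in G\rangle\KX[\r]$ back to $K[\X]$) would close your proof. The remaining parts of your proposal — the easy inclusion, the equivalence $\w\in\Vtrop(I)\iff\w\in\Vtrop(I_\r)$, and the global union~\eqref{eq:18} — are fine and match the paper's treatment, which simply says ``the rest follows by definition''.
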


\begin{proof}
  Clearly $\init_{\w}(I) \subseteq \init_{\w}(I_{\r})\cap K[\X]$.
  Conversely, let us first consider $f \in I_{\r}$ and let $f_{1},\dots,f_{k}$ be polynomials generating $I$.

  There exist series $g_{1},\dots,g_{k}$ such that $f = g_{1}f_{1} + \dots + g_{k}f_{k}$.
  Let $d = \val_{\r}(f)$, and 
  write each series $g_{i}$ as $h_{i} + r_{i}$, where $h_{i}$ is the sum of all terms with Gauss valuation at most $d-\val_{\r}(f_{i})$ and $r_{i} = g_{i}-h_{i}$ has Gauss valuation greater than $d-\val_{\r}(f_{i})$.
  By the convergence property, the $h_{i}$'s are polynomials.
  The decomposition of $f$ becomes
  \begin{equation}
    \label{eq:9}
    f = h_{1}f_{1} + \dots + h_{k}f_{k}  + (r_{1}f_{1} + \dots + r_{k}f_{k})
  \end{equation}
  where the latter group consists exclusively of terms with Gauss valuation greater than $d$.
  So none of those terms can appear in the initial form of $f$, and as a consequence,
$
  \init_{\w}(f) = \init_{\w}(h_{1}f_{1} + \dots + h_{k}f_{k}).
  $
  Since the $h_{i}$ are polynomials, $h_{1}f_{1} + \dots + h_{k}f_{k} \in I$, and therefore $\init_{\w}(f) \in \init_{\w}(I)$.

  Now let $h \in \init_{\w}(I) \cap K[\X]$, there exists series $g_1,\dots,g_{l} \in I_{\r}$ and series $q_1,\dots,q_{l}$ such that
  \begin{equation}
    \label{eq:8}
    h = q_1 \init_{\w}(g_1) + \dots + q_{l}\init_{\w}(g_{l}).
  \end{equation}
  From the above, we know that $\init_{\w}(g_i) \in \init_{\w}(I)$ for all $i$.
  Since $h$ is a polynomial, it has a \emph{maximal} Gauss valuation $d$.
  Similarly to before, any term in $q_{i}$ with Gauss valuation greater than $d - \val_{\r}(g_{i})$ cannot appear in $h$, so those terms must add to zero on the right hand side, and we can assume that the cofactors $q_{i}$ are polynomials, and therefore $h \in \init_{\w}(I)$ and
  the second inclusion is proved.

  The rest of the statement follows by definition.
\end{proof}

\subsection{Analytic Gröbner fan}
\label{sec:grobner-fan}

Similarly to the polynomial case, tropical varieties can be computed using the Gröbner fan of the ideal.
In this section, we recall those definitions.

The relation between tropical varieties and Gröbner fans is the same as in the usual case, namely, that initial forms generalize leading terms.

\begin{defn}
  Let $\w \in \RR_{<0}\times \RR^{n}$ be a system of weights.
  Let $\leq$ be a \emph{term} order on $K[\X]$.
  We say that $\leq$ refines $\w$ if for all terms $t_1,t_2$, $\deg_{\w}(t_1) \geq \deg_{\w}(t_2) \implies t_1 \geq t_2$ 
  .
  We say that $\w$ \emph{defines a term order} for a finite set of polynomials or an ideal if $\r=-(w_{1}/w_{0},\dots,w_{n}/w_0)$ defines a term order for that set or ideal.
  
\end{defn}

If $\w$ defines a term order for a finite set $F$, for all $f \in F$ and all term orders $\leq$ refining $\w$, $\LT_{\leq}(f) = \init_{\w}(f)$.
Similarly, if $\w$ defines a term order for an ideal $I$, then for all term orders $\leq$ refining $\w$, $\LT_{\leq}(I) = \init_{\w}(I)$.

As seen in Lemma~\ref{lem:on_peut_supposer_les_val_r_distincts}, for any finite set of polynomials $F$ or any ideal $I$, and for any monomial order, there exists an equivalent monomial order defined by a system of weights.


\begin{defn}
  Let $I \subset K[\X]$ be an ideal.
  Let $\w$ be a system of weights.
  The analytic Gröbner cone $C_{\w}(I)$ associated to $\w$ and $I$ is the set of all systems of weights $\w'$ such that $\init_{\w}(I) = \init_{\w'}(I)$.
  The analytic Gröbner fan of $I$ is the fan given by all the analytic Gröbner cones of $I$.
\end{defn}

\begin{prop}
  If $\w \in \Vtrop(I)$, then $C_{\w}(I) \subset \Vtrop(I)$.
  In particular, the tropical variety associated to $I$ is a subfan of the Gröbner fan of $I$.
\end{prop}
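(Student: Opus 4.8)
The plan is to observe that whether a system of weights lies in $\Vtrop(I)$ is a property of its associated initial ideal alone, whereas the analytic Gröbner cone $C_{\w}(I)$ is by construction a locus on which that initial ideal is constant. So the first assertion should be a direct unwinding of the definitions, requiring no auxiliary result.

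In detail, I would take $\w \in \Vtrop(I)$ and an arbitrary $\w' \in C_{\w}(I)$. By the definition of the Gröbner cone, $\w'$ is itself a system of weights (in particular $\w' \in \RR_{<0}\times\RR^{n}$) and $\init_{\w'}(I) = \init_{\w}(I)$. Since $\w \in \Vtrop(I)$, the ideal $\init_{\w}(I)$ contains no monomial, hence $\init_{\w'}(I)$ contains no monomial either, i.e. $\w' \in \Vtrop(I)$. This gives $C_{\w}(I) \subseteq \Vtrop(I)$.

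For the ``in particular'' part, note that $\w \in C_{\w}(I)$ always holds, so every point of $\Vtrop(I)$ lies in one of the Gröbner cones of $I$, which by the first part is entirely contained in $\Vtrop(I)$; thus $\Vtrop(I) = \bigcup_{\w \in \Vtrop(I)} C_{\w}(I)$ is a union of cones of the analytic Gröbner fan. To promote ``union of cones'' to ``subfan'' I would invoke Theorem~\ref{thm:Terms_of_I_fini}: there are only finitely many distinct initial ideals $\init_{\w}(I)$, hence finitely many Gröbner cones, so $\Vtrop(I)$ is a finite union of closed Gröbner cones and in particular a closed set. Then any face of a Gröbner cone contained in $\Vtrop(I)$ is again (the closure of) a Gröbner cone meeting $\Vtrop(I)$, so by the first part it too is contained in $\Vtrop(I)$; hence the collection of Gröbner cones lying in $\Vtrop(I)$ is closed under taking faces and forms a subfan of the analytic Gröbner fan.

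I do not expect a genuine obstacle here: the core of the argument is the tautology that the tropical membership condition is constant along a Gröbner cone. The only point needing mild care is the last one — that faces of Gröbner cones inside $\Vtrop(I)$ stay inside $\Vtrop(I)$ — which is handled by the closedness afforded by the finiteness in Theorem~\ref{thm:Terms_of_I_fini}, together with the fact, already recorded when the analytic Gröbner fan was defined, that the analytic Gröbner cones do assemble into a fan.
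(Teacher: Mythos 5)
Your first paragraph is exactly the paper's proof of the main claim, which is stated there in one sentence: membership in $\Vtrop(I)$ depends only on $\init_{\w}(I)$, which is constant on $C_{\w}(I)$. No difference there.

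For the ``in particular'' part, which the paper itself treats as immediate, your elaboration contains a circular step. You assert that $\Vtrop(I)$ is ``a finite union of \emph{closed} Gröbner cones and in particular a closed set,'' but the cones $C_{\w}(I)$ are by construction (item (3) of the next proposition) the \emph{relative interiors} of polyhedral cones, and a finite union of relative interiors need not be closed. Establishing that $\Vtrop(I)$ is closed — equivalently, that whenever $C_{\w}(I) \subset \Vtrop(I)$ one also has $\overline{C_{\w}(I)} \subset \Vtrop(I)$ — is precisely the faces claim you then try to deduce \emph{from} closedness, so the reasoning is circular. The way to close the gap is to use the degeneration-in-stages property recorded as item (4) of the following proposition: if $\w' \in \overline{C_{\w}(I)}$ then $\init_{\w}(f) = \init_{\w}(\init_{\w'}(f))$ for all $f \in I$, hence $\init_{\w}(I) = \init_{\w}(\init_{\w'}(I))$; a monomial $m \in \init_{\w'}(I)$ would then yield $m = \init_{\w}(m) \in \init_{\w}(I)$, contradicting $\w \in \Vtrop(I)$. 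This shows directly that faces of cones contained in $\Vtrop(I)$ stay in $\Vtrop(I)$, without appealing to closedness of the union. The finiteness from Theorem~\ref{thm:Terms_of_I_fini} is still needed to know the fan itself is finite, as you note.
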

\begin{proof}
  Whether a system of weights lies in the tropical variety only depends on the initial forms, and therefore applies identically to the cone.
\end{proof}


Similar to the classical case, it allows to compute the tropical variety associated to $I$ by traversing the Gröbner fan.
The following properties are transpositions of corresponding facts in the classical setting, and describe the Gröbner fan.

\begin{prop}
  Let $I \subset K[\X]$ be an ideal. Let $\w$ be a system of weights, $\r$ the convergence radii associated to $\w$, and $\leq$ a term ordering refining $\w$. 
  Let $G$ be a reduced $\r$-local Gröbner basis of $I$ (with $\leq$ as tie-break).
  Then:
  \begin{enumerate}
    \item $\init_{\w}(I) = \langle \init_{\w}(g) : g \in G \rangle$;
    \item for any system of weights $\w_1$, $\w_1 \in C_{\w}(I)$ iff for all $g \in G$, $\init_{\w}(g)=\init_{\w_1}(g)$; 
    \item $C_{\w}(I)$ is the relative interior of a polyhedral convex cone; 
    \item the closure of $C_{\w}(I)$ in $\RR_{<0}\times \RR^{n}$ is the union of all cones $C_{\w'}(I)$ with 
       $
      \forall\, f \in I, \init_{\w}(f) = \init_{\w}(\init_{\w'}(f))
$
    (or equivalently
    $\forall\, g \in G, \init_{\w}(g) = \init_{\w}(\init_{\w'}(g))$);
    \item if $\w_1$ and $\w_2$ are two systems of weights
    such that $\overline{C_{\w_1}} \cap \overline{C_{\w_2}} \nsubseteq \{0\} \times \RR^{n}$ and $C_{\w_1} \neq C_{\w_2}$, then there exists $\w_3$ such that $\overline{C_{\w_1}} \cap \overline{C_{\w_2}} = \overline{C_{\w_3}}$, and it is a facet of both cones;
    \item $C_{\w}(I)$ has maximal dimension $n+1$ if and only if $\w$ defines a monomial order.
\end{enumerate}
\end{prop}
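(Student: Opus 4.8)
The plan is to prove the six items in order, leveraging at each stage the fact (from Lemma~\ref{lem:val_r_distinct_gb} and the discussion of ``defining a term order'') that the combinatorics of the reduced local Gröbner basis $G$ is governed purely by the valuations $\val_{\r}$, equivalently by the weighted degrees $\deg_{\w}$, and that $G$ remains a reduced local GB for every term order in the equivalence class of $\w$. For item~(1), I would argue exactly as in Lemma~\ref{lem:val_r_distinct_gb}: any $f \in I$ can be reduced against $G$ using only comparisons of $\deg_{\w}$, so $\init_{\w}(f)$ lands in the ideal generated by the $\init_{\w}(g)$; the reverse inclusion is immediate since $G \subseteq I$. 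For item~(2), the ``if'' direction follows because if the finitely many initial forms $\init_{\w}(g)$ are unchanged when passing to $\w_1$, then by item~(1) applied to $\w_1$ the whole initial ideal is unchanged; the ``only if'' direction uses that if some $\init_{\w_1}(g) \neq \init_{\w}(g)$, then a term that was leading for $\w$ becomes strictly dominated for $\w_1$, and reducedness of $G$ (only one term of each $g$ lies in $\LT(I)$) forces $\init_{\w_1}(I) \neq \init_{\w}(I)$.

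For item~(3), I would write $C_{\w}(I)$ via item~(2) as the set of $\w_1$ satisfying, for each $g \in G$ and each pair of a leading exponent $\beta$ and a non-leading exponent $\alpha$ in $\Supp(g)$, the strict linear inequality $\deg_{\w_1}(\X^{\beta}\text{-term}) > \deg_{\w_1}(\X^{\alpha}\text{-term})$ together with equalities pinning down which exponents are leading; this exhibits $C_{\w}(I)$ as the relative interior of a rational polyhedral cone. Item~(6) is then the special case: $C_{\w}(I)$ has full dimension $n+1$ precisely when no equalities are forced, i.e.\ when each $g$ has a \emph{single} term of minimal valuation, which by Lemma~\ref{lem:val_r_distinct_gb} (third bullet) is exactly the condition that $\w$ defines a monomial order. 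For item~(4), the closure is obtained by relaxing the strict inequalities of item~(3) to non-strict ones; a boundary weight $\w'$ is one where some of these $\deg$-comparisons become equalities, and I would check that ``taking $\init_{\w}$ of $\init_{\w'}$'' recovers $\init_{\w}$ exactly on the face so described — the equivalence with the condition stated on all $f \in I$ versus only on $g \in G$ again follows from item~(1). Item~(5) is the standard fan/facet statement: two distinct maximal (or lower-dimensional) cones whose closures share a face not contained in the degenerate wall $\{0\}\times\RR^{n}$ meet along a common facet $\overline{C_{\w_3}}$, which I would deduce from items~(3)--(4) by intersecting the two defining inequality systems and observing that the shared face is cut out by turning exactly one strict inequality into an equality.

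The main obstacle I anticipate is item~(5), and more specifically the bookkeeping needed to guarantee that the common face is genuinely a \emph{facet} (codimension one) of each cone and is itself of the form $\overline{C_{\w_3}}$ for an honest Gröbner cone — this requires knowing that crossing a single wall changes the initial ideal in a ``minimal'' way, which in the classical theory rests on the connectedness/convexity of the Gröbner region. Here the extra subtlety is the constraint $w_0 < 0$ and the degenerate hyperplane $\{0\}\times\RR^{n}$: one must rule out that the shared face is artificially small because it got clipped by $w_0 = 0$, which is precisely why the hypothesis $\overline{C_{\w_1}} \cap \overline{C_{\w_2}} \nsubseteq \{0\}\times\RR^{n}$ appears. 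I would handle this by working with the full closed cones in $\RR_{\le 0}\times\RR^n$ first, invoking the classical facet structure there, and only at the end intersecting with $\RR_{<0}\times\RR^n$, checking that this intersection does not drop the relevant dimensions given the hypothesis. The remaining items I expect to be routine translations of the classical arguments, the only genuinely new ingredient being the systematic replacement of ``leading term for a monomial order'' by ``initial form for a weight'' justified by Lemma~\ref{lem:val_r_distinct_gb}.
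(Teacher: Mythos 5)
The overall structure of your outline matches the paper's: item~(1) by a reduction argument, item~(2) by examining the initial forms of the reduced basis elements, item~(6) by an $\epsilon$-perturbation versus hyperplane argument, and items~(3)--(5) by transporting the classical polyhedral geometry of the Gröbner fan (the paper explicitly delegates these to Ren's thesis). Item~(1) in the paper also has to handle a subtlety you gloss over: the reduction of $f$ against $G$ takes place in $\KX[\r]$ and is a priori an infinite converging process, so one must isolate the finite prefix of reduction steps occurring at the minimal Gauss valuation $\val_{\r}(f)$ to extract a \emph{polynomial} identity $\init_{\w}(f) = \sum_{i} t_{i}\init_{\w}(g_{j_{i}})$.

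The genuine gap is in your ``only if'' direction of item~(2). You assert that if $\init_{\w_1}(g) \neq \init_{\w}(g)$ for some $g$, then ``a term that was leading for $\w$ becomes strictly dominated for $\w_1$, and reducedness of $G$ forces $\init_{\w_1}(I) \neq \init_{\w}(I)$.'' Neither claim is immediate. First, $\init_{\w_1}(g)$ can simply \emph{acquire} extra terms while $\LT_{\leq}(g)$ stays $\w_1$-initial, so no leading term need become dominated. Second, and more seriously, reducedness of $G$ is stated with respect to the term order $\leq$, so to use it to conclude anything about $\init_{\w_1}(I)$ you first need to know that $G$ is a reduced local Gröbner basis for a term order $\leq_1$ refining $\w_1$. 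That is exactly where the paper's proof does real work: assuming $\init_{\w}(I) = \init_{\w_1}(I)$, it forms $\leq_1$ (compare $\deg_{\w_1}$, break ties with $\leq$), computes
\(
  \LT_{\leq_1}(I) = \LT_{\leq_1}(\init_{\w_1}(I)) = \LT_{\leq}(\init_{\w_1}(I)) = \LT_{\leq}(\init_{\w}(I)) = \LT_{\leq}(I),
\)
and then invokes Lemma~\ref{lem:GBR_for_two_term_orders} to deduce that $G$ is a reduced basis for $\leq_1$, whence $\LT_{\leq_1}(g) = \LT_{\leq}(g)$. Only with that term cancelling in $\init_{\w}(g) - \init_{\w_1}(g)$ does the reducedness argument (no other term of $g$ divisible by a leading term) close and force the difference to vanish. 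Without this step, the contrapositive you propose does not go through, since a priori the two initial ideals could agree even while some generator's initial form changes. You should insert the $\leq_1$ argument and the appeal to Lemma~\ref{lem:GBR_for_two_term_orders} before the final cancellation step; the rest of your outline is then in line with the paper's.
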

\begin{proof}
  For (1), let $f \in I$.
  Let $f_{\r}$ be the canonical image of $f$ in $\KX[\r]$, it lies in $I_{\r}$.
  Since $G$ is an $\r$-local GB of $I$, its embedding $G_{\r}$ in $I_{\r}$ is a Gröbner basis of $I_{\r}$ w.r.t. $\leq$.
  In particular, $f_{\r}$ reduces to $0$ modulo $G_{\r}$, which in particular implies that there exists a finite sequence of reductions
  \begin{equation}
    \label{eq:12}
   \textstyle f_{\r} \to f_{\r,1} = f_{\r} - t_{1}g_{\r,j_{1}} \to \dots \to f_{\r,k} = f_{\r} - \sum_{i=1}^{k} t_{i}g_{\r,j_{i}}
  \end{equation}
  with $g_{\r,j} \in G_{\r}$ for all $j$, 
  such that $\val_{\r}(f_{\r,k}) > \val_{\r}(f_{\r})$ and $\val_{\r}(f_{\r,j}) = \val_{\r}(f_{\r})$ for $j < k$.
  In particular, $\init_{\w}(f_{\r}) = \sum_{i=1}^{k} t_{i} \init_{\w}(g_{\r,j_{i}})$.
  This is a polynomial equality, which translates into $\init_{\w}(f) = \sum_{i=1}^{k} t_{i} \init_{\w}(g_{j_{i}})$.


  
  For (2), let $\w_1$ be a system of weight.
  The $\Leftarrow$ direction follows from (1).
  Assume that $\w_1 \in C_{\w}(I)$, and consider the term order $\leq_{1}$ obtained by first considering the $\w_1$-degree, then breaking ties using $\leq$.
  This term order refines $\w_1$.
  Furthermore, $\LT_{\leq_{1}}(I) = \LT_{\leq_{1}}(\init_{\w_1}(I)) = \LT_{\leq}(\init_{\w_1}(I))$, the last equality coming from the fact that the initial forms are $\w_1$-homogeneous.
  Since $\init_{\w_1}(I)=\init_{\w}(I)$, we finally get that $\LT_{\leq_{1}}(I) = \LT_{\leq}(I)$.
  By Lemma~\ref{lem:GBR_for_two_term_orders}, $G$ is a reduced Gröbner basis w.r.t. $\leq_{1}$.

  From there, the proof is similar to that of~\cite[Prop.~3.1.17]{Ren2015}.
  Let $g \in G$.
  Because $G$ is reduced, $\LT_{\leq_{1}}(g) = \LT_{\leq}(g)$ and in particular this term lies in both the support of $\init_{\w}(g)$ and $\init_{\w_1}(g)$.
  Consider $\init_{\w}(g)-\init_{\w_1}(g)$, it lies in $\init_{\w_1}(I)$ and its leading term w.r.t. $\leq_{1}$ cannot be $\LT_{\leq_{1}}(g)$.
  Again since $G$ is reduced, this implies that $\init_{\w}(g) - \init_{\w_1}(g) = 0$.

  The next three properties are proved similarly to the classical case, by using the initial part conditions to cut half-spaces and define cones: see \cite[Prop.~3.1.19]{Ren2015} for (3), \cite[Cor.~3.1.20]{Ren2015} for (4) and \cite[Prop.~3.1.21]{Ren2015} for (5).

  Finally, for (6), if $\w$ defines a monomial order for $I$, by considering the finite set $\Supp(G)$, there exists a small enough $\epsilon \in \RR_{>0}$ such that for all $\u \in \RR_{<0}\times \RR^{n}$, and for all $t_1,t_2 \in \Supp(G)$, $\deg_{\w_1}(t_1) < \deg_{\w_2}(t_2) \iff \deg_{\w_1 + \epsilon \u}(t_1) < \deg_{\w_2 + \epsilon \u}(t_2)$.
  Therefore $C_{\w}(I)$ contains a ball of radius $\epsilon$, so it is open and it must have maximal dimension.
  Conversely, if $\w$ does not define a monomial order, there exists a $g \in G$ such that $\init_{\w}(g)$ has at least two terms $t_1$ and $t_2$, so $\w$ lies on the hyperplane $H = \{\val_{\w}(t_1) = \val_{\w}(t_2)\}$.
  For all $\w' \in C_{\w}(I)$, $\init_{\w'}(g) = \init_{\w}(g)$, so $C_{\w}(I) \subseteq H$, and it cannot have maximal dimension.
\end{proof}

Since we know that $I$ has finitely many sets of leading terms by Th.~\ref{thm:Terms_of_I_fini}, the analytical Gröbner fan of $I$ has only finitely many cones of maximal dimension, and therefore it is finite.
If furthermore $I$ is homogeneous, then $I$ admits a universal Gröbner basis $G$ containing a reduced Gröbner basis for all orders, computable using Algorithm~\ref{algo:UAGB_computation}.
In that case, by Prop.~\ref{prop:Newton_polytope_and_equivalences_classes_of_term_orders}, the vertices of $\Ns(G)$ are in one-to-one correspondence with the equivalence classes of term orders with respect to $G$, and by Lemma~\ref{lem:equiv_gb_equiv_I}, this allows to compute all the equivalence classes of term orders with respect to $I$.
From there one can compute the cones of maximal dimension in the analytic Gröbner fan of $I$, and finally all cones in the fan.

Then, for each cone $C$ in the fan, one can pick a system of weights $\w$ in $C$. From the property (1), $\init_{\w}(I)$ is generated by the initial parts of elements of $G$.
Finally, we can decide whether $\init_{\w}(I)$ contains a monomial by computing a basis of $(\init_{\w}(I) : (x_1\cdots x_{n})^{\infty})$, using the algorithms in~\cite{CVV4}.

Just like in the classical case, both with and without valuation, this algorithm is not the most efficient, because the Gröbner fan can be significantly larger than the tropical variety.
In \cite{Ren2015} and \cite{MR2020}, better algorithms have been presented for ideals in $\KTX$, and generalized to $p$-adic fields by lifting back to that case.
Those algorithms still traverse the Gröbner fan, but not in an exhaustive way, and rely on Buchberger's algorithm with Mora reductions for computing bases in the cones.

The conclusion of this section is that the key properties of the Gröbner fan in $\KTX$ are shared across polynomial rings over a valued ring or field, by way of Tate completions, and Buchberger's algorithm with weak normal form allows to compute generators in a similar way.
Thus, it offers an alternative point of view on existing algorithms using liftings to reduce the problem to $\KTX$, instead performing the computations directly in the desired ring or field.

We expect that the other optimized algorithms for computing tropical varieties in $\KTX$ similarly transpose to the general setting.

\begin{rmk}
  The requirement that the ideal is homogeneous can be relaxed into requiring that the ideal admits a reduced Gröbner basis for all orders.
  This requirement is also present in the aforementioned works in $\KTX$.
\end{rmk}

\section{Tate Algebras on polyhedral subdomains}

\label{sec:tate_alg_on_polyhedral_subdomain}

As of now, we have studied Tate algebras using
convergence conditions of the following types: (1) on a polydisk defined by log-radii $\s \in \QQ^n$;
(2) convergence everywhere, \textit{i.e.} the algebra $K[\X]$; (3) on all polydisks $\r \leq \s$ (overconvergence).

To build up the tools for rigid geometry
or tropical analytic geometry as in \cite{Rabinoff},
we need more general convergence conditions
such as convergence on an annulus (\textit{e.g.}
converging for all $\X \in \Q_p^n$ such that 
$\forall i, \: a_i \leq \val (x_i) \leq b_i$ for some $a_i,b_i \in \mathbb{R}$)
or converging on a polyhedron.

Following \cite[Definition 6.3]{Rabinoff}
but restricting to the case of power series
instead of Laurent series,
we define the following
ring of functions $\KX[P]$
for $P$ the rational points of a polyhedron.
This corresponds to a special case of the affinoid algebra defined
by a polyhedral subdomain in~\cite{Rabinoff}.

\begin{defn}
Let $\PP \subset \mathbb{R}^n$ be a polyhedron with vertices in $\mathbb{Q}^n$
and admitting only $\sigma=\mathbb{R}_{<0}^n$
as cone of unbounded direction. Let $P= \PP \cap \QQ^n$.
We define the ring
\[\KX[P] := \left\lbrace \sum_{\alpha \in \N^n} a_\alpha \X^\alpha  : 
   \begin{array}{l}
     a_\alpha \in K \\
     \forall \r \in P, \: \val_\r(a_\alpha \X^\alpha) \rightarrow +\infty
   \end{array}
  \right\rbrace. \]
\end{defn}
The series in $\KX[P]$ are exactly the series 
converging on all polydisks with radius given by
the weights which are points of $P$.
We give an example in Figure~\ref{fig:exp_polyhedron}.

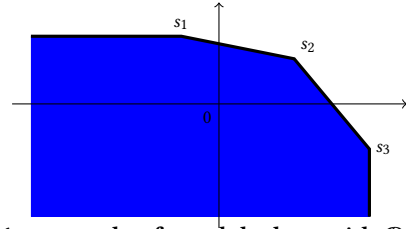
\begin{figure}
\hfill
\begin{tikzpicture}[xscale=0.5, yscale=0.3]
\begin{scope}[fill=blue, opacity=0.2]
  \fill (-5,-5)--(-5,3)--(-1,3)--(2,2)--(4,-2)--(4,-5);
  \fill (-5,-5)--(-5,3)--(-1,3)--(-1,-5);
  \fill (-5,-5)--(-5,2)--(2,2)--(2,-5);
  \fill (-5,-5)--(-5,-2)--(4,-2)--(4,-5);
\end{scope}

\draw[->,thin] (-5.5,0)--(5,0);
\draw[->,thin] (0,-5.5)--(0,4.5);
\draw[very thick] (-5,3)--(-1,3)--(2,2)--(4,-2)--(4,-5);

\node[scale=0.8, below left,thick] at (0,0) { $0$ };
\node[scale=0.8, above] at (-1,3) { $s_1$ };
\node[scale=0.8, above right] at (2,2) { $s_2$ };
\node[scale=0.8, right] at (4,-2) { $s_3$ };
\end{tikzpicture}
\hfill\null

\vspace{-5mm}

\caption{An example of a polyhedron with $\PP$ the convex hull of the $s_i+\RR_{<0}^2$'s in $\RR^2.$}
\label{fig:exp_polyhedron}
\end{figure}

\begin{prop}
If $\PP$ is the convex hull of the $s_1 + \mathbb{R}_{<0}^n,\dots,s_l + \mathbb{R}_{<0}^n$
then
\[ \KX[P] = \bigcap_{i=1}^l \KX[\s_i].\]
\end{prop}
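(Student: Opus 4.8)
The plan is to prove the two inclusions separately, in both cases by examining the defining convergence condition term by term. The key observation is that for a fixed monomial $\X^\alpha$ and a fixed coefficient $a_\alpha \in K$, the map $\r \mapsto \val_\r(a_\alpha \X^\alpha) = \val(a_\alpha) - \r \cdot \alpha$ is affine in $\r$, and on the convex hull $\PP$ it attains its minimum at one of the generating rays $s_i + \RR_{<0}^n$; more precisely, since $\alpha \in \N^n$ has nonnegative entries, moving in the direction $\RR_{<0}^n$ only increases $\r \cdot \alpha$ by a nonpositive amount — wait, it decreases $-\r\cdot\alpha$ — so the infimum of $\val_\r(a_\alpha\X^\alpha)$ over the ray $s_i + \RR_{<0}^n$ is achieved in the limit and equals $\min_i(\val(a_\alpha) - s_i \cdot \alpha)$ as a function over all of $P$. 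This means the convergence condition ``$\forall \r \in P,\ \val_\r(a_\alpha \X^\alpha) \to +\infty$'' is governed by the behavior at the vertices $s_i$.

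For the inclusion $\KX[P] \subseteq \bigcap_{i=1}^l \KX[\s_i]$: each $\s_i$ lies in $P$ (it is a vertex of $\PP$, hence a rational point of $\PP$), so if $f = \sum a_\alpha \X^\alpha$ satisfies $\val_\r(a_\alpha\X^\alpha) \to +\infty$ for every $\r \in P$, then in particular it satisfies this for $\r = \s_i$, i.e. $f \in \KX[\s_i]$. This holds for each $i$, giving the inclusion. This direction is essentially immediate from the fact that $\s_i \in P$.

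For the reverse inclusion $\bigcap_{i=1}^l \KX[\s_i] \subseteq \KX[P]$: suppose $f = \sum a_\alpha \X^\alpha \in \KX[\s_i]$ for every $i$, and let $\r \in P$ be arbitrary. Write $\r = \sum_i \lambda_i \s_i - \mu$ with $\lambda_i \geq 0$, $\sum_i \lambda_i = 1$, and $\mu \in \RR_{\geq 0}^n$, using that $\PP$ is the convex hull of the $\s_i + \RR_{<0}^n$. Then for each term, $\val_\r(a_\alpha \X^\alpha) = \val(a_\alpha) - \r\cdot\alpha = \sum_i \lambda_i(\val(a_\alpha) - \s_i\cdot\alpha) + \mu\cdot\alpha \geq \sum_i \lambda_i \val_{\s_i}(a_\alpha\X^\alpha) \geq \min_i \val_{\s_i}(a_\alpha\X^\alpha)$, since $\mu\cdot\alpha \geq 0$. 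Because each $f \in \KX[\s_i]$ gives $\val_{\s_i}(a_\alpha\X^\alpha) \to +\infty$ as $|\alpha| \to \infty$ and there are finitely many indices $i$, the minimum over $i$ also tends to $+\infty$, hence $\val_\r(a_\alpha\X^\alpha) \to +\infty$. As $\r \in P$ was arbitrary, $f \in \KX[P]$.

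The main (and only real) obstacle is getting the convexity bookkeeping right: one must verify that every $\r \in P$ can indeed be written as a convex combination of the $\s_i$ minus a nonnegative vector, which is exactly the statement that $\PP$ is the convex hull of the shifted orthants $\s_i + \RR_{<0}^n$ — this is the hypothesis, so it is available, but one should be careful that the $\s_i$ are among the $\r$ we are allowed to test against (they are, since $\s_i \in \PP \cap \QQ^n = P$ as the $\s_i$ are rational), and that the sign of $\mu\cdot\alpha$ comes out nonnegative because $\alpha \in \N^n$. Everything else is a routine limit argument over a finite index set.
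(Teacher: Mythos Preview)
Your proof is correct and follows essentially the same approach as the paper's: the easy inclusion comes from $\s_i \in P$, and the reverse uses that any $\r \in P$ satisfies $\r \leq \sum_i \lambda_i \s_i$ for some convex weights $\lambda_i$, so that $\val_\r(a_\alpha \X^\alpha) \geq \sum_i \lambda_i \val_{\s_i}(a_\alpha \X^\alpha)$ (you phrase this via $\r = \sum_i \lambda_i \s_i - \mu$ with $\mu \geq 0$, which is the same thing). The only cosmetic difference is that you pass to $\min_i \val_{\s_i}(a_\alpha \X^\alpha)$ before taking the limit, whereas the paper takes the limit directly in the convex combination; both are valid.
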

\begin{proof}
The $\subset$ inclusion is clear.
For the converse inclusion, we remark that
if $\r \in P,$
there are some $\lambda_i \geq 0$ such
that $\sum_{i=1}^l \lambda_i =1$
and $\sum_{i=1}^l \lambda_i \s_i \geq \r$.
Consequently, for any term $c_\alpha \X^\alpha \in K[\X]$,
\[ \sum_{i=1}^l \lambda_i \val_{\s_i} (c_\alpha \X^\alpha) \leq \val_\r (c_\alpha \X^\alpha). \]
If $f = \sum_{\alpha \in \N^n} c_\alpha \X^\alpha \in \bigcap_{i=1}^l \KX[\s_i]$ then
for all $i$, $\val_{\s_i} (c_\alpha \X^\alpha) \rightarrow + \infty.$
Thus, $\val_\r (c_\alpha \X^\alpha) \rightarrow + \infty.$
Hence, we get that, if $f \in \bigcap_{i=1}^l \KX[\s_i]$, then
$f \in \KX[P]$ which concludes the proof.
\end{proof}


\subsection{Local Gröbner bases}

In this section, we explain how if $\PP$ is such a polyhedron, $P = \PP \cap \QQ^n$, $I$ an ideal in $\KX[P]$, and $\r \in P $, it is possible to compute an $\r$-local Gröbner basis of $I$ comprised only of elements of $\KX[P]$.
First, we adapt the notion of écarts from \cite[\S 6.1]{CVV4}.

\begin{defn}
  Let $f \in \KX[P]$, and $\r,\s \in P$.
  We define the $(\s,\r)$\mbox{-}degree of $f$ as
 $
    \deg_{\s,\r}(f) = \max_{\alpha \in \Supp_\s(f)} (\s-\r)\cdot \alpha,
$
  and the écarts of first and second type of $f$ as 
  \begin{align*}
    \Ecart_{\s,\r,0}(f)&:=\val_\s(\LT_\r(f))-\val_\s(f), \\
    \Ecart_{\s,\r,1}(f)&:=\deg_{\s,\r}(f)-\deg_{\s,\r}(\LT_\r(f))
  \end{align*}
\end{defn}

\begin{lem}[6.3 in \cite{CVV4}]
$\forall f \in \KX[P],
\forall i = 0,1, 
\Ecart_{\s,\r,i}(f)\geq 0.$
\end{lem}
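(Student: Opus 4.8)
The plan is to show that each écart is nonnegative by examining the two leading terms involved in each definition and exploiting that $\LT_\r(f)$ is, by construction, a term of minimal Gauss $\r$-valuation in $f$, namely a term of $\init_\r(f)$.

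First I would handle $\Ecart_{\s,\r,0}$. Write $\LT_\r(f) = c\X^\alpha$, so $\alpha \in \Supp_\r(f)$ and hence $\val_\r(c\X^\alpha) = \val_\r(f) = \min_{\beta \in \Supp(f)} \val_\r(c_\beta \X^\beta)$. We must compare $\val_\s(c\X^\alpha)$ with $\val_\s(f) = \min_{\beta} \val_\s(c_\beta \X^\beta)$. Since the minimum over $\Supp(f)$ is taken over a set containing $\alpha$, we get $\val_\s(f) \le \val_\s(c\X^\alpha) = \val_\s(\LT_\r(f))$, which is exactly $\Ecart_{\s,\r,0}(f) \ge 0$. (One should note $f \ne 0$ so that $\LT_\r(f)$ and the minima are well-defined; the case $f = 0$ is vacuous or excluded by convention.)

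Next I would handle $\Ecart_{\s,\r,1}$. Here $\deg_{\s,\r}(f) = \max_{\beta \in \Supp_\s(f)} (\s-\r)\cdot\beta$, a maximum over $\Supp_\s(f)$ (the terms of minimal $\s$-valuation). For $\deg_{\s,\r}(\LT_\r(f))$ with $\LT_\r(f) = c\X^\alpha$, this is just $(\s-\r)\cdot\alpha$ provided $\alpha \in \Supp_\s(c\X^\alpha)$, which holds trivially for a single term. The subtlety: $\deg_{\s,\r}$ is defined via $\Supp_\s$, and we need $\alpha \in \Supp_\s(f)$ for the inequality $(\s-\r)\cdot\alpha \le \max_{\beta \in \Supp_\s(f)} (\s-\r)\cdot\beta = \deg_{\s,\r}(f)$ to follow immediately. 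This is the main obstacle in the argument: it requires knowing that the $\r$-leading term of $f$ also has minimal $\s$-valuation among the terms of $f$. I would invoke the analogue of \cite[Lem.~6.3]{CVV4} or, more directly, the fact recorded just before the statement — that for the orders considered here the $\r$-leading term lies in $\init_\s(f)$ when $\s$ is chosen appropriately (see Lemma~\ref{lem:dehomogenization_and_LT} and the surrounding discussion in \cite{CVV4}); indeed in the setting of écarts one only uses $\s \geq \r$ in $P$, under which $\val_\s(\LT_\r(f)) = \val_\s(f)$ is part of the data $\Ecart_{\s,\r,0}(f) = 0$ in the relevant reduction steps. Granting $\alpha \in \Supp_\s(f)$, we get $\deg_{\s,\r}(\LT_\r(f)) = (\s-\r)\cdot\alpha \le \deg_{\s,\r}(f)$, so $\Ecart_{\s,\r,1}(f) \ge 0$, completing the proof.
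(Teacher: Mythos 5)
Your argument for $\Ecart_{\s,\r,0}(f)\geq 0$ is fine: $\LT_\r(f)$ is a term of $f$, so $\val_\s(\LT_\r(f))\geq\min_{\beta\in\Supp(f)}\val_\s(c_\beta\X^\beta)=\val_\s(f)$.

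For $\Ecart_{\s,\r,1}$, however, there is a genuine gap. Your argument hinges on ``granting $\alpha\in\Supp_\s(f)$,'' where $\alpha$ is the exponent of $\LT_\r(f)$, and that premise is simply false in general: the $\r$-minimal term need not be an $\s$-minimal term. For instance, with $n=1$, $K=\Q_p$, $f=x+p x^2$, $\r=0$, $\s=2$: one has $\LT_\r(f)=x$ (since $\val_\r(x)=0<1=\val_\r(px^2)$), while $\Supp_\s(f)=\{2\}$ because $\val_\s(px^2)=-3<-2=\val_\s(x)$. You even flag this as ``the main obstacle,'' but the way you try to discharge it does not work: Lemma~\ref{lem:dehomogenization_and_LT} concerns (de)homogenization and says nothing of this, and the remark about $\s\geq\r$ and $\Ecart_{\s,\r,0}(f)=0$ is incorrect — both écarts can be strictly positive simultaneously, which is precisely why the algorithm compares them.

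The correct route is the elementary identity $(\s-\r)\cdot\gamma=\val_\r(c_\gamma\X^\gamma)-\val_\s(c_\gamma\X^\gamma)$ for any term $c_\gamma\X^\gamma$ of $f$. Pick $\beta\in\Supp_\s(f)$ attaining $\deg_{\s,\r}(f)$ and let $\alpha_0$ be the exponent of $\LT_\r(f)$. Then
\[
\deg_{\s,\r}(f)-\deg_{\s,\r}(\LT_\r(f))
=\bigl(\val_\r(c_\beta\X^\beta)-\val_\r(f)\bigr)+\bigl(\val_\s(\LT_\r(f))-\val_\s(f)\bigr),
\]
using that $\val_\s(c_\beta\X^\beta)=\val_\s(f)$ and $\val_\r(c_{\alpha_0}\X^{\alpha_0})=\val_\r(f)$. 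The first bracket is $\geq 0$ because $\val_\r(f)$ is the minimum over all terms, and the second bracket is exactly $\Ecart_{\s,\r,0}(f)\geq 0$. This gives $\Ecart_{\s,\r,1}(f)\geq 0$ without needing $\alpha_0\in\Supp_\s(f)$.
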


We adapt Mora's overconvergent Weak Normal
Form algorithm of  \cite{Mora} to computations
in $\KX[\r]$ for series in $\KX[P]$, as in \cite{CVV4}.

\begin{algorithm}[H]
	\caption{$\textsf{WNF}(f,g, P,r)$, Mora's WNF for local computations in polyhedral subdomains}
	\label{algo:Mora_local_over_polyhedra}
	\begin{algorithmic}[1]
		\REQUIRE $f,g_1,\dots,g_s \in \KX[P]$ where $\PP$ is the convex hull of the $s_1+\RR_{<0}^n,\dots,s_t+\RR_{<0}^n$ in $\RR^n$ for some $s_i$'s, $P = \PP \cap \QQ^n$ and $\r \in P.$
		\ENSURE $h \in \KX[P]$ such that:
                \begin{itemize}
                  \item for some $\mu,u_1,\dots,u_s \in \KX[P],$  $\mu f=\sum u_i g_i +h$ 
                  \item either $h = 0$ or $\LT_\r(h)$ is divisible by no $\LT_\r(g_i)$
                  \item $\mu$ is invertible in $\KX[\r]$
                  \item $\LT_\r (u_i g_i) \leq \LT_\r(f)$
                \end{itemize}
						
		\STATE	$h:=f$ ; 
		\STATE  $T:=(g_1,\dots,g_s)$ ;
		\WHILE{ $h \neq 0$ and $T_h := \{g \in T, \LT_\r(g) \mid \LT_\r(h) \} \neq \emptyset$}	
		\STATE take $g \in T_h$ minimizing 
                $\Ecart_{s_1,\r,0}(g)$ first, then $\Ecart_{s_1,\r,1}(g)$, then $\Ecart_{s_2,\r,0}(g)$, \ldots, and finally $\Ecart_{s_t,\r,1}(g)$ ;
		\IF{for any $j,k$, $\Ecart_{s_j,\r,k}(g) >\Ecart_{s_j,\r,k}(h),$}
		\STATE $T:=T \cup \{h\}$;
		\ENDIF
		\STATE $h:=\spoly(h,g)$ ;
		\ENDWHILE
		\RETURN $h$ ;
	\end{algorithmic}
\end{algorithm}

Correctness comes from the following.

\begin{lem}[6.4 in \cite{CVV4}]
If $g \in T_{h_m}$ is such that:
\begin{itemize}
\item $\Ecart_{s_i,\r,0}(g) \leq \Ecart_{s_i,\r,0}(h_m)$,
\item $\Ecart_{s_i,\r,1}(g) \leq \Ecart_{s_i,\r,1}(h_m)$,
\end{itemize}
and if $t=\LT_\r(h_m)/\LT_\r(g)$
and $h_{m+1}=h_m-tg,$ then
\[\val_{s_i}(h_{m+1}) \geq \val_{s_i}(h_m). \]
Moreover, in case of equality, then 
\[ \deg_{s_i,\r}(h_{m+1})\leq \deg_{s_i,\r}(h_m).\]\label{lem:ecarts_vals_degre_s_r}
\end{lem}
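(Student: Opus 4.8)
The plan is to analyse a single reduction step $h_{m+1} = h_m - t g$, exploiting that $t = \LT_\r(h_m)/\LT_\r(g)$ is a \emph{single term}. Write $\LT_\r(h_m) = c_h\X^{\alpha_h}$ and $\LT_\r(g) = c_g\X^{\alpha_g}$, so that $t = (c_h/c_g)\X^{\alpha_h-\alpha_g}$. The elementary ingredient I would isolate first is that multiplication by a single term acts uniformly on Gauss valuations and on $s_i$-supports: for any $f$, $\val_{s_i}(tf) = \val_{s_i}(t) + \val_{s_i}(f)$, $\Supp_{s_i}(tf) = (\alpha_h-\alpha_g) + \Supp_{s_i}(f)$, and hence $\deg_{s_i,\r}(tf) = (s_i-\r)\cdot(\alpha_h-\alpha_g) + \deg_{s_i,\r}(f)$; moreover $\deg_{s_i,\r}(\LT_\r(u)) = (s_i-\r)\cdot(\text{exponent of }\LT_\r(u))$ for any $u$, since a single term has a one-element $s_i$-support. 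Plugging $f = g$ into these and unwinding the definitions of the écarts, I would derive the two identities
\[
  \val_{s_i}(tg) = \val_{s_i}(h_m) + \Ecart_{s_i,\r,0}(h_m) - \Ecart_{s_i,\r,0}(g),
  \qquad
  \deg_{s_i,\r}(tg) = \deg_{s_i,\r}(h_m) + \Ecart_{s_i,\r,1}(g) - \Ecart_{s_i,\r,1}(h_m).
\]

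From here the first assertion is immediate: the hypothesis $\Ecart_{s_i,\r,0}(g) \le \Ecart_{s_i,\r,0}(h_m)$ and the left-hand identity give $\val_{s_i}(tg) \ge \val_{s_i}(h_m)$, and the ultrametric inequality $\val_{s_i}(h_m - tg) \ge \min\bigl(\val_{s_i}(h_m),\val_{s_i}(tg)\bigr)$ yields $\val_{s_i}(h_{m+1}) \ge \val_{s_i}(h_m)$. For the equality case, assume $\val_{s_i}(h_{m+1}) = \val_{s_i}(h_m)$ and split on $\val_{s_i}(tg)$. If $\val_{s_i}(tg) > \val_{s_i}(h_m)$, no term of $tg$ reaches the value $\val_{s_i}(h_m)$, so $\Supp_{s_i}(h_{m+1}) \subseteq \Supp_{s_i}(h_m)$ and therefore $\deg_{s_i,\r}(h_{m+1}) \le \deg_{s_i,\r}(h_m)$. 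If $\val_{s_i}(tg) = \val_{s_i}(h_m)$, then $\Supp_{s_i}(h_{m+1}) \subseteq \Supp_{s_i}(h_m) \cup \Supp_{s_i}(tg)$, whence $\deg_{s_i,\r}(h_{m+1}) \le \max\bigl(\deg_{s_i,\r}(h_m),\deg_{s_i,\r}(tg)\bigr)$, and the right-hand identity together with $\Ecart_{s_i,\r,1}(g) \le \Ecart_{s_i,\r,1}(h_m)$ gives $\deg_{s_i,\r}(tg) \le \deg_{s_i,\r}(h_m)$. In both subcases the degree bound follows.

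The argument is essentially a bookkeeping exercise that transports almost verbatim from the overconvergent case \cite[Lem.~6.4]{CVV4}, the only change being that $s_i$ now ranges over the finitely many vertices $s_1,\dots,s_t$ of $\PP$ rather than over a cone of admissible radii, and I expect no order relation between $s_i$ and $\r$ to be needed. The one step requiring genuine care, and hence the main obstacle, is the equality case: one must correctly describe $\Supp_{s_i}(h_m - tg)$ and $\Supp_{s_i}(tg)$ (keeping in mind that equal monomials carry possibly different coefficient valuations, so the containment $\Supp_{s_i}(h_{m+1})\subseteq\Supp_{s_i}(h_m)\cup\Supp_{s_i}(tg)$ has to be checked monomial by monomial), and verify that the split on $\val_{s_i}(tg)$ versus $\val_{s_i}(h_m)$ is exhaustive given that the first assertion already forces $\val_{s_i}(tg)\ge\val_{s_i}(h_m)$.
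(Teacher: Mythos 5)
The paper does not reproduce a proof of this lemma; it is cited from \cite[Lem.~6.4]{CVV4}, and your argument is the natural one one would expect that reference to contain. Your two identities
\[
  \val_{s_i}(tg) = \val_{s_i}(h_m) + \Ecart_{s_i,\r,0}(h_m) - \Ecart_{s_i,\r,0}(g),
  \quad
  \deg_{s_i,\r}(tg) = \deg_{s_i,\r}(h_m) + \Ecart_{s_i,\r,1}(g) - \Ecart_{s_i,\r,1}(h_m)
\]
follow directly from the definitions and the fact that $t$ is a single term, the ultrametric inequality gives the valuation claim, and your case split on $\val_{s_i}(tg)$ (which is exhaustive precisely because of the first identity and the hypothesis on $\Ecart_{s_i,\r,0}$) together with the support inclusion $\Supp_{s_i}(h_m - tg) \subseteq \Supp_{s_i}(h_m) \cup \Supp_{s_i}(tg)$ handles the degree claim. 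The proof is correct and complete.
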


\begin{prop}
Algorithm \ref{algo:Mora_local_over_polyhedra}
either terminates in a finite number of steps,
or $\LT_\r(h)$ and all the $\LT_{s_i}(h)$'s converge
to $0.$ \label{prop:terminaison_convergence_WNF_surconvergent}
\end{prop}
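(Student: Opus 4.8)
The plan is to mimic the termination/convergence analysis for Mora's algorithm in the classical overconvergent case (\cite{Mora}) and in the Tate setting (\cite{CVV4}, Prop. analogous to 6.5). The key point is that the weak normal form algorithm constructs a sequence $h = h_0, h_1, h_2, \dots$ where $h_{m+1} = \spoly(h_m, g)$ for a well-chosen $g \in T_{h_m}$. I would first observe that at each step, by the choice of $g$ in line 4 together with the update rule in lines 5--7, the écart conditions of Lemma~\ref{lem:ecarts_vals_degre_s_r} are satisfied: indeed, if $g$ does not already satisfy $\Ecart_{s_j,\r,k}(g) \leq \Ecart_{s_j,\r,k}(h_m)$ for all $j,k$, then $h_m$ is added to $T$, and since $\Ecart_{s_j,\r,0}(h_m) = 0$ (because $\val_{s_j}(\LT_\r(h_m)) \geq \val_{s_j}(h_m)$ with equality forced once $h_m$ is its own leading term... more precisely because $\LT_\r(h_m)$ is a term of $h_m$), the element $h_m$ itself is then an admissible choice with écarts no larger. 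So the sequence of chosen divisors always satisfies the hypotheses of Lemma~\ref{lem:ecarts_vals_degre_s_r}.

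Granting that, Lemma~\ref{lem:ecarts_vals_degre_s_r} gives that for every vertex $s_i$, the sequence $\val_{s_i}(h_m)$ is non-decreasing, and in case of equality at some step the quantity $\deg_{s_i,\r}(h_m)$ is non-increasing. I would then split into two cases. Case 1: the algorithm terminates, i.e. at some step $h_m = 0$ or $T_{h_m} = \emptyset$; then there is nothing more to prove. Case 2: the algorithm runs forever. Then I want to show $\LT_\r(h_m) \to 0$ and $\LT_{s_i}(h_m) \to 0$ for each $i$. The valuations $\val_{s_i}(h_m)$ form monotone sequences in $\tfrac1e \Z$ (bounded denominators coming from the $s_i$ and $\r$), so each either diverges to $+\infty$ or is eventually constant. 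If all of them diverge, then all $\val_{s_i}(h_m) \to +\infty$, and since $\val_\r$ is bounded below by a positive combination of the $\val_{s_i}$'s (as in the proof that $\KX[P] = \bigcap \KX[\s_i]$, using that $\r$ lies in the convex hull of the $s_i + \RR_{<0}^n$), we also get $\val_\r(h_m) \to +\infty$; combined with the fact that $\LT_\r(h_m)$, $\LT_{s_i}(h_m)$ are single terms whose valuations are exactly $\val_\r(h_m)$, $\val_{s_i}(h_m)$ respectively (up to the contribution from $\LT_\r$ versus $\init_{s_i}$), this yields convergence to $0$.

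The main obstacle is the remaining subcase of Case 2: some $\val_{s_i}(h_m)$ is eventually constant. Here I would argue exactly as in the classical Mora termination proof: once $\val_{s_i}$ stabilizes, the secondary quantity $\deg_{s_i,\r}(h_m)$ becomes non-increasing and takes values in a discrete set, so it too eventually stabilizes; once it has stabilized, the monomials in $\Supp_{s_i}(h_m)$ realizing $\deg_{s_i,\r}$ behave like the leading terms in a genuine term order, and a strict-descent argument on $\LM_\r(h_m)$ (or the relevant monomial invariant) shows that this situation cannot persist forever unless the algorithm terminates --- contradicting that we are in the infinite case. Concretely, for the index $i$ corresponding to the vertex $s_1$ used first as a tie-break in line 4, after both $\val_{s_1}$ and $\deg_{s_1,\r}$ stabilize, the reductions behave as in a local term order and Mora's original argument (or the argument in \cite{CVV4} for $\KX[\r]$) applies verbatim to force termination. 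Hence the only genuinely non-terminating possibility is the one where all $\val_{s_i}(h_m) \to +\infty$, which is precisely the convergence conclusion. I would then assemble these pieces into the stated dichotomy.

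One technical wrinkle I would flag: to run the discreteness argument uniformly I need that all the $\val_{s_i}(h_m)$ and $\deg_{s_i,\r}(h_m)$ lie in a common lattice $\tfrac1N\Z$ independent of $m$; this holds because the $s_i$ and $\r$ are rational and every $h_m$ is a $\KX[P]$-combination of $f$ and the $g_j$'s with controlled leading terms (line: $\LT_\r(u_i g_i) \leq \LT_\r(f)$), so the exponents and coefficient valuations appearing stay within a fixed finitely-generated set. With that in hand the monotone-and-discrete sequences genuinely stabilize, and the proof closes.
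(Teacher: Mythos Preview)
There is a genuine gap in your first paragraph. You claim that at each step the selected divisor $g$ satisfies the hypotheses of Lemma~\ref{lem:ecarts_vals_degre_s_r}, reasoning that if it does not, $h_m$ is added to $T$ and ``$h_m$ itself is then an admissible choice with écarts no larger.'' But in the algorithm the divisor $g$ is selected on line~4 \emph{before} $h_m$ is (possibly) added to $T$ on lines~5--7, and the reduction on line~8 is carried out with that same $g$. Adding $h_m$ to $T$ only influences later iterations, not the current one; so at any step where the chosen $g$ has some écart strictly exceeding the corresponding écart of $h_m$, Lemma~\ref{lem:ecarts_vals_degre_s_r} simply does not apply and you have no monotonicity for $\val_{s_i}(h_{m+1})$. (The auxiliary claim $\Ecart_{s_j,\r,0}(h_m)=0$ is also wrong: by definition it equals $\val_{s_j}(\LT_\r(h_m))-\val_{s_j}(h_m)\geq 0$, with equality only when the $\r$-leading term happens to realize the $s_j$-minimum, which need not occur.)

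What you are missing is exactly the Noetherian step that the paper invokes: one must first show that $T$ is enlarged only finitely many times, so that from some index on the selected $g$ \emph{does} satisfy all the écart inequalities and Lemma~\ref{lem:ecarts_vals_degre_s_r} then applies at every subsequent step. The paper accomplishes this by encoding $\LT_\r(h)$ together with all $2t$ écarts into a single extended monomial $\LTE(h)\in K[\X,U_1,\dots,U_t,V_1,\dots,V_t]$ (the constants $d_{i,j}$ clear denominators so that the exponents are nonnegative integers) and applying Dickson's lemma to the monomial ideal generated by the $\LTE$'s of the elements added to $T$; this is precisely the device of~\cite[Prop.~6.5]{CVV4}, now with $2t$ auxiliary variables rather than two. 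Your Case~2 analysis is broadly in the right direction once $T$ has stabilized, but it cannot begin before that stabilization is established.
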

\begin{proof}
The proof follows exactly the same lines as 
that of \cite[Prop.~6.5]{CVV4}.
The only difference is the definition of the extended leading terms:
for $h \in \KX[P]$
we define
\begin{equation}
  \label{eq:22}
  \LTE(h):=\prod_{i=1}^t U_i^{d_{i,1} \Ecart_{s_i,\r,0}(h)}\prod_{i=1}^t V_i^{d_{i,2} \Ecart_{s_i,\r,1}(h)}\LT(h)
\end{equation}
in $K[\X,U_1,\dots,U_t,V_1,\dots,V_t]$, for some suitable $d_{i,j} \in \mathbb{N}$
making everything nonnegative integers.
\end{proof}

Using Algorithm \ref{algo:Mora_local_over_polyhedra}
as a replacement for the reduction procedure in Buchberger's algorithm of \cite[Sec.~5]{CVV4}, one can compute
an $\r$-local GB of an ideal of $\KX[P]$.

\subsection{Conjectures}

The present work is only scratching the surface of the study of ideals in $\KX[P]$.
As an example, we conclude with some questions regarding statements which hold in $K[\X]$ or $\KX[\r]$, but are unclear in $\KX[P]$.

\begin{conj}
If~   $I \subset \KX[P]$ is an ideal,
then the set 
$\Terms_P(I)=\left\lbrace \LT_\r(I), \textrm{ for } \r \in P \right\rbrace$ is finite. \label{conj:terms_fini}
\end{conj}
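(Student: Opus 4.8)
The plan is to follow the structure of the proof of Theorem~\ref{thm:Terms_of_I_fini} for polynomial ideals. That proof rests on three ingredients: that a polynomial ideal admits, in the homogeneous case, a reduced local Gröbner basis comprised of polynomials (\cite[Lem.~7.2]{CVV4}); that, by Lemma~\ref{lem:GBR_for_two_term_orders}, such a reduced basis for one term order remains a reduced basis for any term order with the same leading term ideal; and that homogenization reduces the general case to the homogeneous one, so that a single \emph{finite} universal object (a UAGB) simultaneously governs all completions $I_\r$. For an ideal $I \subset \KX[P]$ one would set up the analogous dictionary: for $\r \in P$, the completion $I_\r \subseteq \KX[\r]$ has a finite Gröbner basis by Theorem~\ref{theo:GB}, we write $\LT_\r(I) := \LT_\r(I_\r)$, and declaring $\r \sim \r'$ whenever $\LT_\r(I) = \LT_{\r'}(I)$, the conjecture asserts exactly that $P/\!\!\sim$ is finite.

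The first concrete step I would take is a $\KX[P]$-analogue of Lemma~\ref{lem:GBR_for_two_term_orders}: if $\r,\r'\in P$ satisfy $\LT_\r(I)=\LT_{\r'}(I)$ and $G\subseteq \KX[P]$ is a reduced $\r$-local Gröbner basis of $I$, then $G$ is also a reduced $\r'$-local Gröbner basis. The argument for Lemma~\ref{lem:GBR_for_two_term_orders} is purely formal --- it uses only that in a reduced basis the leading term is the unique term lying in the leading term ideal --- so it should transfer verbatim once one knows that reduced local Gröbner bases exist inside $\KX[P]$. Establishing that \emph{existence} is the crux: in the polynomial setting it is obtained via homogenization, a tool not available here, so one would need either a substitute for homogenization adapted to the polyhedron $P$, or a direct interreduction argument built on Mora's weak normal form (Algorithm~\ref{algo:Mora_local_over_polyhedra}) that can be shown never to leave $\KX[P]$.

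Granting the previous step, finiteness of $\Terms_P(I)$ would follow from a finiteness statement for the \emph{candidate} leading term ideals, and here the most promising route seems to be a reduction to finitely many polynomial ideals. Writing $I = \langle f_1,\dots,f_m\rangle$ with $f_j \in \KX[P]$, one would hope that for each $\r\in P$ there is a polynomial ideal $J$, drawn from a \emph{finite} list depending only on $f_1,\dots,f_m$ (obtained by suitably truncating the $f_j$), with $\LT_\r(J_\r) = \LT_\r(I_\r)$; then $\Terms_P(I)$ embeds into $\bigcup_J \Terms(J)$, each $\Terms(J)$ being finite by Theorem~\ref{thm:Terms_of_I_fini}. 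Alternatively, in the Gröbner-fan language of Section~\ref{sec:grobner-fan}, one would show that the classes $\r\sim\r'$ are relatively open cells of a polyhedral fan refining the normal fan of the Newton polytopes of a fixed finite reduced basis; the delicate point is that an element of $\KX[P]$ has a Newton polytope with infinitely many vertices, so one must argue that only finitely many of those vertices can be the leading exponent for some $\r \in P$, using that the recession cone of $P$ is $\RR_{<0}^n$.

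The step I expect to be the main obstacle is precisely this: the lack of a single finite universal object for $\KX[P]$-ideals playing the role a UAGB plays for polynomial ideals. The proof of Theorem~\ref{thm:Terms_of_I_fini} uses such an object in an essential way, and for $\KX[P]$ its very existence is open (this is the subject of the subsequent questions); without a uniform bound on how far the generators must be truncated to capture the leading behaviour at $\r$, it remains conceivable that $\LT_\r(I)$ genuinely varies infinitely often as $\r$ ranges over $P$, which is why we state the result only as a conjecture.
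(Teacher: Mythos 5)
The statement is labeled a conjecture, and the paper does not prove it --- it only (a) proves the special case of principal ideals (Proposition~\ref{lem:finitess_LTr}) and (b) explains in a remark why the proof technique from the polynomial case fails. Your discussion is therefore not a proof and was never expected to be one; what you correctly do is identify the obstacle, and you identify essentially the same one the paper points to: for $\r \in P$ a reduced local Gröbner basis of $I_\r$ exists but in general lies outside $\KX[P]$, so it need not converge at other points of $P$, which breaks the transfer argument of Lemma~\ref{lem:GBR_for_two_term_orders}; and the homogenization device that rescues the polynomial case does not carry over to series. So your diagnosis of why this is open matches the paper's.

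What you do not pick up on is the one concrete thing the paper actually proves: the conjecture holds for principal ideals $I = \langle f \rangle$. The argument there is quite different from anything you sketch. One encodes each term $c_\alpha \X^\alpha$ of $f$ by the vector $\bigl(\val_{\s_1}(c_\alpha\X^\alpha),\dots,\val_{\s_l}(c_\alpha\X^\alpha),\alpha\bigr)$ in $\frac{1}{D}\N^l \times \N^n$, where the $\s_i$ are the vertices of $\PP$; by Dickson's lemma this set has only finitely many minimal elements for the product order. A convexity computation (writing $\r \in P$ as dominated by a convex combination of the $\s_i$) then shows that $\LT_\r(f)$ must always be one of those finitely many minimal terms. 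This is exactly the kind of ``Newton polytope with infinitely many vertices, but only finitely many relevant ones'' argument you gesture toward at the end of your second route, and it would have been worth pushing through at least in the principal case. Whether it extends beyond principal ideals --- where one would need to control $\LT_\r$ of arbitrary combinations, not just of a single generator --- is precisely where the conjecture remains open, and your suggested reduction to finitely many truncated polynomial ideals is one plausible but currently unsubstantiated route.
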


We may remark that this conjecture 
is true for principal ideals.

\begin{prop}
If $f \in \KX[P],$
$\Terms_P(\left\langle f \right\rangle)$ is finite.
\label{lem:finitess_LTr}
\end{prop}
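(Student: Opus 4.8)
The plan is to reduce the statement to a finiteness claim about the Newton polytope of $f$, exactly parallel to the polynomial case but now with the convergence region $P$ playing the role of $\QQ^{n}$. Write $f = \sum_{\alpha} c_{\alpha}\X^{\alpha} \in \KX[P]$. For each $\r \in P$, the leading term $\LT_{\r}(f)$ is the term $c_{\alpha}\X^{\alpha}$ minimizing $\val(c_{\alpha}) - \r\cdot\alpha$ (with the tie-break order), so $\LT_{\r}(\langle f\rangle) = \langle \LM_{\r}(f) \rangle$ is determined by which vertex of the (lower) Newton polytope $\Ns(f)$ is selected by the weight vector $(1,\r)$. First I would invoke the convergence property: since $f \in \KX[P]$, for every $\r \in P$ we have $\val(c_{\alpha}) - \r\cdot\alpha \to +\infty$ as $|\alpha| \to \infty$, which guarantees that $\LT_{\r}(f)$ is well-defined and is attained by one of finitely many ``candidate'' terms. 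The key point is that although $f$ has infinitely many terms, only finitely many of them can ever be a leading term for some $\r \in P$: these are precisely the terms $c_{\alpha}\X^{\alpha}$ such that $(\val(c_{\alpha}),\alpha)$ is a vertex of $\Ns(f)$ whose normal cone meets $\{1\}\times P$.

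The main step is therefore to show that $\Ns(f)$, despite being the convex hull of infinitely many points (plus the ray $\RR_{+}(1,0,\dots,0)$), has only finitely many vertices that are ``visible'' from the directions $(1,\r)$ with $\r \in P$. Here I would use that $P = \PP \cap \QQ^{n}$ where $\PP$ is the convex hull of finitely many shifted orthants $s_{i}+\RR_{<0}^{n}$, so that $\{1\}\times P$ is contained in a bounded-below region whose recession directions are controlled. Concretely: for a fixed $\r_{0}\in P$, the convergence condition says $\val(c_{\alpha}) - \r_{0}\cdot\alpha \geq M$ for all but finitely many $\alpha$, for some bound $M$; combined with $\val(c_{\alpha}) - \r_{0}\cdot\alpha \geq \val(c_{\beta}) - \r_{0}\cdot\beta$ failing only for finitely many $\beta$ when we seek the minimum, this forces the set of possible leading monomials over all of $P$ to be finite. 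I would make this precise by arguing that the lower faces of $\Ns(f)$ relevant to weights in $\{1\}\times P$ form a finite polyhedral complex: any weight $(1,\r)$ with $\r \in P$ achieves its minimum over $\Ns(f)$ at a vertex, and by the estimate above, the finitely many terms with $(\val(c_{\alpha}),\alpha)$ in a fixed bounded region already contain all such minimizing vertices (the infinitely many remaining terms have $\val_{\r}$-value bounded away from the minimum, uniformly over $P$ after possibly shrinking to a compact-ish subregion and using the finite vertex description of $\PP$).

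Once finiteness of the candidate leading monomials is established, the conclusion is immediate: $\Terms_{P}(\langle f\rangle) = \{\langle \LM_{\r}(f)\rangle : \r \in P\}$ is a subset of the finite set $\{\langle \X^{\alpha}\rangle : c_{\alpha}\X^{\alpha}\text{ a candidate leading term}\}$, hence finite. I expect the main obstacle to be the uniformity in $\r$ across all of $P$: for a single $\r$ the finiteness of candidate leading terms is the standard Tate-algebra fact, but one must check that the bound controlling ``all but finitely many terms have large $\val_{\r}$'' can be chosen uniformly as $\r$ ranges over the unbounded region $P$. The decomposition $\KX[P] = \bigcap_{i}\KX[\s_{i}]$ from the preceding proposition is the right tool here: it lets one replace ``$\r \in P$'' by a convex combination of the finitely many vertices $\s_{i}$, reducing the uniform estimate to the finitely many conditions $f \in \KX[\s_{i}]$, each of which gives a genuine uniform bound on its own orthant, and convexity of $\val_{\bullet}(c_{\alpha}\X^{\alpha})$ in the radius interpolates these to all of $P$.
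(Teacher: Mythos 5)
Your high-level strategy is the right one and mirrors the paper's outline: reduce the claim to showing that only finitely many terms of $f$ can ever be $\LT_\r(f)$ as $\r$ ranges over $P$, using the affine dependence of $\val_\r(c_\alpha\X^\alpha)$ on $\r$ together with the finite description of $\PP$ as the convex hull of the $s_i+\RR_{<0}^n$. The problem is that the central finiteness claim is asserted rather than proved, and the sketch you offer to fill that gap does not succeed. Since $\Supp(f)$ is infinite, $\Ns(f)$ is the convex hull of infinitely many rays and can have infinitely many vertices, so ``only finitely many vertices have normal cone meeting $\{1\}\times P$'' is exactly the nontrivial content of the proposition, not a formality. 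Your two proposed fixes fail: confining attention to ``a fixed bounded region'' does not work because the relevant extremal terms need not lie in any bounded region (only their number is finite), and ``shrinking to a compact-ish subregion'' is illegitimate because $P$ is unbounded and the statement quantifies over all of it. The convergence bound at a single $\r_0$ gives no uniformity in $\r$, which is precisely the obstacle you flag yourself.

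The missing ingredient is Dickson's lemma. To each term $c_\alpha\X^\alpha\in\Supp(f)$ attach the vector $\left(\val_{s_1}(c_\alpha\X^\alpha),\dots,\val_{s_l}(c_\alpha\X^\alpha),\alpha_1,\dots,\alpha_n\right)$; after scaling $f$ these lie in $\tfrac{1}{D}\N^l\times\N^n$ for some $D$, which by Dickson's lemma has finitely many minimal elements $\Sigma$ for the product order. The convexity estimate you gesture at — write $\r$ as coordinatewise $\leq \sum_i\lambda_i s_i$ with $\lambda_i\geq 0$, $\sum_i\lambda_i=1$, and expand $\val_\r$ as the corresponding convex combination of the $\val_{s_i}$ plus a nonnegative correction — then shows that if $u$ lies below $t=\LT_\r(f)$ in the product order, then $\val_\r(u)\leq\val_\r(t)$; minimality of $\val_\r(t)$ forces equality throughout and hence $u=t$. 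Thus $\LT_\r(f)\in\Sigma$ for every $\r\in P$, which is exactly the finiteness you need. Your proposal has the right estimates in hand (and the $\KX[P]=\bigcap_i\KX[s_i]$ decomposition is indeed the right tool) but stops short of the Noetherianity step that converts them into finiteness; that step is essential and cannot be replaced by a compactness or boundedness argument.
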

\begin{proof}
Let $s_1,\dots,s_l \in \QQ^n$ be the extremal points of $\PP.$
Then $\PP$ is the convex hull of the $s_i + \mathbb{R}_{<0}^n$
(and $P=\PP \cap \QQ^n)$).
Let $S(f)$ be the set of all $(\val_{\s_1}(c_\alpha \X^\alpha),\dots,\val_{\s_l}(c_\alpha \X^\alpha),\alpha_1,\dots,\alpha_n)$ for $c_\alpha \X^\alpha \in \Supp(f)$.
Up to scalar multiplication of $f$, we can assume that $S(f) \subset \left(\frac{1}{D} \N^l \right) \times \N^n$ 
for some $D \in \N.$
Hence, $S(f)$ only has a finite number of minimal elements
for the product order on $\left(\frac{1}{D} \N^l \right) \times \N^n.$
Let us denote them by $\Sigma.$

Let $\r \in P.$
We prove that if $t=\LT_\r(f)$,
then $t \in \Sigma$.
Let us write $t=c_\alpha \X^\alpha$, and let
$u=c_\beta \X^\beta $ be a term of $f$, belonging to $\Sigma$
and smaller than
$t$ for the product order:
$\val_{\s_i}(u) \leq \val_{\s_i}(t)$ for all $i \leq l$, and
$\beta_i \leq \alpha_i$ for all $i \leq n$.

Since $\r \in P \subset \PP$ and $\PP$ is the convex hull 
of the $s_i + \mathbb{R}_{<0}^n$, 
there are some $\lambda_i \geq 0$ such
that $\sum_{i=1}^l \lambda_i =1$
and $\sum_{i=1}^l \lambda_i \s_i \geq \r$.
Then:
\begin{align*}
\val_\r (u) &= \val (c_\beta)- \r \cdot \beta, \\
			&= \sum_{i=1}^l \lambda_i (\val (c_\beta) - \s_i\cdot \beta)+ \left( \sum_{i=1}^l \lambda_i \s_i -\r \right) \cdot \beta, \\
			&= \sum_{i=1}^l \lambda_i \val_{\s_i} (u)+\left( \sum_{i=1}^l \lambda_i \s_i -\r \right) \cdot \beta. 
\end{align*} 
Since for all $i \leq l,$ $\val_{\s_i}(u) \leq \val_{\s_i}(t)$
and for all $i \leq n$, $\beta_i \leq \alpha_i$
and $\sum_{i=1}^l \lambda_i \s_i \geq \r$,
then 
\[
\val_\r (u) \leq \sum_{i=1}^l \lambda_i \val_{\s_i} (t)+\left( \sum_{i=1}^l \lambda_i \s_i -\r \right) \cdot \alpha
			\leq \val_\r (t).\]

Consequently, $\val_\r (u) \leq \val_\r (t)$
and $\val_\r (t)$ is the minimum of the $\val_\r(v)$
for $v$ a term of $f.$
Hence $\val_\r (u) = \val_\r (t),$
and thus all the above inequalities are equalities.
It then means that $u=t$ and 
$\LT_\r(f) \in \Sigma$, as claimed.
Since $\Sigma$ is finite,
and $\Terms_P(\left\langle f \right\rangle)=\left\langle \left\lbrace \LT_\r (f), \: \r \in P \right\rbrace \right\rangle$ the proposition is proved.
\end{proof}

Our end goal is the following stronger conjecture.
\begin{conj}
If $I \subset \KX[P]$ is an ideal,
then there is a UAGB of $I$ \textit{i.e.}
a finite set $G \subset I$ such
that $G$ is an $\r$-local GB of $I$ for any $\r \in P.$ \label{conj:polyhedral_UAGB}
\end{conj}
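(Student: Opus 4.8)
The natural approach is to mimic the proof of Theorem~\ref{theo:proof_of_UAGB_algorithm}, the two facts that would have to be supplied in the polyhedral setting being (a) an analogue of Theorem~\ref{thm:Terms_of_I_fini}, \emph{i.e.} Conjecture~\ref{conj:terms_fini}, and (b) the existence, for each $\r\in P$, of a \emph{finite reduced} $\r$-local Gröbner basis of $I$ inside $\KX[P]$. Granting both, write $\Terms_P(I)=\{L_1,\dots,L_t\}$; for each $j$ choose $\r_j\in P$ with $\LT_{\r_j}(I)=L_j$ together with a finite reduced $\r_j$-local Gröbner basis $G_j\subset\KX[P]$ of $I$, and set $G=\bigcup_{j=1}^t G_j\subset I$. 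For an arbitrary $\r\in P$ one has $\LT_\r(I)=L_j$ for some $j$; by density of $I$ in $I_\r$ (as in \cite{CVV4}) also $\LT_\r(I_\r)=L_j$, and because $G_j$ is reduced for $\r_j$, for every $g\in G_j$ the only term of $g$ lying in $L_j$ is $\LT_{\r_j}(g)$, forcing $\LM_\r(g)=\LM_{\r_j}(g)$. Hence $\langle\LT_\r(g):g\in G_j\rangle=L_j=\LT_\r(I_\r)$, so $G_j$, and a fortiori $G$, is an $\r$-local Gröbner basis of $I$. As $G$ is finite and contained in $I$, it is a UAGB. This is exactly the mechanism of Lemma~\ref{lem:GBR_for_two_term_orders}, transported to $\KX[P]$.

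Everything therefore reduces to (a) and (b). For the finiteness of $\Terms_P(I)$, I would take as starting points the decomposition $\KX[P]=\bigcap_{i=1}^l\KX[\s_i]$ and the density of $K[\X]\subset\KX[P]$ in every $\KX[\r]$, $\r\in P$, which lets one replace $\LT_\r(I)$ by a leading term ideal of a Tate completion. The concrete goal is to promote Proposition~\ref{lem:finitess_LTr} from principal to finitely generated ideals: run Buchberger's algorithm with the weak normal form of Algorithm~\ref{algo:Mora_local_over_polyhedra}, and keep track, for every leading term $t$ that ever occurs, of the vector $\bigl(\val_{\s_1}(t),\dots,\val_{\s_l}(t),\alpha_1,\dots,\alpha_n\bigr)$; Dickson's lemma on the product order on $\bigl(\tfrac1D\N\bigr)^l\times\N^n$ bounds the possible leading monomials, \emph{provided} one can show that only finitely many such vectors arise when $\r$ runs over all of $P$. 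That uniformity is the delicate point: the leading term of an S-polynomial, and whether a given reduction step fires, both depend on $\r$, so a priori the algorithm follows different paths, and produces different intermediate polynomials, for different $\r\in P$.

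For ingredient (b) one must be cautious, since already in $\KX[\r]$ reduced Gröbner bases need not exist for inhomogeneous ideals; in full generality the conjecture may require a homogeneity hypothesis on $I$ (the analogue of \cite[Lem.~7.2]{CVV4}), or a weaker substitute. A cleaner route, in the spirit of the analytic Gröbner fan of Section~\ref{sec:grobner-fan}, would be to partition $P$ into finitely many rational polyhedra on which $\LT_\r(I)$ is constant, and then, cone by cone, produce a finite $\r$-local basis (via Algorithm~\ref{algo:Mora_local_over_polyhedra} inside Buchberger's algorithm) whose leading monomials stay equal to those of $I$ throughout that cone; gluing these finitely many bases again yields a finite UAGB. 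The existence of such a finite polyhedral partition of $P$ is of course equivalent to (a).

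The step I expect to be the true obstacle is the finiteness of $\Terms_P(I)$. In the polynomial case it is obtained by passing to the homogenization $I^*$, a homogeneous ideal with well-behaved reduced Gröbner bases; no comparable homogenization is available inside $\KX[P]$, whose elements are genuine power series, and $P$ — being more complicated than a single translate of an orthant — lacks the monotonicity that underpinned the overconvergent arguments of \cite{CVV4}. A proof will most likely have to combine the Dickson-type bound behind Proposition~\ref{lem:finitess_LTr} with a genuinely new, global control of how the initial forms $\init_\r(f)$ vary as $\r$ ranges over $P$.
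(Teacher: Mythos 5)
This statement is labelled as a conjecture in the paper, and the paper offers no proof --- only a closing remark identifying why the polynomial-case argument does not carry over. There is therefore no proof to compare against, and your ``proof plan'' is appropriately not a proof. What you do correctly is reduce the conjecture to two open ingredients: (a) the finiteness of $\Terms_P(I)$, which is exactly Conjecture~\ref{conj:terms_fini}, and (b) the existence, for each $\r\in P$, of a finite reduced $\r$-local Gröbner basis of $I$ lying entirely in $\KX[P]$. Granting both, the transfer mechanism you sketch (essentially Lemma~\ref{lem:GBR_for_two_term_orders} applied leading-term-ideal by leading-term-ideal) does produce a finite UAGB, just as in Section~\ref{sec:univ-analyt-grobn}. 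Your diagnosis of where the difficulty lies also matches the paper's own remark: the reduced Gröbner basis of $I_\r$ generally does not belong to $I$ and so may fail to converge elsewhere on $P$, and the homogenization that overcame this for polynomial ideals does not translate to the genuine power-series ring $\KX[P]$. The only difference in emphasis is that you single out (a) as the ``true obstacle'', whereas the paper formulates the obstruction primarily through (b); in practice the two are entangled, since the proof of (a) for polynomial ideals in \cite{CVV4} relied precisely on the availability of reduced homogeneous Gröbner bases that remain polynomial. Your suggested alternative of partitioning $P$ into finitely many rational polyhedra on which $\LT_\r(I)$ is constant is, as you yourself note, equivalent to (a) and so does not get around the gap.
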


We expect zero-dimensional ideals of $\KX[P]$
to be polynomial ideals and the conjecture is clearly true for all polynomial ideals.

\begin{rmk}
The main issue encountered in attempting to prove Conjectures
 \ref{conj:terms_fini} and \ref{conj:polyhedral_UAGB}
 is the fact that for some $\r \in P$,
 there exists a reduced Gröbner basis
 of $I_\r$, but it usually does not
 belong to $I$ and thus is not
 converging for other points of $P.$
 Hence we can not rely on Lemma~\ref{lem:GBR_for_two_term_orders}.
 This issue was overcome for
 polynomials using homogenization
 (Section~\ref{subsec:UAGB_computation} and \cite[Sec.~7]{CVV4}),
 but this technique does not directly
 translate to the case of series.
\end{rmk}

\bibliographystyle{plain}

\begin{thebibliography}{99}
  \renewcommand{\itemsep}{0em}

\bibitem{ABGJ}
Allamigeon X., Benchimol P. N.,  Gaubert S.,  Joswig M.
\newblock{Log-Barrier Interior Point Methods Are Not Strongly Polynomial}
\newblock{
SIAM Journal on Applied Algebra and Geometry, 2018, vol 2}



\bibitem{Computing_Tropical_Varieties}
Bogart T., Jensen A. N.,  Speyer D.,  Sturmfels B., Thomas R. R.
\newblock{Computing Tropical Varieties}
\newblock{
J. Symb. Comput. 42 (2007), no. 1-2}

\bibitem{Bu65}
  Buchberger B.,
\newblock{{Ein Algorithmus zum Auffinden der Basiselemente des Restklassenringes nach einem nulldimensionalen Polynomideal (An Algorithm for Finding the Basis Elements in the Residue Class Ring Modulo a Zero Dimensional Polynomial Ideal)}},
\newblock English translation in J. of Symbolic Computation, Special Issue on Logic, Mathematics, and Computer Science: Interactions. Vol. 41, Number 3-4, Pages 475--511, 2006


  \bibitem{CVV}
  Caruso X., Vaccon T., Verron T.,
  \newblock {Gröbner bases over Tate algebras},
  \newblock {in Proceedings: ISSAC 2019, Beijing, China.}


  \bibitem{CVV2}
  Caruso X., Vaccon T., Verron T.,
  \newblock {Signature-based algorithms for Gröbner bases over Tate algebras},
  \newblock {in Proceedings: ISSAC 2020, Kalamata, Greece.}

  \bibitem{CVV3}
  Caruso X., Vaccon T., Verron T.,
  \newblock {On FGLM Algorithms With Tate Algebras},
  \newblock {in Proceedings: ISSAC 2021, Saint-Petersburg, Russia.} 

  \bibitem{CVV4}
  Caruso X., Vaccon T., Verron T.,
  \newblock {On Polynomial Ideals and Overconvergence in Tate Algebras},
  \newblock {in Proceedings: ISSAC 2022, Lille, France.} 

  \bibitem{CM}
  Chan A., Maclagan D.,
  \newblock  {{Gröbner bases over fields with valuations}},
  \newblock Math. Comp. 88 (2019), 467-483.

  \bibitem{GrobnerWalk}
   Collart S., Kalkbrenner M., Mall D. ,
  \newblock  {{Converting bases with the Gröbner walk.}},
  \newblock J.
Symbolic Comp. 6 (1997), 209–217.

\bibitem{Cox15}
Cox D., Little J.,  O'Shea D.,
\newblock {\em Ideals, varieties, and algorithms}.
\newblock Undergraduate Texts in Mathematics. Springer, Cham, fourth edition,
2015.


\bibitem{Cox05}
Cox D., Little J.,  O'Shea D.,
\newblock {\em Using Algebraic Geometry}.
\newblock Graduate Texts in Mathematics, Volume 185, Springer Science \& Business Media, 2005.

\bibitem{GS93}
Gritzmann P., Sturmfels B.
\newblock {\em Minkowski Addition of Polytopes: Computational Complexity and Application to Gröbner Bases}.
\newblock SIAM J. Disc. Math., 1993.

%



\bibitem{Fukuda-Jensen}
  Fukuda K.,  Jensen A. N.,
   Thomas R. R. ,
  \newblock{Computing Gröbner fans},
  \newblock{Math. Comp. 76 (2007)} 
  

  \bibitem{SingularIntro}
  Greuel G.-M., Pfister G.,
  \newblock  {{A Singular Introduction to Commutative Algebra}},
  \newblock Springer-Verlag Berlin Heidelberg 2008.

\bibitem{gfan}
     Jensen, A.N.,
     \newblock{{G}fan, a software system for {G}r{\"o}bner fans and tropical varieties},
     \newblock{Available at \url{http://home.imf.au.dk/jensen/software/gfan/gfan.html}}


  \bibitem{Mikh2005}
  Mikhalkin G.,
  \newblock {Enumerative Tropical Algebraic Geometry In $\mathbb{R}^2$},
  \newblock {Journal of the
American Aathematical Society, 
Volume 18, Number 2, 2005}  

  \bibitem{MR2020}
  Markwig T., Ren Y.,
  \newblock {Computing Tropical Varieties Over Fields with Valuation},
  \newblock {Found Comput Math 20, 783–800 (2020).} 


  \bibitem{Mora}
  Mora T.,
  \newblock {An Algorithm to Compute the Equations of Tangent Cones},
  \newblock {in: Proceedings EUROCAM 82, Lecture Notes in Comput. Sci. (1982).} 

  \bibitem{Mora-Robbiano}
  Mora T., Robbiano L.,
  \newblock {The Gröbner fan of an ideal, },
  \newblock {J. Symbolic Comput. 6 (1988), no. 2-3.} 

  \bibitem{Ost75}
  Ostrowski A.M.,
  \newblock {On Multiplication and Factorization of Polynomials, I.
  Lexicographical Orderings and Extreme Aggregates of Terms, },
  \newblock {Aequationes Mathematicae, Volume 13, 1975.} 

 \bibitem{Rabinoff}
 Rabinoff J.,
\newblock {Tropical analytic geometry, Newton polygons, and tropical intersections},
\newblock {Advances in Mathematics 229(6), 2010.}

 \bibitem{Ren2015}
Ren Y.,
\newblock {Tropical geometry in Singular},
\newblock {Doctoral dissertation, Technische Universität Kaiserslautern, 2015.}


  \bibitem{Sage}
  \newblock {{S}ageMath, the {S}age {M}athematics {S}oftware {S}ystem ({V}ersion 9.2)}, The Sage Development Team, 2020, \url{http://www.sagemath.orug}



\bibitem{Sturmfels}
Sturmfels B.,
\newblock {Gröbner Bases and Convex Polytopes,}
\newblock {American Mathematical Society, Univ. Lectures Series, No 8, Providence, Rhode Island, 1996.} 




  \bibitem{Tate}
  Tate J.,
  \newblock  {{Rigid analytic spaces}},
  \newblock  {Inventiones Mathematicae} \textbf{12}, {1971}, {257--289}


\bibitem{Vaccon:2015}
Vaccon T.,
\newblock {Matrix-F5 Algorithms and Tropical Gr{\"{o}}bner Bases Computation},
\newblock {in Proceedings: ISSAC 2015, Bath, UK. Extended version
in the J. of Symbolic Computation, 2017}.

\bibitem{Vaccon:2017}
Vaccon T., Yokoyama K.,
\newblock {A Tropical F5 algorithm},
\newblock {in Proceedings: ISSAC 2017, Kaiserslautern, Germany}.

\bibitem{Vaccon:2018}
Vaccon T., Verron T., Yokoyama K.,
\newblock {On Affine Tropical F5 algorithm},
\newblock {in Proceedings: ISSAC 2018, New York, USA. Extended version
in the J. of Symbolic Computation, 2021}.

\end{thebibliography}

\end{document}